\documentclass{article}
\usepackage[utf8]{inputenc}
\usepackage{authblk}
\usepackage{setspace}
\usepackage[margin=1.25in]{geometry}
\usepackage{graphicx}
\usepackage{subcaption}
\usepackage{amsmath}
\usepackage{amssymb}
\usepackage{amsthm}
\usepackage{bm}
\usepackage{hyperref}
\hypersetup{colorlinks=true,linkcolor=blue,citecolor=blue,urlcolor=blue}
\usepackage{xcolor}
\usepackage{booktabs}
\usepackage{float}
\usepackage{enumitem}
\setlist{itemsep=3pt,parsep=1pt}
\usepackage{tabularx}
\usepackage{array}
\usepackage{lineno}

\usepackage{microtype}
\usepackage{dblfloatfix}
\usepackage[section]{placeins}

\usepackage[style=phys,
citestyle=numeric-comp,
sorting=none,
doi=true,
url=false,
isbn=false,
natbib=true]{biblatex}
\newcolumntype{Y}{>{\raggedright\arraybackslash}X}

\newtheorem{theorem}{Theorem}
\newtheorem{proposition}{Proposition}
\newtheorem{definition}{Definition}
\newtheorem{corollary}{Corollary}
\newtheorem{remark}{Remark}
\newtheorem{lemma}{Lemma}

\title{A Minimal Four-Thruster System for Comet-Based Interstellar Navigation}

\author[1]{Bo Pieter Johannes Andr\'ee\thanks{The findings, interpretations, and conclusions expressed herein are entirely those of the author and do not necessarily represent the views of the International Bank for Reconstruction and Development/World Bank, its Board of Executive Directors, or the governments they represent. No World Bank resources were used to conduct this research. This research received no specific funding. Domain-specialist input is gratefully welcomed. Contact: bandree(at)worldbank.org}}

\affil[1]{Data Group and Multilateral \& UN, World Bank, Geneva, Switzerland.}
\date{March 2026}

\begin{document}

\maketitle

\begin{abstract}
Interstellar comets arrive with key ingredients for deep-space platforms already in place: volatile inventories convertible to propellant, natural rotation providing continuous attitude variation, and hyperbolic trajectories that carry them through the inner Solar System and back out to interstellar space. Rather than constructing spacecraft from scratch, we ask what \emph{minimal modification} is required to steer such a body along a controlled trajectory. The answer is surprisingly modest. By relaxing full six-degree-of-freedom control to forward-cone steering---sufficient for practical navigation---we show that \emph{four thrusters suffice}: one primary jet and three secondary jets at $120^\circ$ intervals. The secondary jets synthesize continuous in-plane steering, while the primary jet provides low-bandwidth attitude shaping: as the body rotates, the primary-jet torque direction sweeps predictably over a cycle, enabling out-of-plane steering via phase-scheduled firing. We formalize reachability under bounded-curvature constraints, characterize the rotation-mediated steering envelope, discuss enabling requirements including non-solar power at large heliocentric distances, and identify operational regimes and observable signatures implied by active trajectory control. The setting of a nutating axis is briefly considered and conjectured to preserve core results. The findings contribute to the broader effort of understanding the dynamics and control of small-body missions and offer a reference architecture relevant to long-horizon deep-space exploration and to potential planetary-defense concepts.
\end{abstract}

\medskip
\noindent
\textbf{Keywords:} interstellar comets, small-body control, 3I/ATLAS, deep-space navigation, minimal-control architecture, thruster configuration, forward-cone steering, rotation-mediated control, bounded-curvature reachability, attitude shaping, planetary defense
\pagebreak


\section{Introduction}
\label{sec:intro}

The detection of interstellar objects (ISOs) traversing the Solar System---most notably 1I/\hspace{0pt}'Oumuamua in 2017~\cite{meech2017oumuamua}, 2I/Borisov in 2019~\cite{jewitt2019comet}, and 3I/ATLAS in 2025~\cite{loeb2025intercepting}---has created unique opportunities to study material from other stellar systems and to establish an observational foundation for systematic characterization~\cite{eldadi2026ioss}. These discoveries further suggest that such bodies pass through the inner Solar System with sufficient regularity~\cite{delaFuenteMarcos2024ejected} to support mission planning, spurring proposals for intercept and rendezvous missions~\cite{hein2022interstellar,seligman2018oumuamua,hibberd2020project}. The growing catalog of ISOs and near-Earth small bodies alike has motivated both scientific exploration and planetary defense studies, with missions such as DART demonstrating the feasibility of altering asteroid trajectories through kinetic impact~\cite{rivkin2021dart}, and survey architectures being developed for monitoring potentially hazardous objects~\cite{zhou2022crown}.

Beyond their value as observational targets, ISOs also motivate mission architectures that exploit their kinematics for deep-space exploration. A more ambitious prospect is to treat suitable ISOs as \textit{platforms} for long-horizon navigation, not just flyby targets. Cometary bodies possess several characteristics advantageous for spacecraft design: natural rotation providing gyroscopic stability, volatile ices (H$_2$O, CO, CH$_4$, NH$_3$) constituting potential propellant~\cite{mumma2011composition,altwegg2019rosina}, organic compounds amenable to chemical processing~\cite{biver2019rosetta}, and trajectories that extend naturally to interstellar distances~\cite{jewitt2022interstellar}. The orbital dynamics and control challenges involved in operating near such bodies share common features with the broader class of small-body proximity operations that have been studied extensively in the context of asteroid rendezvous and exploration missions~\cite{lee2014hovering,qiao2023gravitational}.

This concept raises natural questions about the minimal propulsion architecture required to navigate such a platform, the operational modes it would employ, and how active control would alter observable signatures of the host body. We address these questions by shifting from instantaneous 6-DOF controllability to \emph{reachability} over practical mission timescales, where the dominant requirement is not active maneuvering but long-horizon \emph{trajectory tracking} under persistent disturbances~\cite{andree2026stability}---favoring dual architectures that enable slow retargeting and high-frequency stabilization. While the design primarily targets navigation, the concepts could also inform minimal-actuation architectures for planetary defense.

\subsection{Motivation: Comets as Interstellar Vehicles}

Traditional interstellar mission architectures face fundamental challenges: the rocket equation imposes exponential mass penalties for the enormous $\Delta V$ required, fuel must be transported from Earth at great expense, and achieving interstellar velocities from rest requires either decades of low-thrust acceleration or impractical energy expenditures. Alternative propulsion concepts, including solar sailing~\cite{macdonald2011solar}, have been explored for deep-space missions, but each faces distinct scalability constraints for interstellar distances.

Interstellar comets offer an alternative paradigm. These bodies arrive in the Solar System with hyperbolic excess velocities of 20--80~km/s~\cite{jewitt2022interstellar}---velocities that would require $>10^{15}$~J to achieve from Earth orbit for a modest payload. By \textit{attaching} a minimal propulsion system to an existing comet, one can leverage:

\begin{enumerate}
    \item \textbf{Pre-existing velocity}: The comet provides pre-existing initial acceleration equivalent to extreme gravity assists. Notably, gravity assist maneuvers are already standard practice in interplanetary mission design~\cite{wertz1999spacecraft,bolton2017jupiter}.
    
    \item \textbf{In-situ volatile inventory.} Cometary nuclei can contain substantial water and carbon-bearing volatiles that, in principle, could be processed into propellants~\cite{mumma2011composition}. While the extractable fraction and processing rate depend on energy availability, materials handling, and outgassing management, the accessible volatile reservoir on $\mathcal{O}(100\text{ m to }\text{several km})$ bodies is plausibly orders of magnitude larger than what is practical to transport from Earth for comparable deep-space maneuvering. In addition, a mission could exploit \emph{native outgassing} itself: modest engineering
    interventions (e.g., localized heating, venting, or surface conditioning) could bias naturally released gas into \emph{directed outgassing}. This would provide lower thrust authority than dedicated engines, but could reduce mechanical complexity and serve as a complementary trim or bias-thrust mode. This strengthens the case for minimal, robust control architectures optimized for longevity and reliability rather than aggressive $\Delta V$ performance.
    
    \item \textbf{Structural mass}: Large comets $\mathcal{O}(100\text{ m to }\text{several km})$ could support substantial payloads, including secondary spacecraft or communication arrays, without the exponential mass penalties of the rocket equation.
    
\item \textbf{Natural dynamics}: Cometary rotation periods typically span 5--70~hours, with the majority clustering between 5--20~hours~\cite{Kokotanekova2017Rotation}, consistent with observed values for interstellar comets~\cite{delaFuenteMarcos2025gtc,scarmato2026rotation}, providing gyroscopic stability that can be \textit{exploited} rather than counteracted for attitude control.

\end{enumerate}

This approach inverts the conventional spacecraft design problem. Instead of engineering a vehicle with full instantaneous maneuverability, we ask: \textit{what is the minimum thruster configuration required for three-dimensional navigation when forward motion and rotation are provided naturally?}

We establish that the answer is four thrusters: three small units arranged at $120^\circ$ intervals that can synthesize arbitrary \emph{transverse} thrust directions within a plane, plus one larger unit used primarily to manage the long-horizon \emph{orientation of that steering plane} through slow attitude adjustment. Natural rotation provides phase diversity and relaxes the requirement for instantaneous 6-DOF authority. The geometric mechanics underlying this approach---in particular, the exploitation of rotational symmetry to reduce control complexity---connects to a broader literature on attitude dynamics and control in strongly perturbed environments~\cite{sanyal2012attitude,zhou2024orbit}.

\paragraph{Scope of minimality.}
Throughout this paper, minimality is understood in a control-theoretic sense that encompasses
both hardware count and the dimensionality of the resulting control problem. Our starting
point is a body rotating steadily about a principal axis, which reduces the full attitude
state in~$SO(3)$ to a single predictable phase variable. This assumption is what enables the separation of fast in-plane steering from slow
out-of-plane reorientation that makes the four-thruster architecture operationally tractable
with minimal onboard autonomy. Relaxing this assumption---for instance, by admitting or
deliberately inducing non-principal-axis rotation---could yield alternative configurations
in which the functional roles of the thruster groups are redistributed, but the minimum
hardware count of four thrusters is preserved for the same geometric reasons
(Theorem~\ref{thm:120} governs torque synthesis no less than force synthesis). Such
alternatives would, however, involve a higher-dimensional coupled control problem and
correspondingly different operational and observational characteristics. We develop the
principal-axis architecture in full and consider the alternative design point in
Section~\ref{sec:nutation_architecture}.

\section{Theoretical Framework}
\label{sec:framework}

\subsection{Frames and kinematics}
\label{sec:frames_kinematics}

Because the nucleus rotates while the trajectory is defined in inertial space, we distinguish
a \emph{body-fixed} frame for thruster geometry from a \emph{trajectory} (navigation) frame for reachability.

\paragraph{Body frame $\mathcal B$ (geometry).}
Let $\mathcal B=\{\hat{\mathbf x}_B,\hat{\mathbf y}_B,\hat{\mathbf z}_B\}$ be a right-handed orthonormal frame
rigidly attached to the body, with $\hat{\mathbf z}_B$ the (approximately fixed) spin axis and $\hat{\mathbf x}_B$
a chosen body-fixed reference direction (e.g., a marked meridian or principal-axis reference), and
$\hat{\mathbf y}_B=\hat{\mathbf z}_B\times \hat{\mathbf x}_B$.
Thruster directions are fixed in $\mathcal B$:
\begin{itemize}
\item $\mathbf d_i^{(B)}\in\mathbb R^3$ denotes the \emph{unit thrust direction} of thruster $i$ in body coordinates,
      i.e.\ the direction of the applied force when thruster $i$ fires.
\item $u_i(t)\ge 0$ is the commanded \emph{thrust magnitude} (force) of thruster $i$ at time $t$
      (unidirectional jets; no thrust reversal).
\end{itemize}

\paragraph{Trajectory frame $\mathcal I$ (navigation).}
Let $\mathcal I=\{\hat{\mathbf x}_I(t),\hat{\mathbf y}_I(t),\hat{\mathbf z}_I(t)\}$ be a time-varying
right-handed orthonormal frame in inertial space. Equivalently, $\mathcal I(t)$ is represented by a rotation
matrix in $SO(3)$. Reachability and curvature statements are evaluated in $\mathcal I$.
A notational point deserves emphasis.

\begin{remark}[Space versus frame]
Both $\mathcal B$ and $\mathcal I$ are orthonormal frames \emph{in} $\mathbb R^3$ (not separate spaces).
They provide two coordinate representations of the same physical vectors, related by $R_{IB}(t)\in SO(3)$.
\end{remark}

A convenient choice is to align $\hat{\mathbf x}_I(t)$ with the instantaneous velocity direction,
\begin{equation}
\hat{\mathbf x}_I(t)=\frac{\mathbf v(t)}{\|\mathbf v(t)\|},
\end{equation}
where $\mathbf v(t)\in\mathbb R^3$ is the inertial velocity of the center of mass with $\|\mathbf v(t)\|>0$.
The remaining axes $\hat{\mathbf y}_I(t),\hat{\mathbf z}_I(t)$ can be any smooth
orthonormal completion (e.g., using an orbital normal when well-defined).

\paragraph{Rotation map and inertial force.}
Let $R_{IB}(t)\in SO(3)$ map body-frame vectors into inertial/trajectory coordinates. The thrust direction
of jet $i$ expressed in $\mathcal I$ is
\begin{equation}
\mathbf d_i^{(I)}(t)=R_{IB}(t)\,\mathbf d_i^{(B)},
\end{equation}
and the corresponding \emph{net thrust force} in inertial coordinates is
\begin{equation}
\mathbf F^{(I)}(t)=\sum_i u_i(t)\,\mathbf d_i^{(I)}(t),
\label{eq:inertial_force_def}
\end{equation}
where $\mathbf F^{(I)}(t)\in\mathbb R^3$ has units of force. Translational steering is governed by impulse:
\begin{equation}
\Delta \mathbf v=\frac{1}{m}\int_0^{t_f}\mathbf F^{(I)}(t)\,dt,
\label{eq:impulse_dv}
\end{equation}
with $m>0$ the (approximately constant) mass over the maneuver window and $\Delta \mathbf v$ the induced velocity change.
In free space, the trajectory responds directly to the net force $\mathbf F^{(I)}(t)$, while torques govern
attitude evolution. Torques matter for navigation here only through their effect on $R_{IB}(t)$, i.e.\ through how
they reorient the body-fixed thrust directions $\mathbf d_i^{(B)}$ in inertial space.

\paragraph{Steady-spin approximation.}
When the nucleus rotates near a principal axis with approximately constant rate $\omega_0$, define the spin phase
$\phi(t)=\omega_0 t$ and write
\begin{equation}
R_{IB}(t)\approx R_{IB}(0)\,R_{\hat{\mathbf z}_B}(\phi(t)),
\label{eq:R_decomp}
\end{equation}
where $R_{\hat{\mathbf z}_B}(\cdot)$ is a right-handed rotation about $\hat{\mathbf z}_B$. Under~\eqref{eq:R_decomp},
the inertial thrust directions $\mathbf d_i^{(I)}(t)$ and the net force $\mathbf F^{(I)}(t)$ in~\eqref{eq:inertial_force_def}
become explicit periodic functions of $\phi$, enabling closed-form identification of phases at which specific force/torque
components attain extrema.

\begin{remark}[Realism of the steady-spin assumption]
\label{rem:steady_spin_realism}
The principal-axis, constant-rate assumption is a simplification. Natural cometary nuclei are
often irregular bodies whose rotation states may include non-principal-axis components, and
even an initially well-behaved spin state may not persist over multi-decade mission durations
under stochastic outgassing torques, tidal interactions, and the cumulative effects of
thruster activity itself. We introduce this assumption to make the mathematical problem
tractable and to establish a minimal baseline architecture; the setting in which it is
relaxed is considered in Section~\ref{sec:nutation_architecture}.
\end{remark}

\subsection{Reachability: Formal Definition}
\label{sec:reachability}

We distinguish two complementary notions of thruster authority, evaluated in frames $\mathcal{B}$ and $\mathcal{I}$ respectively.

\emph{Instantaneous controllability} asks whether the thrusters can realize an arbitrary wrench (force and torque) at a given instant. This is the standard pointwise, algebraic feasibility question posed in the \emph{body frame} $\mathcal{B}$, where thrust directions and moment arms are fixed and the allocation matrix is defined---the setting that leads to conventional 6-DOF solutions. \emph{Reachability}, by contrast, asks whether the vehicle can achieve desired trajectory corrections---bounded heading and inclination changes---over finite time. This is a dynamical, integral property posed in the \emph{trajectory frame} $\mathcal{I}$.

The bridge between the two is the attitude map $R_{IB}(t) \in SO(3)$ (Section~\ref{sec:frames_kinematics}): body-fixed thrust directions $\mathbf{d}_i^{(B)}$ appear in inertial coordinates as $\mathbf{d}_i^{(I)}(t) = R_{IB}(t)\,\mathbf{d}_i^{(B)}$. Rotation therefore makes the available inertial thrust directions time-varying, so a configuration may fail instantaneous 6-DOF wrench controllability in $\mathcal{B}$ while still achieving practical 3D reachability in $\mathcal{I}$ by accumulating corrections through rotation-phased thrust histories.

\begin{definition}[Instantaneous controllability (wrench-level)]
\label{def:instant_control}
Fix a time $t$ and express wrenches in the body frame $\mathcal{B}$. Let $\mathbf{u}(t) = (u_1(t), \ldots, u_n(t))^\top \in \mathbb{R}_+^n$ collect the (unidirectional) thruster force magnitudes, and let $B \in \mathbb{R}^{6 \times n}$ denote the control-allocation matrix determined by the thruster directions and moment arms. The system is \emph{instantaneously controllable} if, for any desired wrench $\mathbf{w}^{(B)} = [(\mathbf{F}^{(B)})^\top, (\bm{\tau}^{(B)})^\top]^\top \in \mathbb{R}^6$, there exists $\mathbf{u}(t) \ge 0$ such that
\begin{equation}
B\,\mathbf{u}(t) = \mathbf{w}^{(B)}.
\end{equation}
\end{definition}

\begin{definition}[Practical forward-cone reachability (trajectory-level)]
\label{def:reachability}
A configuration provides \emph{practical forward-cone reachability} if, in the trajectory frame $\mathcal{I}$, the vehicle can steer its velocity direction within a bounded cone about the instantaneous forward axis $\hat{\mathbf{x}}_I(t)$ and execute sustained 3D trajectory corrections over mission-relevant timescales.

Let $\mathbf{F}^{(I)}(t)$ be the inertial net thrust force as in~\eqref{eq:inertial_force_def}, and let $\mathbf{a}^{(I)}(t) = \mathbf{F}^{(I)}(t)/m$ be the induced acceleration. Define the \emph{transverse component} (relative to the instantaneous velocity direction) by
\begin{equation}
a_\perp(t) \;\equiv\; \bigl\|(\mathbf{I} - \hat{\mathbf{x}}_I(t)\hat{\mathbf{x}}_I(t)^\top)\,\mathbf{a}^{(I)}(t)\bigr\|.
\end{equation}
Given a bound $a_\perp(t) \le a_\perp^{\max}$ and cruise speed $v_0 > 0$ (approximately constant over a maneuver window), the achievable curvature satisfies
\begin{equation}
\kappa_{\max} \approx \frac{a_\perp^{\max}}{v_0^2},
\qquad
\rho_{\min} \approx \frac{1}{\kappa_{\max}} = \frac{v_0^2}{a_\perp^{\max}},
\end{equation}
so sharper turns require either larger transverse acceleration or lower speed. We therefore seek configurations that (i)~realize inertial thrust directions spanning a forward cone about $\hat{\mathbf{x}}_I(t)$, and (ii)~can achieve both left/right and up/down steering over finite time, without requiring instantaneous 6-DOF wrench authority.
\end{definition}

Throughout this paper, ``3D reachability'' is understood in this practical sense: the ability to accumulate heading and inclination changes through admissible thrust histories, acknowledging that maneuver sharpness is governed by thrust-to-mass ratio, cruise speed, and the rotation-mediated availability of inertial thrust directions.

\section{The Four-Thruster Configuration}
\label{sec:config}

This section develops the minimal thruster configuration required for three-dimensional navigation of a naturally rotating body. We proceed in three steps: first establishing the minimum hardware for full in-plane control, then introducing the additional thruster needed for out-of-plane authority, and finally characterizing the resulting system's control envelope and operational constraints. 

\subsection{Operating Regimes: Cruise versus Encounter}
\label{sec:cruise_encounter_regimes}

Before deriving the configuration, we distinguish two operating regimes that impose different requirements on the propulsion system.

\textbf{Deep space cruise.} Over multi-decade trajectories, response delays of hours are operationally negligible. Small heading changes can be accumulated gradually, and modest coupling between attitude control and translation is tolerable because deviations can be corrected with low-cost trim maneuvers over long horizons. This regime favors hardware minimization and fuel efficiency over instantaneous responsiveness.

\textbf{Encounter phases.} During gravity assists, perihelion passages, or close flybys, the local dynamics evolve rapidly and navigation tolerances tighten. In these regimes, the main operational concern is not thrusting per se, but avoiding \emph{unintended net impulse} and plume--environment interactions (contamination, impingement, or poorly modeled coupling) at times when targeting and pointing margins are tight. The present architecture therefore admits two encounter-compatible modes depending on mission needs: (i) \emph{off/cold} operation with staged ramps when high responsiveness is not required, and (ii) \emph{symmetric warm standby} of the three small jets, which yields exactly zero net transverse force (Proposition~\ref{prop:warm_zero_force}) while maximizing immediate differential authority for disturbance rejection. Out-of-plane reorientation via the large thruster remains intrinsically low-bandwidth and is best handled by pre-phasing or by scheduling outside the most time-critical portions of an encounter.

This dichotomy is intrinsic to mission structure: scientific objectives are defined by encounters---flybys, gravity assists, perihelion passages---where dynamics intensify and tolerances tighten. For interstellar applications, hardware and latency constraints favor minimal actuation, yet the same thruster set must support both gradual cruise corrections and tight encounter traversals. The configuration developed below addresses both regimes with a single minimal hardware set.

\subsection{In-Plane Control: The 120$^\circ$ Placement Theorem}
\label{sec:120_theorem}

We first address the simpler problem of steering within a fixed plane---the transverse plane perpendicular to the spin axis. This corresponds to control authority in the body-fixed frame $\mathcal{B}$, where thruster directions remain constant. The fundamental constraint is that practical thrusters produce force in only one direction: they can push but not pull. This \emph{unidirectional thrust constraint} ($u_i \geq 0$) fundamentally limits the achievable force directions for any given thruster arrangement.

The natural question is: what is the minimum number of thrusters required to synthesize thrust in any direction within the plane? The answer is three, and optimal placement is at $120^\circ$ intervals.

\begin{theorem}[Minimum thrusters for full planar force-direction coverage]
\label{thm:120}
Let $\{\mathbf d_i\}_{i=1}^n\subset\mathbb R^2$ be unit thrust directions and let $u_i\ge 0$
(unidirectional thrust). Define the achievable set
\[
\mathcal C \equiv \left\{\sum_{i=1}^n u_i\mathbf d_i : u_i\ge 0\right\},
\]
i.e., the conic hull of the directions. Then:
\begin{enumerate}
\item[\textup{(i)}] If $n=2$, $\mathcal C$ is a wedge (a convex cone spanning at most $180^\circ$),
so $\mathcal C\neq\mathbb R^2$ and full directional coverage is impossible.
\item[\textup{(ii)}] For $n=3$, $\mathcal C=\mathbb R^2$ if and only if the three rays are not
contained in any closed half-plane (equivalently: no angular gap between consecutive directions is $\ge 180^\circ$).
\item[\textup{(iii)}] Among three-thruster configurations that satisfy $\mathcal C=\mathbb R^2$,
equal spacing at $120^\circ$ maximizes the minimum angular margin to the nearest thruster direction
(and yields the most symmetric worst-case directional authority).
\end{enumerate}
\end{theorem}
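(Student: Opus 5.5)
Part (i) follows from a direct description of $\mathcal C$. With two unit vectors $\mathbf d_1,\mathbf d_2$ there are two cases. If they are parallel or antiparallel, $\mathcal C$ is a ray or a line. Otherwise $\mathcal C$ is the closed convex wedge between them, with opening angle $\theta\in(0,\pi)$. In every case there is a nonzero $\mathbf w$ with $\mathbf w\cdot\mathbf d_i\le 0$ for $i=1,2$: for the wedge, take the bisector of the wedge reversed (or any normal to the line in the degenerate cases). Then every $\mathbf c\in\mathcal C$ satisfies $\mathbf w\cdot\mathbf c\le 0$, so $\mathcal C$ lies in a closed half-plane and $\mathbf w\notin\mathcal C$. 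This half-plane separation is the template for the rest of the proof.

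For (ii) I will prove the two directions separately.

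\emph{Necessity.} Suppose all three rays lie in a closed half-plane $\{\mathbf x:\mathbf w\cdot\mathbf x\ge 0\}$. Then $\mathcal C$ is contained in that half-plane, so $-\mathbf w\notin\mathcal C$.

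\emph{Sufficiency.} Suppose the rays are not contained in any closed half-plane. Then $\mathbf 0$ lies in the interior of $\mathrm{conv}\{\mathbf d_1,\mathbf d_2,\mathbf d_3\}$. If it did not, the separating hyperplane theorem (or a supporting line when $\mathbf 0$ is on the boundary) would give $\mathbf w\ne0$ with $\mathbf w\cdot\mathbf d_i\ge 0$ for all $i$, contradicting the hypothesis. Interiority gives strictly positive weights $\lambda_i$ with $\sum\lambda_i\mathbf d_i=0$. The vectors $\mathbf d_i$ also span $\mathbb R^2$, since collinear rays would lie in a half-plane. Now take any target $\mathbf x$ and write $\mathbf x=\sum\mu_i\mathbf d_i$ with real $\mu_i$. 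Adding $t\sum\lambda_i\mathbf d_i=0$ for large $t$ makes every coefficient $\mu_i+t\lambda_i$ nonnegative, so $\mathbf x\in\mathcal C$.

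Finally, the claimed equivalence with the angular-gap condition needs its own argument. Order the directions by angle $\alpha_1<\alpha_2<\alpha_3$ with gaps $g_1,g_2,g_3$ summing to $2\pi$. If some gap $g_j\ge\pi$, the closed half-plane bounded by the line through the two endpoint rays of that gap, on the side away from it, contains all three rays. Conversely, a closed half-plane containing all three rays corresponds to a closed arc of length $\pi$ containing them, and its complement is an open arc of length $\pi$ free of rays, which forces some gap to be at least $\pi$.

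Part (iii) is where the main obstacle lies, because ``minimum angular margin to the nearest thruster direction'' has to be pinned down before it can be optimized. I will define the margin as the worst case over target directions $\hat{\mathbf x}$ of the angle to the nearest $\mathbf d_i$, namely $M=\max_{\hat{\mathbf x}}\min_i\angle(\hat{\mathbf x},\mathbf d_i)$. The quantity to optimize is then $M=\max_j g_j/2$, and the optimal configuration minimizes it, meaning the worst-case margin is made as small as possible. Since $\sum g_j=2\pi$, we have $\max_j g_j\ge 2\pi/3$, with equality if and only if $g_1=g_2=g_3=2\pi/3$. This proves optimality and uniqueness up to rotation.

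For the phrase ``most symmetric worst-case authority'' I will add a complementary minimax statement. Under the per-thruster bound $u_i\le 1$, consider the largest disk centred at the origin contained in $\{\sum u_i\mathbf d_i:0\le u_i\le 1\}$. This zonotope is a hexagon, and for the equally spaced configuration it is regular with inradius $1$. I plan only to remark, via the dihedral $D_3$ symmetry, that the $120^\circ$ case equalizes all directions; I will not prove that it maximizes the inradius, and will treat that as an optional remark rather than a core step.
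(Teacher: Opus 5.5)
Your proof is correct and follows the same route as the paper's. Both use half-plane separation for (i) and for necessity in (ii), the origin lying in the interior of $\mathrm{conv}\{\mathbf d_1,\mathbf d_2,\mathbf d_3\}$ for sufficiency, and the largest-gap bound $\max_j g_j\ge 2\pi/3$ for (iii). Your version also fills in steps the paper only asserts: the degenerate two-jet cases, the $\mu_i+t\lambda_i$ lifting, and the angular-gap equivalence.

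There are two small slips to fix. First, when a gap exceeds $\pi$ its endpoint rays are not collinear, so ``the line through the two endpoint rays'' only makes sense when the gap equals $\pi$. Use instead the line through the origin along one endpoint ray, or equivalently the half-plane centred on the bisector of the complementary arc of length $2\pi-g_j\le\pi$. Second, in your optional remark the equally spaced zonotope is the regular hexagon with vertices $\pm\mathbf d_i$. Its inradius is therefore $\sqrt{3}/2$, not $1$; it is the circumradius that equals $1$.
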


\begin{proof}
For $n=2$, any nonnegative combination lies in the cone spanned by the two rays, whose angular span is at most $180^\circ$.
Thus directions outside that wedge are unreachable.

For $n=3$, if the rays are contained in a closed half-plane, then all nonnegative combinations remain in that half-plane,
so $\mathcal C\neq\mathbb R^2$. Conversely, if they are not contained in any closed half-plane, then the origin lies in the
interior of the convex hull of $\{\mathbf d_i\}$, which implies $\mathcal C=\mathbb R^2$ (full directional coverage).

Finally, among feasible triples, the worst-case directional margin is governed by the largest angular gap between neighboring rays.
This worst-case is minimized by equal spacing, i.e., $120^\circ$ separation.
\end{proof}

In the notation of Section~\ref{sec:reachability}, this provides instantaneous controllability within the $(\hat{\mathbf{y}}, \hat{\mathbf{z}})$-plane of $\mathcal{B}$---but not yet the rotation-mediated authority in $\mathcal{I}$ required for practical forward-cone reachability.

\paragraph{Optimality.}

Two natural questions arise from Theorem~\ref{thm:120}: can fewer than three thrusters suffice, and does adding a fourth meaningfully help? Propositions~\ref{prop:insuf} and~\ref{prop:not4} address each in turn, and Figure~\ref{fig:coverage_conic_hull} visualizes the results geometrically.

\begin{proposition}[Insufficiency of two thrusters]
\label{prop:insuf}
Two thrusters, regardless of placement, cannot achieve bidirectional control in a 2D plane under unidirectional thrust constraints.
\end{proposition}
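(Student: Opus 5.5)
The plan is to reduce the statement to Theorem~\ref{thm:120}(i) and then add a short separating-hyperplane argument, so the conclusion does not rest on the word ``wedge'' alone. We interpret ``bidirectional control in a 2D plane'' as the requirement $\mathcal C=\mathbb R^2$: every planar direction $\hat{\mathbf e}$ and its opposite $-\hat{\mathbf e}$ must be realizable as $\sum_i u_i\mathbf d_i$ with $u_i\ge 0$. It then suffices to show that for any unit vectors $\mathbf d_1,\mathbf d_2\in\mathbb R^2$ there is a nonzero vector $\mathbf n$ with $\mathbf n^\top\mathbf d_i\ge 0$ for $i=1,2$.

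Given such an $\mathbf n$, every $\mathbf w=u_1\mathbf d_1+u_2\mathbf d_2$ with $u_i\ge 0$ satisfies $\mathbf n^\top\mathbf w\ge 0$. Hence $\mathcal C$ lies in the closed half-plane $\{\mathbf w:\mathbf n^\top\mathbf w\ge 0\}$. In particular, the direction $-\mathbf n$ cannot be synthesized, so at least one of the pair $\pm\mathbf n$ fails and bidirectional control along that axis is impossible.

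The construction of $\mathbf n$ is elementary. Take $\mathbf n=\mathbf d_1+\mathbf d_2$ when this is nonzero. Then $\mathbf n^\top\mathbf d_1=1+\mathbf d_1^\top\mathbf d_2\ge 0$, and by symmetry the same holds for $\mathbf d_2$, since $|\mathbf d_1^\top\mathbf d_2|\le 1$. In the degenerate antiparallel case $\mathbf d_2=-\mathbf d_1$, take $\mathbf n$ perpendicular to $\mathbf d_1$. Then $\mathcal C$ is the line $\mathrm{span}(\mathbf d_1)$, and the entire transverse direction $\pm\mathbf n$ is unreachable. This matches the ``at most $180^\circ$'' clause of Theorem~\ref{thm:120}(i).

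I would close with a remark that the argument is insensitive to thrust bounds, placement, and moment arms, since only the sign constraint $u_i\ge0$ is used. The same reasoning therefore applies at every instant in $\mathcal B$, which sets up the need for a third jet. The only step needing care is the antiparallel case, where the naive choice $\mathbf n=\mathbf d_1+\mathbf d_2$ vanishes; handling it separately as above resolves it, and no step presents a real obstacle.
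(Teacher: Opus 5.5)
Your proposal is correct and follows the paper's route. The paper deduces the proposition directly from Theorem~\ref{thm:120}(i), noting that two rays span a wedge of at most $180^\circ$ so the opposite half-plane is unreachable; your explicit normal $\mathbf n=\mathbf d_1+\mathbf d_2$, with a perpendicular choice in the antiparallel case, just makes that half-plane containment rigorous rather than leaving it to the word ``wedge''.
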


This follows immediately from Theorem~\ref{thm:120}(i): two rays span at most a $180^\circ$ wedge, leaving the opposite half-plane unreachable.

\begin{proposition}[Minimality of three thrusters]
\label{prop:not4}
Under the unidirectional thrust constraint ($u_i \geq 0$), three thrusters are the \emph{minimum} required to synthesize arbitrary planar thrust directions.
\end{proposition}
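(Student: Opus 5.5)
The claim has two halves. \emph{Sufficiency}: some three-thruster configuration realizes $\mathcal C=\mathbb R^2$. \emph{Necessity}: no configuration with $n\le 2$ does. Minimality means both hold, so the plan is to prove each separately using Theorem~\ref{thm:120}.

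For sufficiency, I would take the equally spaced triple $\mathbf d_k=(\cos(2\pi k/3),\sin(2\pi k/3))$, $k=0,1,2$. Every angular gap is $120^\circ<180^\circ$, so Theorem~\ref{thm:120}(ii) gives $\mathcal C=\mathbb R^2$ directly. To make this self-contained and constructive, I would also note the identity $\sum_k\mathbf d_k=0$. For any target $\mathbf w$, first pick any real coefficients $c_k$ with $\sum_k c_k\mathbf d_k=\mathbf w$; these exist because any two of the directions are linearly independent. Then adding $\lambda\,(1,1,1)$ with $\lambda\ge\max_k(-c_k)$ makes every coefficient nonnegative without changing the sum. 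This yields an explicit admissible allocation $u_k=c_k+\lambda\ge 0$.

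For necessity, the case $n=2$ is exactly Theorem~\ref{thm:120}(i), equivalently Proposition~\ref{prop:insuf}. The cases $n\le 1$ reduce to it: a single ray is contained in a wedge, and the empty set gives $\{0\}$. To make the step from ``$\mathcal C$ is a wedge'' to ``$\mathcal C\neq\mathbb R^2$'' explicit, I would use a separating normal. Two unit vectors $\mathbf d_1,\mathbf d_2$ lie in some closed half-plane $\{\mathbf x:\mathbf n^\top\mathbf x\ge 0\}$. One choice is $\mathbf n=\mathbf d_1+\mathbf d_2$ when this is nonzero, since $\mathbf n^\top\mathbf d_i=1+\mathbf d_1^\top\mathbf d_2\ge 0$; if $\mathbf d_2=-\mathbf d_1$, take $\mathbf n\perp\mathbf d_1$. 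Every nonnegative combination then satisfies $\mathbf n^\top\mathbf x\ge 0$, so the direction $-\mathbf n$ is unreachable.

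There is no real obstacle here. The only delicate point is the degenerate antipodal case $\mathbf d_2=-\mathbf d_1$, where the wedge collapses to a line and the choice $\mathbf n=\mathbf d_1+\mathbf d_2$ vanishes; the separating-normal argument above handles it. I would also remark that the title label ``not4'' suggests a follow-up point, which I would treat only as a remark outside the proof: adding a fourth in-plane thruster cannot reduce the count below three, and it only adds redundancy or improves margin.
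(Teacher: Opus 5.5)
Your proposal is correct and follows essentially the paper's route. The paper gives no separate proof of Proposition~\ref{prop:not4}: it gets minimality by combining Theorem~\ref{thm:120}(i) (two rays span at most a $180^\circ$ wedge, cf.\ Proposition~\ref{prop:insuf}) with Theorem~\ref{thm:120}(ii) applied to the $120^\circ$ triad, exactly as you do. Your separating-normal argument, including the antipodal case, and your constructive shift by $\lambda(1,1,1)$ just make explicit the steps the paper leaves to the theorem and Figure~\ref{fig:coverage_conic_hull}; the shift is the same zero-sum warm-baseline identity the paper later uses in Propositions~\ref{prop:warm_zero_force} and~\ref{prop:differential_decomposition}.
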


\begin{figure}[h!]
    \centering
    \includegraphics[width=\textwidth]{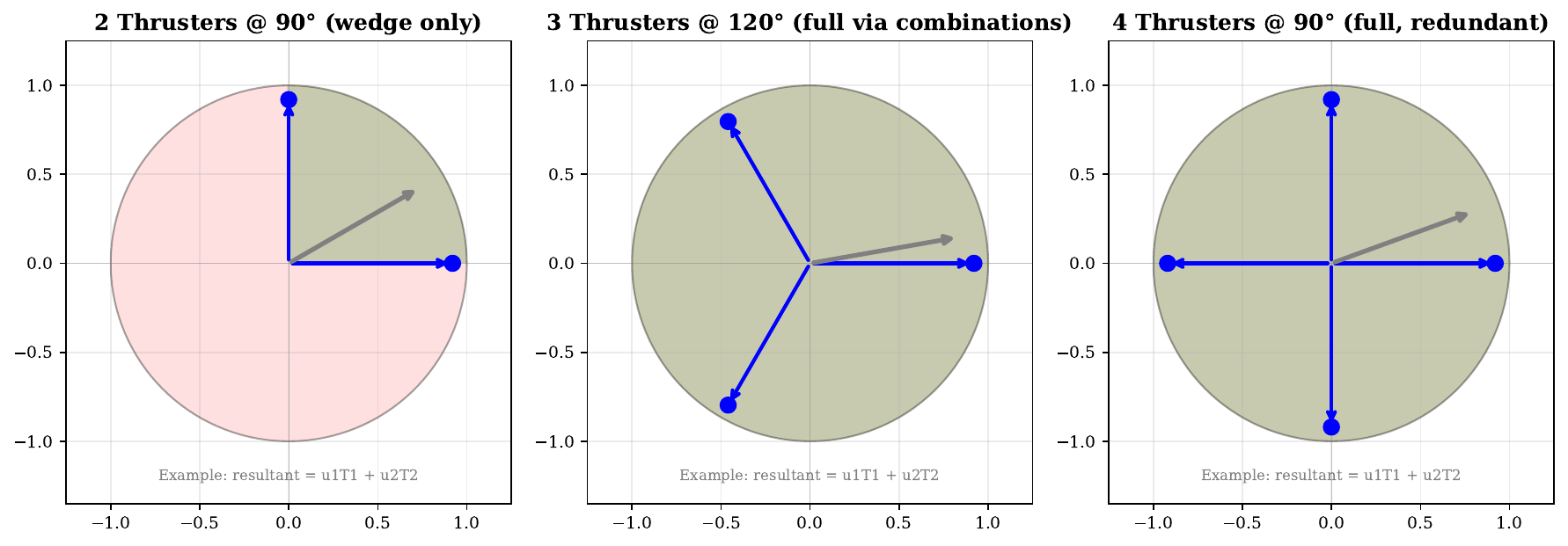}
    \caption{\textbf{Directional force coverage from nonnegative jet combinations.} 
    Blue arrows: individual jet directions; gray arrow: example resultant. 
    \textbf{Green}: reachable directions; \textbf{red}: unreachable. 
    Two jets at $90^\circ$ yield a wedge (left). 
    Three jets at $120^\circ$ achieve full planar coverage via pairwise combinations (middle). 
    Four jets at $90^\circ$ also yield full coverage but are redundant relative to the $120^\circ$ triad (right).}
    \label{fig:coverage_conic_hull}
\end{figure}

Deviation from equal spacing incurs quantifiable efficiency penalties. To see this, define \textit{worst-case control authority} as the minimum, over all target directions, of the ratio of achieved force magnitude to total thrust expended. For the optimal $120^\circ$ configuration, the worst-case direction lies at $60^\circ$ from any thruster (the midpoint of each gap), where achieving unit force requires exactly 2 units of total thrust, yielding worst-case authority of 0.5. For a misaligned configuration with angular gaps of $(100^\circ, 110^\circ, 150^\circ)$, numerical analysis shows the worst-case authority drops to 0.26---a 48\% reduction that nearly doubles propellant expenditure for maneuvers in unfavorable directions.

Additional thrusters (e.g., four at $90^\circ$ spacing) improve worst-case authority from $\cos 60^\circ = 0.5$ to $\cos 45^\circ \approx 0.71$, but increase hardware count and failure surface area without changing the geometric structure. The $120^\circ$ triad carries the further practical advantage of identical hardware, simplifying manufacturing and enabling interchangeability. Any deviation from equal spacing widens the largest angular gap, degrading efficiency at the midpoint; in the limit, a $180^\circ$ gap yields zero authority at $90^\circ$.

\subsection{Out-of-Plane Control: The Large Thruster Requirement}
\label{sec:large_thruster}

The three small thrusters at $120^\circ$ intervals provide full authority within their shared plane in $\mathcal{B}$, but cannot produce force components perpendicular to it. For three-dimensional navigation---that is, for practical forward-cone reachability in the inertial frame $\mathcal{I}$ as defined in Definition~\ref{def:reachability}---we require additional control authority in the out-of-plane direction.

Rather than adding three more small thrusters to create an independent perpendicular triad (which would double the hardware), we exploit the body's natural rotation. A single thruster mounted off the spin axis generates a moment arm that rotates with the body. While the thrust \emph{force} remains fixed in body coordinates, the resulting \emph{torque} sweeps through a full circle in inertial space over each rotation period. This enables attitude adjustments---and hence reorientation of the in-plane steering geometry---using a single additional thruster.

The required thrust magnitude for this fourth thruster depends on the body's rotational inertia and the desired reorientation rate.

\begin{theorem}[Torque requirement for spin-axis reorientation]
\label{thm:torque_req}
For a thruster exploiting body rotation to provide phase-dependent torque authority, the required thrust magnitude must overcome rotational inertia:
\begin{equation}
    F_{\text{large}} \geq \frac{I \dot{\omega}_{\text{req}}}{R}
\end{equation}
where $I$ is the moment of inertia, $\dot{\omega}_{\text{req}}$ is the required angular acceleration, and $R$ is the moment arm.
\end{theorem}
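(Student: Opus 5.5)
The plan is to obtain the bound from rigid-body rotational dynamics, using Euler's equations and the fact that a body-fixed jet produces a torque whose magnitude is at most $F\,R$. First, I fix the geometry. The large thruster is mounted at position $\mathbf r$ in $\mathcal B$ and has unit direction $\mathbf d_L^{(B)}$. When it fires with magnitude $F_{\text{large}}$, its torque about the center of mass is $\bm\tau = \mathbf r \times F_{\text{large}}\,\mathbf d_L^{(B)}$. I define the effective moment arm as $R \equiv \|\mathbf r \times \mathbf d_L^{(B)}\|$, the perpendicular distance from the thrust line to the center of mass. This gives the pointwise bound $\|\bm\tau\| \le F_{\text{large}} R$, with equality when the thrust is perpendicular to the lever arm.

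Second, I relate torque to the required angular acceleration. I write Euler's equations in $\mathcal B$:
\[
\mathbf J\dot{\bm\omega} + \bm\omega\times(\mathbf J\bm\omega) = \bm\tau ,
\]
where $\mathbf J$ is the inertia tensor. In the steady-spin approximation~\eqref{eq:R_decomp}, $\bm\omega \approx \omega_0\hat{\mathbf z}_B$ lies along a principal axis, so the gyroscopic term $\bm\omega\times(\mathbf J\bm\omega)$ vanishes to leading order. Projecting onto the axis $\hat{\mathbf e}$ along which the reorientation is demanded gives $I\,\dot\omega_{\text{req}} = \hat{\mathbf e}\cdot\bm\tau$, where $I = \hat{\mathbf e}^\top \mathbf J\,\hat{\mathbf e}$ is the relevant principal moment. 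Combining this with the bound from the first step yields
\[
I\dot\omega_{\text{req}} \le \|\bm\tau\| \le F_{\text{large}}R ,
\]
which rearranges to the stated inequality.

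Third, I address the rotation-mediated (phase-scheduled) aspect. As the body spins, the inertial torque direction $R_{IB}(t)\bm\tau^{(B)}$ sweeps a circle, so over a firing arc of half-width $\Delta\phi$ centered on the desired direction, the useful component is reduced by the factor $\cos(\phi-\phi^\ast)$. The averaged useful torque is therefore at most $F_{\text{large}}R$ times a duty or phase factor that is at most $1$. Since this factor only strengthens the requirement, the inequality stated in Theorem~\ref{thm:torque_req} is the necessary lower bound, attained by the ideal case of full alignment.

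The main obstacle is precision about the interpretation of $\dot\omega_{\text{req}}$ for a spinning body. Reorienting the spin axis is really precession of the angular momentum, $\dot{\mathbf H} = \bm\tau$, rather than simple angular acceleration. The same bound still follows if $I\dot\omega_{\text{req}}$ is read as the required rate of change of angular momentum: for example, $I\omega_0\dot\theta$ for a precession rate $\dot\theta$, with the inequality $\|\dot{\mathbf H}\| \le F_{\text{large}}R$ unchanged. I would state this reading explicitly and note that the bound is a necessary condition, not a sufficient one.
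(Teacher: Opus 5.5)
Your proposal is correct and follows the same reasoning the paper relies on. The paper gives no formal proof, only the physical justification that the jet torque $F_{\text{large}}R$ must overcome rotational inertia $I\dot\omega_{\text{req}}$, with transverse torque acting through precession of the spin axis; your Euler-equation argument and your reading $\|\dot{\mathbf H}\| = \|\bm\tau\| \le F_{\text{large}}R$ are a more careful version of that same argument.

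One cosmetic slip: you define $R \equiv \|\mathbf r\times\mathbf d_L^{(B)}\|$ as the perpendicular distance, so $\|\bm\tau\| = F_{\text{large}}R$ holds identically, and your ``with equality when the thrust is perpendicular to the lever arm'' remark only makes sense if $R$ denotes $\|\mathbf r\|$.
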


The physical interpretation is that the large thruster must generate sufficient torque to change the body's attitude against its rotational inertia. For a spinning body, the effect of torque depends on its orientation relative to the angular momentum $\mathbf{L}=I\boldsymbol{\omega}$: torque components parallel to $\mathbf{L}$ primarily change spin rate, whereas transverse components induce precession and reorientation of the spin axis. In the present architecture, the three small thrusters can support frequent, low-impulse trim actions (including spin-rate management), while the large thruster provides the slower transverse torque authority needed to reorient the steering geometry over long horizons.

Concretely, a single large thruster at position $(0, R, 0)$ firing in direction $(0, 0, 1)$ generates torque that, in the inertial frame, traces:
\begin{equation}
    \bm{\tau}(t) = F_{\text{large}} R \begin{pmatrix} \cos(\omega_0 t) \\ -\sin(\omega_0 t) \\ 0 \end{pmatrix}
\end{equation}
Over one rotation period, this sweeps the entire $x$-$y$ \emph{torque} plane, providing phase-dependent attitude authority. Note carefully the distinction: while the thrust \emph{force} direction remains fixed in body coordinates (along the spin axis), the \emph{torque} direction rotates in inertial space because the moment arm sweeps through orientations. This is the key mechanism enabling out-of-plane control with a single additional thruster.

\paragraph{Forward-cone reachability.}
With the addition of the large thruster, we can now establish that the four-thruster configuration satisfies the practical reachability requirements of Definition~\ref{def:reachability}.

\begin{proposition}[Four thrusters suffice for forward-cone reachability]
\label{prop:four_thruster_reachability}
Consider a rotating body with spin axis $\hat{\mathbf{z}}$ and angular velocity $\omega_0 = 2\pi/T$, equipped with:
\begin{enumerate}
\item[\textup{(i)}] Three small thrusters at $120^\circ$ intervals in the $(\hat{\mathbf{x}}, \hat{\mathbf{y}})$-plane of $\mathcal{B}$, providing instantaneous in-plane force authority up to magnitude $F_{\mathrm{small}}$;
\item[\textup{(ii)}] One large thruster generating thrust $F_{\mathrm{large}}$ along $\hat{\mathbf{z}}$, with moment arm $R$ about the spin axis.
\end{enumerate}
Then the configuration provides practical forward-cone reachability in $\mathcal{I}$ (Definition~\ref{def:reachability}):
\begin{enumerate}
\item[\textup{(a)}] \emph{In-plane steering} is available instantaneously, with maximum transverse acceleration $a_\perp^{\max} = F_{\mathrm{small}}/m$;
\item[\textup{(b)}] \emph{Out-of-plane steering} is available with latency at most $T/4$, via rotation-phased firing that selects the torque direction in $\mathcal{I}$;
\item[\textup{(c)}] \emph{Arbitrary 3D heading changes} within the forward cone are achievable with latency at most $T/2$, through combined in-plane thrust and attitude reorientation.
\end{enumerate}
\end{proposition}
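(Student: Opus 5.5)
The plan is to verify the three clauses (a)--(c) of Definition~\ref{def:reachability} one at a time, using the steady-spin decomposition~\eqref{eq:R_decomp}. Under it, body-fixed directions are mapped into $\mathcal I$ by $R_{IB}(0)R_{\hat{\mathbf z}_B}(\omega_0 t)$. The spin axis $\hat{\mathbf z}$ is then fixed in $\mathcal I$, and the body $(\hat{\mathbf x},\hat{\mathbf y})$-plane is carried onto a fixed inertial plane $\Pi$ (the plane orthogonal to the spin axis) while rotating within it. We assume, as the forward-cone setting intends, that the spin axis is roughly aligned with $\hat{\mathbf x}_I$, so $\Pi$ is the transverse plane.

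\textbf{Clause (a).} Theorem~\ref{thm:120}(ii) shows that the three small jets have conic hull equal to the whole body plane. Because $R_{\hat{\mathbf z}_B}(\phi)$ maps that plane onto $\Pi$ for every $\phi$, any desired inertial transverse force $\mathbf F_\perp\in\Pi$ can be written at each instant as
\[
\mathbf F_\perp=\sum_{i=1}^3 u_i(t)\,\mathbf d_i^{(I)}(t),\qquad u_i(t)\ge 0,
\]
with $u_i(t)$ obtained by inverting the allocation in the rotated frame. This gives zero latency. The magnitude bound $a_\perp^{\max}=F_{\mathrm{small}}/m$ is the definition of $F_{\mathrm{small}}$ as the guaranteed in-plane force authority. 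I would note that, by the worst-case authority computation after Theorem~\ref{thm:120}, this is the force available in \emph{every} direction, which costs twice the per-jet thrust at the gap midpoints.

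\textbf{Clause (b).} Out-of-plane steering means tilting the spin axis, and with it $\Pi$, toward a desired inertial direction. I would use the explicit torque formula $\bm\tau(t)=F_{\mathrm{large}}R(\cos\omega_0 t,-\sin\omega_0 t,0)^\top$. For a target transverse torque direction at angle $\psi$, firing at the phase $t^*$ where $-\omega_0 t^*\equiv\psi$ gives exact alignment. The worst-case wait for that phase is a full period, so a $T/4$ bound cannot come from exact alignment alone. Two sign choices reduce it. First, transverse torque on a spinning body produces precession $\dot{\hat{\mathbf L}}\propto\bm\tau_\perp$. Second, the torque direction may be rotated by $\pm 90^\circ$ through the gyroscopic relation between torque and axis motion. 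Combining these, for a target axis of tilt (up to sign of the resulting deflection, which can be absorbed by choosing the plane orientation) some admissible firing phase lies within a quarter turn of the current phase. Alternatively, one can fire within a window of half-width $\pi/2$ around $t^*$ and use the positive projection $\cos(\omega_0 t-\omega_0 t^*)\ge 0$, which gives nonzero net torque along the target starting within $T/4$. I would adopt the second argument as the rigorous route. It shows that useful net torque in the required direction begins accruing within $T/4$, and Theorem~\ref{thm:torque_req} sets the rate.

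\textbf{Clause (c).} A general 3D heading change decomposes into a component in the current $\Pi$, handled instantly by (a), and an out-of-plane component handled by (b). Executing the (b) firing window, of length at most $T/2$ for the positive-projection half-cycle, concurrently with the (a) thrust gives a combined latency of at most $T/2$. Staying inside a bounded forward cone holds because both contributions are transverse and bounded, so the curvature estimate $\kappa_{\max}\approx a_\perp^{\max}/v_0^2$ of Definition~\ref{def:reachability} applies.

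I expect the main obstacle to be making (b) precise. The sense of ``latency'' must be fixed (onset of a net useful torque versus completion), and the gyroscopic $90^\circ$ lag must be handled rigorously. I would state the latency as time to onset of nonnegative projected torque and argue it through the half-cycle window.
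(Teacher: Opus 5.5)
Clause (a) is fine and matches the paper. There is, however, a genuine gap in (b) that carries into (c).

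The route you adopt as rigorous does not yield $T/4$. With $\bm\tau(t)\propto(\cos\omega_0 t,-\sin\omega_0 t,0)^\top$ and target $\hat{\mathbf n}=(\cos\psi,\sin\psi,0)^\top$, one has $\bm\tau\cdot\hat{\mathbf n}\propto\cos(\omega_0 t+\psi)$, which is positive on exactly half of every period. The windows $[t^*-T/4,\,t^*+T/4]+kT$ are therefore separated by gaps of length $T/2$. A request issued just as a window closes must wait $T/2$ before any torque with positive projection is available, so the half-cycle argument proves onset latency $\le T/2$ and nothing better. Your first argument does not recover the missing factor of two. The relation $\dot{\mathbf L}=\bm\tau$ fixes how the axis responds to a given torque, so the ``$\pm90^\circ$ rotation'' is not a free choice. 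Sign freedom with exact alignment still leaves a worst-case wait of half a turn, not a quarter. Clause (c) then uses the \emph{duration} $T/2$ of the firing window as if it were a latency. If latency means onset, (c) simply inherits the bound from (b). If it means completing the half-cycle burn, it is onset plus $T/2$: up to $T$ with the correct onset bound, and $3T/4$ even with your claimed one. Either way, $T/2$ does not follow from what you wrote.

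The paper argues (b) differently. The torque sweeps the transverse plane once per period, and because $+\hat{\mathbf n}$ and $-\hat{\mathbf n}$ produce opposite angular accelerations, it counts both antipodal phases as usable and halves the wait to $T/4$. It then obtains (c) by composing instantaneous in-plane thrust with this phased authority. Your remark about absorbing the deflection sign is the germ of that idea. If you adopt it, do the worst-case arithmetic explicitly:
\begin{itemize}
\item a fixed torque direction recurs once per period, so its worst-case wait is $T$;
\item allowing either sign reduces this to $T/2$;
\item the paper's figures $T/2$ and then $T/4$ coincide with the \emph{mean} waits, not the maxima.
\end{itemize}
A strict worst-case $T/4$ needs a stated tolerance. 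For example, accept any torque within $45^\circ$ of the line $\pm\hat{\mathbf n}$; the admissible directions then form two $90^\circ$ arcs separated by $90^\circ$ gaps.

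Separately, you assume $\Pi$ is the transverse plane. Under that assumption a first-order heading change has no component outside $\Pi$, so the split in (c) is empty. In that geometry, (b) only keeps $\Pi$ aligned with the rotating velocity, and the argument should be phrased that way.
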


\begin{proof}
Part (a) follows from Theorem~\ref{thm:120}: the $120^\circ$ triad achieves $\mathcal{C} = \mathbb{R}^2$ in the steering plane, so any transverse direction is instantaneously accessible.

For part (b), the large thruster's torque vector sweeps through all directions in the transverse torque plane over period $T$. Any desired torque direction is therefore achieved at some phase $\phi^* \in [0, 2\pi)$, and the maximum wait time for that phase is $T/2$. However, since torque directions $\pm\hat{\mathbf{n}}$ produce opposite angular accelerations, the effective wait time is at most $T/4$.

Part (c) follows by composition: in-plane components are addressed instantaneously while out-of-plane components require at most $T/4$ for favorable phasing, giving a worst-case combined latency of $T/2$ for heading changes requiring both.
\end{proof}

\begin{figure}[h!]
    \centering
    \includegraphics[width=\textwidth]{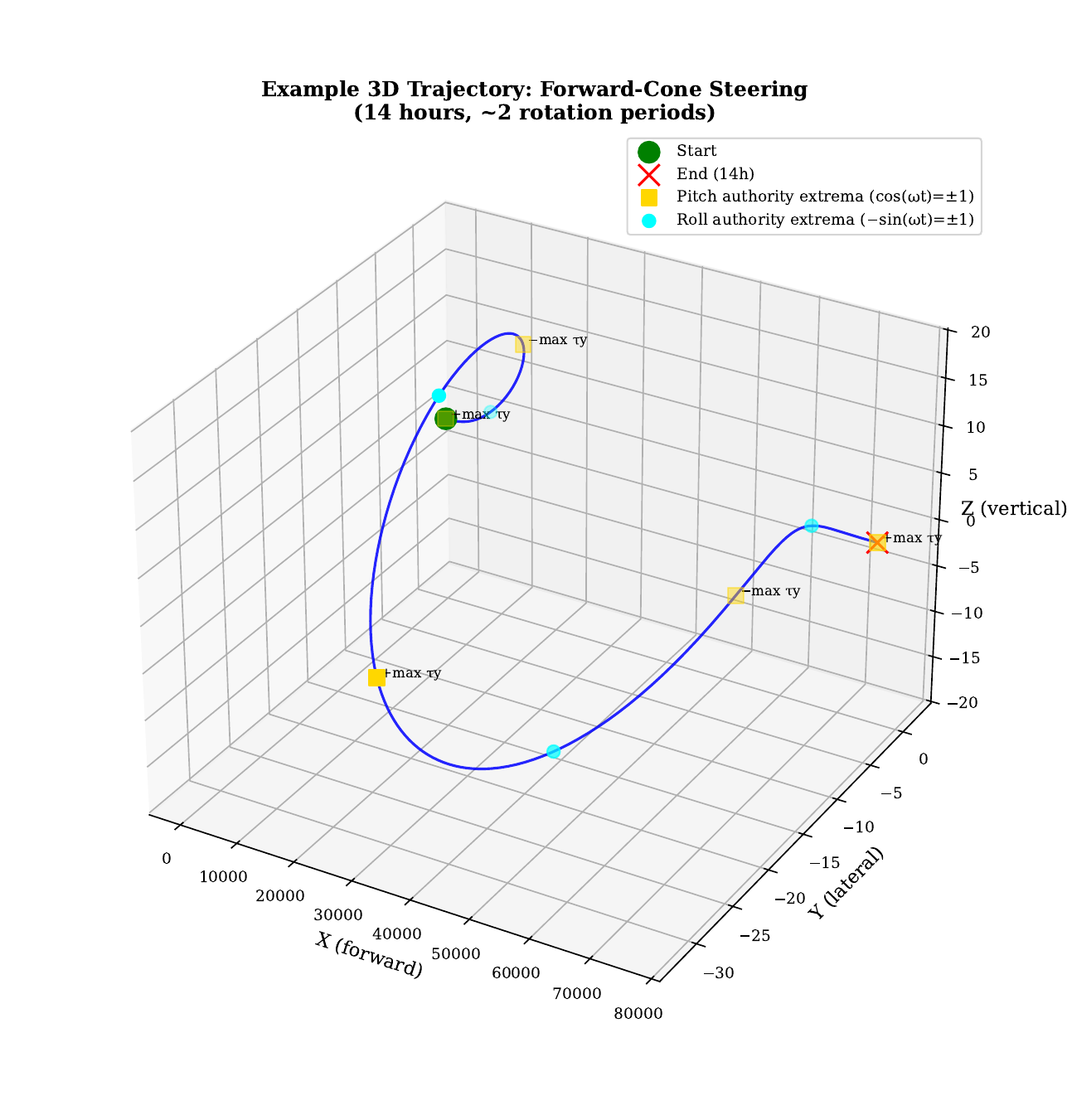}
    \caption{\textbf{Example forward-cone steering trajectory.} 
    A 3D path over 14 hours ($\approx$2 rotations) with steering constrained within a forward cone. 
    Markers indicate spin phases at which pitch-axis torque reaches extrema (squares; $\cos\omega t = \pm 1$) 
    and roll-axis torque reaches extrema (circles; $-\sin\omega t = \pm 1$), illustrating how 
    sign-controllable out-of-plane authority becomes available at predictable phases.}
    \label{fig:trajectory_cone_steering}
\end{figure}

Figure~\ref{fig:trajectory_cone_steering} illustrates the resulting capability: by firing the large thruster at appropriate rotation phases, the system executes controlled 3D trajectory modifications while the markers indicate the predictable phases at which specific torque directions become available.

\begin{remark}[Role of a second large thruster]
\label{rem:second_large}
A second large thruster is not required for the \emph{existence} of long-horizon 3D steering, but it can provide 
(i) redundancy, 
(ii) improved decoupling between attitude torque and net translational impulse by enabling partial cancellation of undesired drift during attitude adjustments, and 
(iii) reduced latency---with two thrusters at opposite positions, the maximum wait for a favorable torque phase is halved from $T/4$ to $T/8$.
\end{remark}

For missions where reliability margins justify additional hardware, a second large thruster (positioned symmetrically) allows attitude maneuvers with reduced translational side-effects. However, the minimal configuration requires only one.

\subsection{The Minimal Configuration}
\label{sec:minimal_config}

Combining the results above, we arrive at the minimal configuration for three-dimensional forward-cone steering: three small thrusters at $120^\circ$ intervals for in-plane control, plus one large thruster for rotation-mediated out-of-plane authority. This asymmetric allocation is not an arbitrary design choice but emerges from the structure of the control problem.

\begin{corollary}[Optimality of asymmetric allocation]
\label{cor:asymmetric}
For a rotating body with forward motion, the fuel-optimal configuration for 3D reachability consists of three small thrusters at $120^\circ$ for instantaneous planar control plus one large thruster for rotation-mediated perpendicular control.
\end{corollary}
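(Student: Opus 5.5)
The plan is to read ``fuel-optimal'' in a minimal-hardware, per-axis sense. The corollary is then an assembly of Theorem~\ref{thm:120}, Propositions~\ref{prop:insuf}--\ref{prop:four_thruster_reachability}, and Theorem~\ref{thm:torque_req}. I would split the argument into three claims: (1) a lower bound showing no configuration with fewer than four unidirectional jets yields 3D forward-cone reachability in the sense of Definition~\ref{def:reachability}; (2) sufficiency of the specific 3+1 arrangement; and (3) optimality of the asymmetric sizing, with the triad small and the fourth jet large, under a propellant-per-correction criterion.

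For (1), I would take the spin plane as the steering plane. Proposition~\ref{prop:insuf} and Proposition~\ref{prop:not4} show that at least three jets with in-plane components are needed for instantaneous planar coverage. Part~(ii) of Theorem~\ref{thm:120} shows three suffice only if they positively span the plane. I would then argue that three coplanar jets produce no force and no transverse torque off that plane, so the spin-axis direction can never be reoriented and out-of-plane heading is unreachable. Tilting one of the three jets out of plane would sacrifice positive spanning of the plane, again by Theorem~\ref{thm:120}(ii). Hence a fourth jet is required.

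For (2), I would simply invoke Proposition~\ref{prop:four_thruster_reachability}(a)--(c).

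For (3), I would split the heading-change budget into in-plane and out-of-plane parts. In-plane corrections are frequent and small, so the triad's cost per unit $\Delta v$ is governed by the worst-case authority computed after Theorem~\ref{thm:120}. The value $0.5$ at $120^\circ$ is the best achievable with three jets, which handles the placement part of the claim.

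Out-of-plane corrections are realized through precession. By Theorem~\ref{thm:torque_req}, they need torque $F_{\text{large}}R \ge I\dot\omega_{\text{req}}$. Since this impulse is delivered in one phase-selected burst per correction (latency $\le T/4$), a single jet with large $F$ and the largest available moment arm $R$ minimizes the thrust-time product for a given angular impulse. The alternative would be a second perpendicular triad, which at equal authority costs roughly twice the hardware and suffers the same $0.5$ worst-case factor. I would make this comparison explicit as a ratio of propellant per unit of heading change under each allocation.

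The main obstacle is step (3): ``fuel-optimal'' is not precisely defined in the paper, and any rigorous statement requires fixing a cost functional. I would first try minimizing total impulse $\int\sum_i u_i\,dt$ subject to achieving a given cone of corrections within bounded latency, then show that the symmetric alternatives (a second triad, or four equal jets) weakly dominate in neither hardware count nor impulse. If that cannot be done in full generality, I would state the result as optimality within the class of configurations that decouple in-plane force synthesis from rotation-mediated torque.
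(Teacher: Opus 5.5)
Your step (1) does not go through. Theorem~\ref{thm:120} and Propositions~\ref{prop:insuf}--\ref{prop:not4} concern instantaneous force coverage in $\mathcal B$. Definition~\ref{def:reachability}, by contrast, is a finite-time property in $\mathcal I$. Under steady spin, a single body-fixed jet in the spin plane already sweeps every in-plane inertial direction once per period. A count bound therefore follows only if you explicitly add instantaneous transverse authority to the requirement.

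More seriously, your claim that three jets with in-plane force directions produce no transverse torque is false unless their thrust lines lie in one plane through the centre of mass. Write the mounting point of jet $i$ as $\bm\rho_i+h_i\hat{\mathbf z}$ with $\bm\rho_i$ in the spin plane. Its moment then has transverse part $h_i\,\hat{\mathbf z}\times\mathbf d_i$, with the remainder along $\hat{\mathbf z}$. Under symmetric firing $u_1=u_2=u_3=u_0$ the net force vanishes (Proposition~\ref{prop:warm_zero_force}). Yet the transverse torque is $u_0\,\hat{\mathbf z}\times\sum_i h_i\mathbf d_i$, which is nonzero unless all $h_i$ are equal, because $\sum_i\mathbf d_i=\mathbf 0$ is, up to scale, the only linear relation among the $\mathbf d_i$. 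Rotation sweeps this body-fixed torque through the transverse torque plane exactly as it does T4's. A staggered-height triad thus already reproduces the mechanism of Proposition~\ref{prop:four_thruster_reachability}(b), and ``four jets are necessary'' cannot be proved without an extra mounting hypothesis.

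The paper does not attempt such a bound. Its argument for the corollary is only the split into an in-plane subspace (continuous, frequent, served by the triad) and an out-of-plane subspace (intermittent, delay-tolerant, served by one rotation-phased jet). Frequency separation is then used to justify the sizing.

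Step (3) also cannot produce the small/large sizing from your cost functional. $\int\sum_i u_i\,dt$ is unchanged when thrust level is traded against burn time. Enlarging a throttleable triad only relaxes $u_i\le u_{\max}$, so nothing in the functional makes T1--T3 small.

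For T4, the thrust impulse $\int F_{\text{large}}\,dt$ needed for a torque impulse $\int\bm\tau\,dt$ is at least $\|\int\bm\tau\,dt\|/R$. This bound is approached by concentrating the burn at $\phi^*$. $F_{\text{large}}$ matters only because the latency bound, together with $\dot\omega_{\text{req}}$ in Theorem~\ref{thm:torque_req}, forbids spreading that burn over many cycles. Hence four equal jets all sized like T4 tie the 3+1 design in count and impulse while meeting the same latency. Your Pareto comparison therefore cannot single out the asymmetric allocation.

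Your comparison with a second triad is likewise one of hardware, not propellant. The 3+1 design additionally pays a reorientation impulse of order $I\omega_0\Delta\theta/R$ plus T4's axial side-impulse, neither of which a second triad incurs.

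What actually forces the small/large split is a bandwidth requirement: fine impulse resolution for frequent trims, and concentrated torque for phase-selected reorientation. That is the paper's frequency-separation argument. Your fallback is the form in which the corollary can actually be made precise: optimality within the class that decouples in-plane force synthesis from rotation-mediated torque, with ``fuel-optimal'' read as hardware-minimal under latency constraints. It is essentially what the paper asserts.
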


The proof follows from the decomposition of the control problem into independent subspaces: in-plane steering (continuous, frequent) optimally uses small thrusters, while out-of-plane reorientation (intermittent, delay-tolerant) optimally uses one large thruster exploiting rotation. The frequency separation between these control modes justifies the asymmetric sizing: small thrusters handle high-bandwidth translational trim, while the large thruster handles low-bandwidth attitude shaping.

\begin{remark}[Minimality beyond hardware count]
\label{rem:dimensionality}
The minimality of the present architecture extends beyond thruster count to the dimensionality
of the control problem itself. Under the steady-spin
approximation~\eqref{eq:R_decomp}, the full attitude state in~$SO(3)$---parameterized by
nutation angle~$\theta$, precession angle~$\psi$, and spin phase~$\phi$---collapses to a
single periodic variable~$\phi(t)=\omega_0 t$. Attitude enters the translational control
problem only as a predictable clock for phase scheduling, yielding an effectively
one-dimensional attitude manifold. This dimensional reduction is what enables the separation
between instantaneous in-plane steering and low-bandwidth out-of-plane reorientation that
underlies Proposition~\ref{prop:four_thruster_reachability}. Relaxing the principal-axis
assumption would preserve the four-thruster minimum (Theorem~\ref{thm:120} applies equally
to torque synthesis) but would elevate the attitude control problem to its full
three-dimensional form, with nutation angle becoming an actively managed state variable
coupled to the translational steering envelope. We return to this point in
Section~\ref{sec:nutation_architecture}.
\end{remark}

Figure~\ref{fig:config} illustrates the configuration; Table~\ref{tab:config} provides specifications.

\begin{table}[h!]
\caption{Optimal four-thruster configuration for body radius $R$.}
\label{tab:config}
\centering
\small
\setlength{\tabcolsep}{4pt}
\renewcommand{\arraystretch}{1.05}
\begin{tabular}{@{}lllll@{}}
\toprule
ID & Position & Direction & Type & Function \\
\midrule
T1 & $(R,0,0)$ & Tangent ($0^\circ$)   & Small & In-plane steering \\
T2 & $120^\circ$ from T1 & Tangent & Small & In-plane steering \\
T3 & $240^\circ$ from T1 & Tangent & Small & In-plane steering \\
T4 & $(0,R,0)$ & $(0,0,1)$ & Large & Attitude shaping \\
\bottomrule
\end{tabular}
\end{table}

\begin{figure}[h!]
    \centering
    \includegraphics[width=\textwidth]{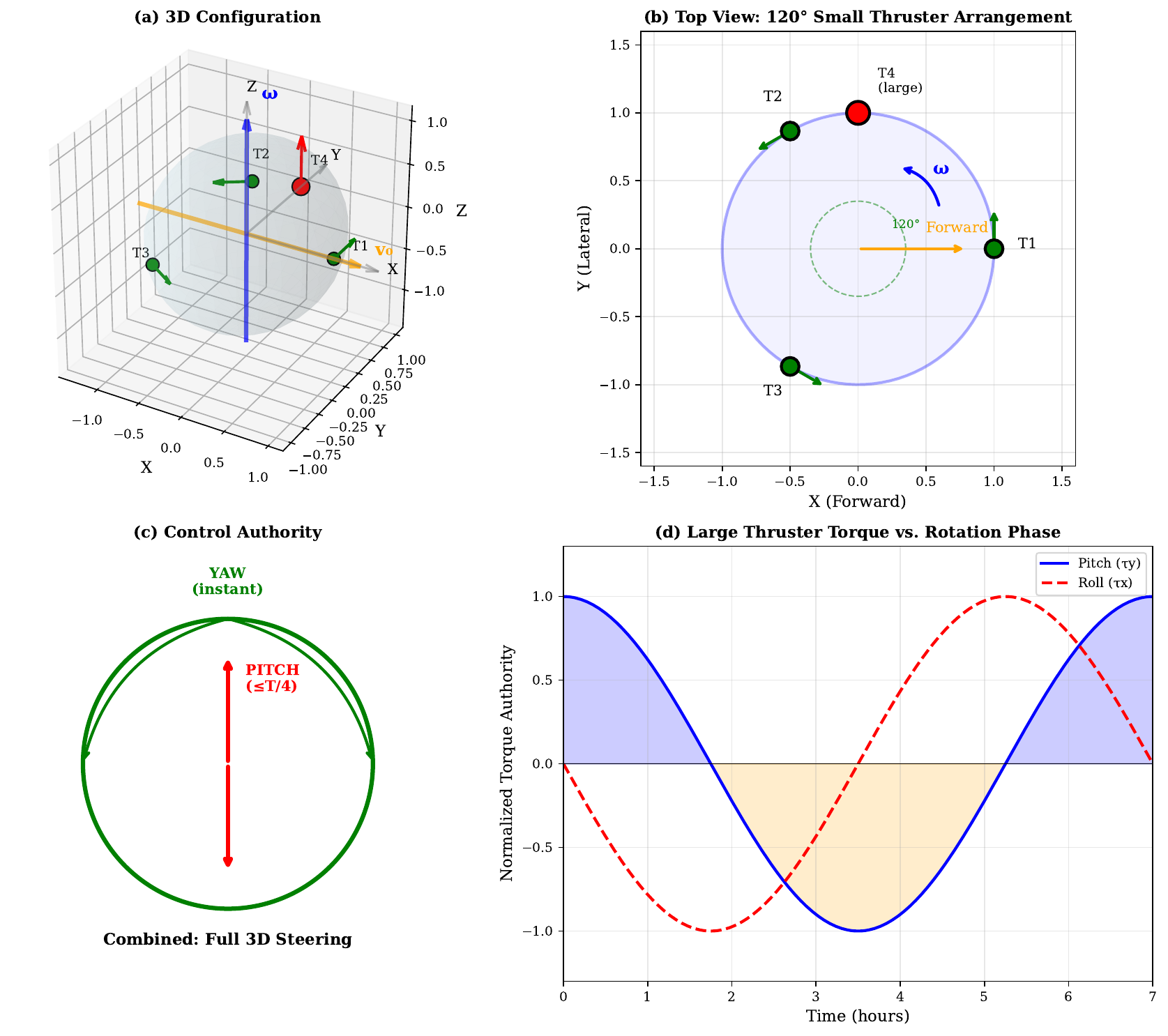}
    \caption{\textbf{Minimal four-thruster configuration for practical forward-cone steering.} \textbf{(a)} Three-dimensional view showing the spherical body, three small thrusters (green, T1--T3) at $120^\circ$ intervals in the equatorial plane used to synthesize transverse (in-plane) thrust vectors for heading control, one large thruster (red, T4) providing low-bandwidth attitude authority for long-horizon reorientation, the rotation axis $\omega$ (blue), and forward velocity $v_0$ (orange). \textbf{(b)} Top view emphasizing the $120^\circ$ spacing of the small thrusters, which enables bidirectional planar thrust synthesis under unidirectional constraints. \textbf{(c)} Conceptual control authority: instantaneous in-plane steering (green) and slower out-of-plane steering enabled by attitude shaping (red). \textbf{(d)} Large-thruster torque direction versus rotation phase, illustrating phase-dependent attitude authority that sweeps a transverse torque plane over one period.}
    \label{fig:config}
\end{figure}

\subsection{Response Times and Rotation Constraints}
\label{sec:response_rotation}

Theorems~\ref{thm:120}--\ref{thm:torque_req} and Proposition~\ref{prop:four_thruster_reachability} establish \emph{instantaneous} control authority---what forces and torques can be generated at any given moment. However, practical forward-cone reachability (Definition~\ref{def:reachability}) is a \emph{time-series} concept: it concerns trajectories achievable through thrust histories, not snapshots of available actuation. The spatial feedback geometry established above is necessary for trajectory control stability, but not sufficient on its own, since it does not prevent stochastic disturbances from accumulating into long-run trajectory divergence. A complete guarantee additionally requires that the closed-loop dynamics contract over time under the joint spatial-temporal structure~\cite{andree2020dynamic}. Establishing such contraction would require parameterizing the force equations over time, specifying a disturbance process, and verifying stability of the resulting dynamical system, which is beyond the present scope. Here we address only the minimal temporal consideration: the rotation period $T$, which governs how instantaneous authority in $\mathcal{B}$ maps to accumulated reachability in $\mathcal{I}$. Specifically, the rotation constraints explored below identify the regime under which the spatial authority results do not break down---ensuring that rotation-mediated control remains operationally viable. Together with the force and torque geometry of the preceding sections, this provides physically realisable force equations so that higher-level stability conditions rest on a well-posed foundation.

The effectiveness of rotation-mediated control depends on $T$ falling within an operationally viable range. Counter-intuitively, the limit $T \to 0$ does not converge to instantaneous static control---$\mathcal{I}$ and $\mathcal{B}$ remain distinct frames regardless of rotation rate---and the physical demands on the system diverge, making control progressively harder rather than easier. Specifically, three effects degrade controllability as $T$ decreases: the phase window for favorable torque direction narrows proportionally with $T$, making precise firing synchronization challenging for $T \lesssim 1$~hour; gyroscopic resistance scales as $\omega = 2\pi/T$, so required thruster force scales as $T^{-1}$; and centripetal acceleration at radius $R$ scales as $T^{-2}$, imposing structural constraints on thruster attachment. In the opposite limit, excessively slow rotation introduces its own penalties: maximum response time for out-of-plane maneuvers scales as $T/2$, so $T \gtrsim 100$~hours implies multi-day delays, and gyroscopic stability degrades as angular momentum $L \propto T^{-1}$ decreases, leaving the body susceptible to uncontrolled tumbling from stochastic disturbances.

These bounds suggest approximately $1 \lesssim T \lesssim 100$~hours, with performance heuristically favoring $T \sim 5$--$20$~hours---a range that coincides with observed cometary rotation periods~\cite{Kokotanekova2017Rotation}, suggesting that many potential target bodies may possess intrinsically suitable dynamics without requiring active spin modification. Bodies with $T$ outside this range would require either spin-up/spin-down maneuvers during the attachment phase or alternative control architectures with additional hardware. Table~\ref{tab:response} summarizes the resulting response times for a representative period of $T = 7$~hours: in-plane maneuvers are instantaneous, out-of-plane maneuvers require at most $T/4$ for favorable phasing, and arbitrary 3D heading changes require at most $T/2$. For interstellar cruise timescales, such delays are negligible.

\begin{table}[h!]
\caption{Response times for an example rotation period $T = 7$~h.}
\label{tab:response}
\begin{center}
\small
\begin{tabular}{@{}lll@{}}
\toprule
Maneuver & Time & Mechanism \\
\midrule
In-plane & Instant & Small thrusters \\
Out-of-plane & $\leq T/4$ & Large thruster + rotation \\
Arbitrary 3D & $\leq T/2$ & Combined \\
\bottomrule
\end{tabular}
\end{center}
\end{table}

\section{Application to Solar System Encounter}
\label{sec:application}

Having established the core control-theoretic results in the preceding sections, we now turn to operational interpretation: how encounter geometry, restart constraints, and observability shape feasible operating modes. The four-thruster configuration applies broadly to any naturally rotating body with forward motion; here we specialize to the motivating case of an interstellar object transiting a stellar system, where trajectory geometry and the presence of a dominant gravitational primary jointly constrain both feasible operations and preferred configurations. To keep the main narrative compact, we collect the remaining formal statements and proofs in two appendices---Appendix~\ref{sec:thruster_theory} for the three-jet system and Appendix~\ref{sec:axial_thrust_theory} for the axial jet---and reference them where used with visual guidance where helpful.

\subsection{Mission Architecture}
\label{sec:mission_arch}

A minimum viable architecture for comet-based interstellar exploration can be organized into five stages:

\begin{enumerate}
    \item \textbf{Detection}: Identify an incoming ISO via survey telescopes (e.g., the Vera C. Rubin Observatory)~\cite{engelhardt2017population}.
    
    \item \textbf{Intercept}: Launch a fast intercept mission to rendezvous with the ISO during its Solar System passage~\cite{landau2023iso,loeb2025intercepting}.
    
    \item \textbf{Attachment}: Deploy a four-thruster assembly onto the cometary surface.
    
    \item \textbf{Fuel processing}: Extract and process volatiles from cometary ices into propellant.
    
    \item \textbf{Transit and encounter}: Navigate through the target stellar system, executing reconnaissance flybys or gravity assists as mission objectives require.
\end{enumerate}

Table~\ref{tab:architecture} summarizes how this architecture shifts the trade space from launch $\Delta V$ and Earth-carried propellant toward longevity, robustness, and control with minimal added hardware.

\begin{table}[h!]
\caption{Comparison of mission architectures.}
\label{tab:architecture}
\centering
\small
\setlength{\tabcolsep}{4pt}
\renewcommand{\arraystretch}{1.05}
\begin{tabular}{@{}lll@{}}
\toprule
Property & Traditional & Comet-Based \\
\midrule
Launch $\Delta V$ & $>$50~km/s & $\sim$0 incremental \\
Fuel source & Earth-carried & In-situ processing \\
Payload limit & Rocket equation & Structural capacity \\
Thrusters needed & 6+ (full 6-DOF) & 4 (minimal) \\
\bottomrule
\end{tabular}
\end{table}

These potential benefits come with corresponding technical constraints. Rendezvous with a hyperbolic object requires rapid detection and response, though recent analyses suggest feasibility with existing or near-term propulsion~\cite{landau2023iso,loeb2025intercepting,linares2024statite}. Beyond $\sim$10~AU, solar power is often impractical for sustained high-power operation, favoring nuclear sources (RTGs or compact fission systems) that can sustain operation over multi-decade horizons~\cite{gibson2018krusty}. Securing thrusters to an irregular, potentially outgassing body presents significant engineering challenges, though surface operations on comets have been demonstrated by Philae/Rosetta~\cite{altwegg2019rosina}.

The most stringent requirements arise during \emph{encounters}---transits through a stellar system that involve close approaches to planets or the primary. Scientific flybys, gravity assists, and communication windows depend on meeting encounter geometries while maintaining hazard clearance margins. The remainder of this section develops how these constraints shape operating regimes and, in the near-primary phase, a preferred propulsion configuration.

\subsection{Mission Trajectory Geometry}
\label{sec:trajectory_geometry}

The four-thruster architecture derives additional justification from the geometric constraints governing stellar system transit. At ISO-class velocities (up to $\sim$80~km/s), the inner Solar System is crossed in weeks, and large lateral $\Delta V$ for close planetary flybys may be prohibitive. Trajectory geometry therefore constrains encounter distances, and slope selection becomes a first-order mission parameter.

\paragraph{Debris avoidance.}
Let the debris structure have characteristic heliocentric radius $r_{\rm belt}$ and effective vertical half-opening angle $i_{\rm eff}$, giving conservative effective half-thickness $z_{\rm eff}\approx r_{\rm belt}\tan i_{\rm eff}$. For an opposite-side crossing---entering above the debris midplane and exiting below it (or vice versa)---the minimum required inclination is
\begin{equation}
\theta^* \;=\; \arctan\!\left(\frac{2z_{\rm eff}}{d}\right),
\end{equation}
where $d$ is the in-plane chord length through the belt region. For Kuiper-belt-scale parameters ($r_{\rm belt}\sim 50~{\rm AU}$, $d\sim 100~{\rm AU}$, $i_{\rm eff}\sim 5^\circ$--$7^\circ$), this yields $\theta^*\sim 5^\circ$--$7^\circ$.

For inner-system targets, encounter range depends on both in-plane alignment and vertical offset. A trajectory crossing the reference plane near a priority target minimizes vertical offset for that encounter. For targets away from the crossing point, vertical offset depends on horizontal alignment, with practical benefit of slope reduction depending on the trajectory and planetary positions. A reduced slope may independently benefit alignment-sensitive measurements. Combining debris avoidance with these reconnaissance objectives suggests a slope \emph{profile}: higher slope ($\sim$7$^\circ$) during belt transit; a potentially reduced slope ($\sim$5$^\circ$) during inner-system passage; then restored higher slope for outbound crossing.

\paragraph{Design implications.}
This geometry suggests an asymmetry in propulsion requirements:
\begin{enumerate}
\item \textbf{Steering demands are comparatively low-bandwidth.} Degree-scale trajectory modifications (e.g., $2\times\sim$2$^\circ$ adjustments) can be executed over weeks to months. This is the natural domain of the large thruster (T4): slow attitude shaping that reorients the steering plane.

\item \textbf{Stabilization demands can be tighter.} Once a belt-crossing corridor and encounter geometry are selected, clearance margins must be maintained against accumulated perturbations. This is the natural domain of the small thrusters (T1--T3): continuous, precise trim and disturbance rejection.

\item \textbf{Trajectory replanning is costly in flight.} The coupled requirements of debris avoidance, encounter placement, and instrument geometry may not be readily re-optimized during transit, so operations often emphasize maintaining the planned corridor.
\end{enumerate}

\subsection{Operating Regimes}
\label{sec:operating_regimes}

The propulsion system supports a small set of operating regimes that are only weakly dependent on the local gravitational environment and instead reflect actuator constraints (restart reliability, minimum stable flow, ignition transients) and mission-phase objectives (persistent disturbance rejection versus discrete retargeting).

\paragraph{(R0) Off / pulsed operation.}
Thrusters remain off except for discrete maneuvers. This regime minimizes propellant expenditure in quiescent intervals, but incurs repeated ignition cycles and exposure to startup transients. In the three-jet geometry, ignition asynchrony produces a non-negligible impulse error that scales with commanded throttle (Proposition~\ref{prop:ignition_transient}), making high-throttle cold starts undesirable in precision-navigation windows. Moreover, cold-start synthesis can require thrust inflation relative to warm differential steering, with a worst-case thrust inflation of a factor of two for mid-gap force directions (Proposition~\ref{prop:cold_start_efficiency}).

\paragraph{(R1) Symmetric warm standby (small-jet triad).}
The three in-plane thrusters operate at a common baseline $u_1=u_2=u_3=u_0$. By Proposition~\ref{prop:warm_zero_force}, the net in-plane force vanishes exactly, while maintaining readiness for rapid differential actuation. This means that nonzero plume activity is present without inducing transverse acceleration---a decoupling that has consequences for observational interpretation (Section~\ref{sec:signatures}). Warm standby is operationally attractive when frequent restarts are costly or when persistent, low-level disturbance rejection is anticipated.

\paragraph{(R2) Warm differential steering (small-jet triad).}
Steering is implemented by reallocating thrust around a warm baseline, writing $u_i=u_0+\delta_i$, so the commanded in-plane force depends on the differentials (Proposition~\ref{prop:differential_decomposition}). Two sub-regimes are operationally relevant:
\begin{enumerate}
    \item \textbf{(R2a) Thrust-neutral reallocation:} $\sum_i \delta_i=0$ changes acceleration direction without changing total small-jet throughput (Proposition~\ref{prop:thrust_neutral_unique}). This is relevant when feed systems prefer approximately steady mass flow.
    \item \textbf{(R2b) Decrement-only steering:} for sufficiently high warm bias, steering may be executed using only bounded reductions relative to $u_0$ (Proposition~\ref{prop:decrement_only}), which is attractive when avoiding ramp-up transients is critical.
\end{enumerate}

Two bias choices are natural depending on the constraint of interest. Mid-throttle bias $u_0=u_{\max}/2$ maximizes symmetric headroom for both upward and downward differentials (Proposition~\ref{prop:headroom_mid}). High bias $u_0=u_{\max}$ maximizes decrement-only authority (Proposition~\ref{prop:decrement_only}) and is natural when the operational objective is to avoid ignition or ramp-up transients.

\paragraph{(R3) Retargeting (large jet).}
The axial thruster provides the primary authority for slow reorientation and coarse trajectory shaping, typically executed as intermittent maneuvers on timescales longer than the host rotation period.

\paragraph{(R4) Bias operation (large jet).}
When continuous operation is preferred (e.g., due to feed-system stability or persistent disturbance rejection), the axial thruster may be held at low duty cycle or low throttle to provide a steady bias acceleration. The preferred bias direction is environment-dependent; in the near-primary encounter regime, two-body dynamics single out the radial line as a convenient bias channel with reduced cross-track coupling; the corresponding formal statements are collected in Appendix~\ref{sec:axial_thrust_theory}.

Operationally, a natural sequencing is intermittent retargeting in (R3), followed by return to low-level bias operation (R4) once the trajectory converges, thereby reducing cross-track excitation in the near-primary regime (Section~\ref{sec:operational_config}).

\paragraph{Force--torque implications.}
In symmetric warm standby (R1), the small-jet triad cancels net in-plane \emph{force} exactly (Proposition~\ref{prop:warm_zero_force}), but generally does not cancel net \emph{torque} about the spin axis, implying slow secular changes in rotation unless corrected. Conversely, continuous axial-jet operation (R4) produces a nonzero translational impulse and therefore secular $\Delta V$ drift. Practically neutral ``standby'' behavior therefore requires either phase-aware firing schedules that reduce net impulse over a rotation cycle, or explicit opposing capability to decouple attitude control from translation. These considerations shape both feasible operations and the resulting signature profiles.

\subsection{Operational Configuration Near a Primary}
\label{sec:operational_config}

Section~\ref{sec:operating_regimes} defined the principal operating regimes supported by the four-thruster system. We now specialize to the encounter phase near a dominant gravitational primary, deriving the corresponding preferred configuration and authority limits.

\paragraph{In-plane authority: decrement-only steering.}

The three small thrusters provide in-plane steering authority. As established in Appendix~\ref{sec:thruster_theory}, symmetric warm firing ($u_1=u_2=u_3=u_0$) produces zero net force (Proposition~\ref{prop:warm_zero_force}), while steering is achieved through differential reallocation (Proposition~\ref{prop:differential_decomposition}). During encounter-phase stabilization, where avoiding ramp-up transients is favorable, the relevant near-primary specialization is the high-bias decrement-only regime (R2b).

In the high-bias warm state ($u_i \approx u_{\max}$), the small-jet triad admits a decrement-only steering mode: any in-plane force direction can be generated using only bounded throttle reductions relative to the warm baseline, with maximum achievable force magnitude set by the baseline level. This provides full latent in-plane authority while avoiding ramp-up transients in the most time-sensitive windows. The formal statement and proof are given in Proposition~\ref{prop:decrement_only}, together with an explicit extremal construction (Corollary~\ref{cor:lower_opposite_max}) in Appendix~\ref{sec:thruster_theory}.

\paragraph{Out-of-plane authority: sun-line alignment.}

In the near-primary encounter regime, the radial line toward the primary defines a dynamically distinguished axis. Let $\mathbf r(t)$ and $\mathbf v(t)$ denote position and velocity relative to the primary, with $\hat{\mathbf r}\equiv \mathbf r/\|\mathbf r\|$ and specific angular momentum $\mathbf h\equiv \mathbf r\times \mathbf v$. When the axial jet is aligned with the radial line, its actuation primarily changes orbital energy and timing while (in the two-body approximation) avoiding direct cross-track injection by preserving $\mathbf h$. Appendix~\ref{sec:axial_thrust_theory} collects the formal statements: radial thrust preserves $\mathbf h$ (Lemma~\ref{lem:radial_h}); gravity induces a natural bias structure yielding signed headroom in the radial channel (Proposition~\ref{prop:gravity_bias_authority}); and among fixed thrust directions this motivates radial (sun-line) alignment as the default during encounter-phase stabilization (Corollary~\ref{cor:sun_pointed}).

\paragraph{Unified stabilization configuration.}

Combining the in-plane and out-of-plane results yields a natural base configuration during encounter. A natural stabilization configuration during near-primary encounters is:
(i) operate the small-jet triad at high warm bias to retain full in-plane authority via decrement-only steering (Proposition~\ref{prop:decrement_only});
(ii) align the axial jet with the sun-line to concentrate high-thrust authority in the radial channel while avoiding direct cross-track injection (Corollary~\ref{cor:sun_pointed});
and (iii) treat the along-track direction as primarily momentum-dominated over short horizons.
The formal corollary statement is given in Appendix~\ref{sec:axial_thrust_theory} (Corollary~\ref{cor:base_config}).

In the encounter regime, the control channels therefore decompose naturally: in-plane authority from the small-jet triad in warm, high-bias operation; radial authority from the large jet structured by the gravitational field; and along-track stability dominated by existing momentum. This decomposition is induced by the encounter geometry and dynamics rather than imposed as a design constraint.

\section{Signature Characterization}
\label{sec:signatures}

Any propulsion system modifies the observable behavior of its host platform. This section characterizes photometric, astrometric, thermal, and spectroscopic signatures implied by the four-thruster architecture, treating the host comet as an observational baseline perturbed by controlled thrust and associated power and processing subsystems.

We interpret signatures using the operating-regime taxonomy defined in Section~\ref{sec:operating_regimes}; Figure~\ref{fig:signature_regimes} provides a schematic regime-to-pattern map used throughout this section.

\begin{figure}[h!]
    \centering
    \includegraphics[width=\textwidth]{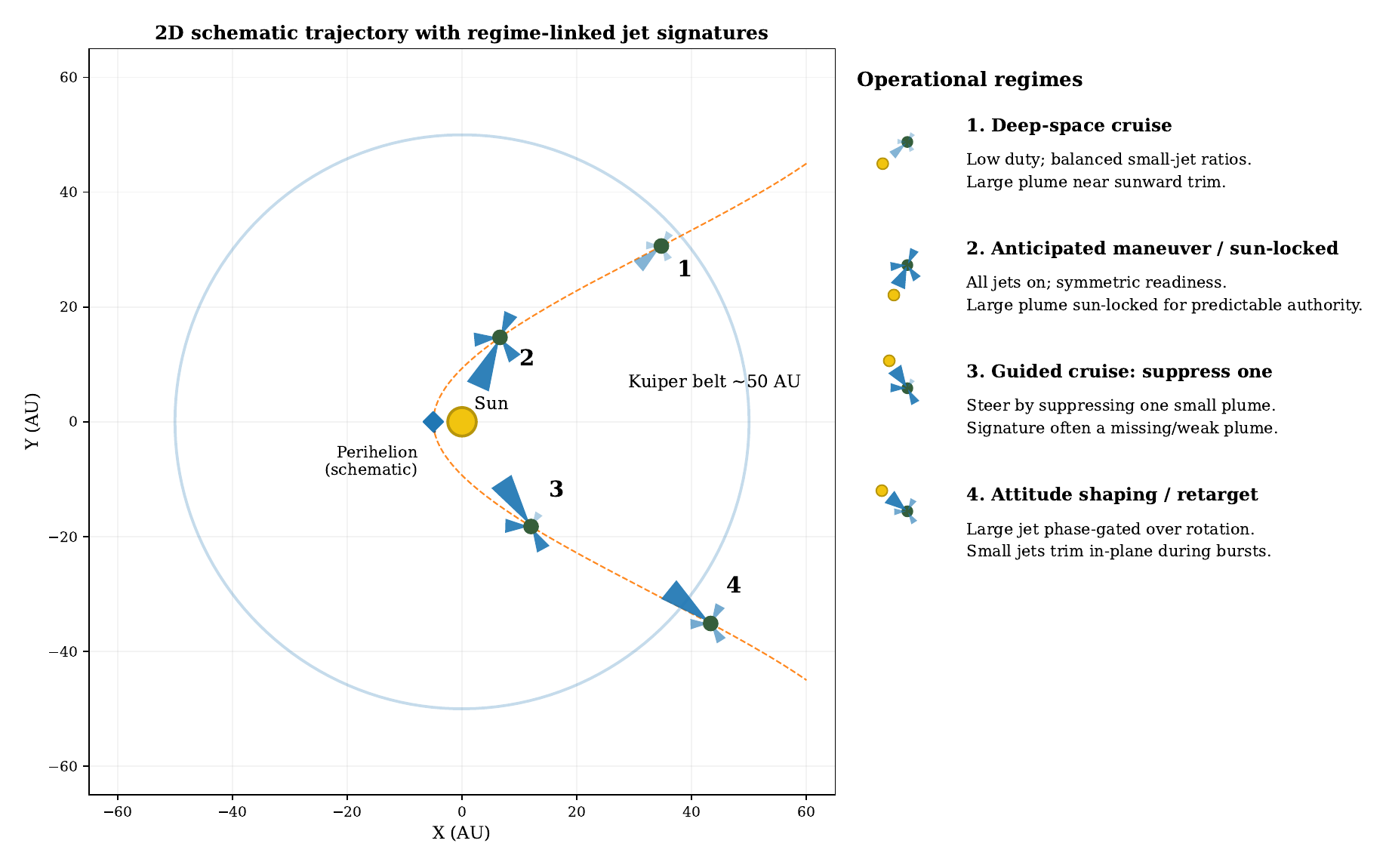}
    \caption{\textbf{Cone-limited navigation with regime-linked jet signatures (schematic).}
    Left: 2D heliocentric trajectory with four operating points (1--4); Kuiper belt radius ($\sim$50~AU) shown for scale. Right: corresponding jet-pattern icons for a $120^\circ$ three-jet system.
    \textbf{(1)} warm standby (R1), \textbf{(2)} elevated warm bias (R1/R2 readiness), \textbf{(3)} warm differential steering (R2), and \textbf{(4)} retargeting with axial-jet authority (R3) with possible thrust-neutral trim (R2a). Icon lengths/opacity indicate relative duty. The mapping from thrust allocations to net force is governed by Proposition~\ref{prop:warm_zero_force} and Proposition~\ref{prop:differential_decomposition}; fuel/brightness asymmetries under lowering versus raising follow Proposition~\ref{prop:lower_vs_raise}; and thrust-neutral reallocation follows Proposition~\ref{prop:thrust_neutral_unique}. Jet directions in the schematic are simplified as radially outward for visual clarity; the actual mounting geometry (tangential for T1--T3, axial for T4) is shown in Figure~\ref{fig:config} and discussed physically in Section~\ref{sec:nutation_architecture}.}
    \label{fig:signature_regimes}
\end{figure}

\subsection{Observable Signatures}
\label{sec:observable_signatures}

\paragraph{Photometric signatures.}

\textbf{Small thrusters (T1--T3).}
The three in-plane thrusters produce exhaust plumes whose brightness and relative intensities depend on regime:

\begin{itemize}
    \item \textbf{Warm standby (R1).} Three persistent emission features at $120^\circ$ spacing with comparatively stable integrated brightness. Net in-plane translation is exactly zero by Proposition~\ref{prop:warm_zero_force}, so the plume pattern is present without a corresponding transverse acceleration.

    \item \textbf{Warm differential steering (R2).} The brightness distribution becomes asymmetric as thrust is reallocated. A consequence of Proposition~\ref{prop:lower_vs_raise} is that, for a fixed steering magnitude, steering implemented by \emph{suppression} of one jet is more fuel-efficient than steering implemented by \emph{enhancement}. As a result, commanded acceleration can correlate more strongly with which plume is \emph{weaker} than with which plume is brighter.

    \item \textbf{Thrust-neutral reallocation (R2a).} By Proposition~\ref{prop:thrust_neutral_unique}, nonzero in-plane acceleration can be produced without changing total small-jet throughput. Photometrically, this would correspond to a redistribution of plume intensity at approximately constant integrated brightness, even while astrometric residuals accumulate (see below).

    \item \textbf{Decrement-only steering (R2b).} When operating about a high warm baseline, steering is achieved using only bounded reductions (Proposition~\ref{prop:decrement_only}). Photometrically, this would preferentially produce one or more \emph{dimming} plumes rather than a single strongly brightened plume during maneuvering.

    \item \textbf{Off / pulsed operation (R0).} Episodic brightenings corresponding to discrete burns, potentially involving only one or two jets (Proposition~\ref{prop:cold_start_two_thrusters}). Short-lived asymmetries during startup are consistent with Proposition~\ref{prop:ignition_transient}.
    
\end{itemize}

\textbf{Large thruster (T4).}
The axial thruster contributes an additional plume whose temporal structure is regime-dependent:
\begin{itemize}
    \item \textbf{Retargeting (R3).} Intermittent higher-throttle events can produce pronounced, phase-locked brightenings as the body rotates.
    \item \textbf{Bias operation (R4).} Continuous low-level firing yields a persistent brightness contribution that is periodic at the rotation frequency.
    \item \textbf{Off.} No contribution beyond natural outgassing.
\end{itemize}

Because the $120^\circ$ angular pattern is design-fixed while intensity ratios are control-driven, a potentially informative engineered signature would be a stable three-plume geometry whose \emph{relative} brightness varies with maneuvering and disturbance rejection, with weaker dependence on illumination than typical natural outgassing. It should be noted, however, that natural comets can also exhibit persistent jet structures from localized active regions, so geometric regularity alone would not constitute conclusive evidence of engineering.

\paragraph{Astrometric signatures.}

Thrust produces non-gravitational acceleration that appears as residuals in trajectory fits assuming purely gravitational dynamics. The gross character depends on duty cycle:

\begin{itemize}
    \item \textbf{Continuous operation (warm/bias regimes).} Secular residuals accumulating over the observational arc.
    \item \textbf{Pulsed operation (R0/R3).} Step-like changes in fitted elements associated with discrete impulses.
    \item \textbf{Near-primary radial alignment.} In the two-body approximation, radial thrust preserves the orbital plane (Lemma~\ref{lem:radial_h}); consequently, a bias-aligned channel tends to concentrate residuals in along-track timing/energy rather than cross-track drift, relative to transverse thrust.
\end{itemize}

A potentially informative case is thrust-neutral reallocation (R2a). Under this mode, significant astrometric residuals could occur \emph{without} commensurate changes in integrated small-jet brightness, because the maneuver redistributes plume intensity rather than increasing total throughput. Since total small-jet thrust remains constant under thrust-neutral steering, the photometric--astrometric decoupling is a direct, testable consequence of the $120^\circ$ geometry and would be difficult to reproduce with illumination-driven activity models.

In symmetric warm standby (R1), Proposition~\ref{prop:warm_zero_force} implies that any astrometric residual must arise from second-order effects (mismatch, misalignment, or plume--surface coupling) rather than commanded transverse acceleration.

\paragraph{Thermal signatures.}

Sustained operation at large heliocentric distances (beyond $\sim$10~AU) generally requires a non-solar primary power source in many architectures. Regardless of implementation (RTG or compact fission), the observational implication is a compact, body-fixed waste-heat emission region at radiator temperatures of order 300--500~K, approximately constant or slowly varying with heliocentric distance in contrast to solar-heated surfaces (which scale roughly as $r^{-1/2}$).

\paragraph{Spectroscopic signatures.}

Propellant extraction and processing alter the volatile inventory of the host body. Observable effects could include selective depletion of volatiles consumed as feedstock, abundance ratios inconsistent with typical cometary taxonomies or the body's inferred thermal history, and emission from intermediate species if fuel conversion is incomplete.

Natural comets, however, are known to exhibit compositional heterogeneity and seasonal variation in volatile release, so atypical ratios alone would not be sufficient to distinguish processing from natural variability without supporting evidence from other channels.

\subsection{Measurement and Discrimination Considerations}

\paragraph{Periodicity considerations.}

Detecting rotation-locked modulation requires high-cadence photometry---sampling every 1--2 hours for a body with a 7-hour rotation period---with precision of order 0.05~mag, sustained over multiple rotation cycles. These requirements are challenging but feasible for bright objects within $\sim$10~AU using current ground-based facilities.

A key discriminant is \emph{cross-channel consistency}: rotation-locked photometric modulation is most informative when it is contemporaneous with (or predictive of) epochs of anomalous astrometric residuals, strengthening the case that both arise from the same underlying thrust schedule rather than illumination-driven variability.

\paragraph{Cruise-phase duty cycle and signature suppression.}
For long-distance missions on multi-year to multi-decade trajectories, the nominal operating regime is expected to be \emph{cruise} rather than active maneuvering (Figure~\ref{fig:signature_regimes}, panel 1). Cruise phases occur predominantly at large heliocentric distances---beyond $\sim$10~AU for most of the transit---where both solar illumination and angular resolution from inner-system observers are greatly reduced. Efficient navigation further suppresses signatures: advance planning and smooth trajectory design minimize the frequency and amplitude of corrective maneuvers, so the propulsion system would typically operate in \emph{off}, \emph{idle}, or near-idle ``warm'' modes for extended periods, with higher-throttle firings occurring only intermittently for trajectory retargeting or disturbance rejection (panels 2--4).

The observational implications are compounding: intrinsically weak propulsion signatures are viewed at distances where detection is already challenging. Under nominal cruise operations, signatures are expected to be quasi-steady or confined to low-duty-cycle episodes. In symmetric warm idle, Proposition~\ref{prop:warm_zero_force} guarantees zero net force despite nonzero plume activity, so astrometric anomalies would be absent even though faint photometric signatures persist in principle. For a practical observer in the inner planetary system, the combination of large distance, low duty cycle, and low thrust levels makes cruise-phase detection exceedingly unlikely. Detectability would therefore be greatest during the brief encounter phase, when active maneuvering (differential steering per Propositions~\ref{prop:differential_decomposition}--\ref{prop:lower_vs_raise}) or sustained bias-thrust regimes produce stronger signatures at closer range.

\paragraph{Spatial resolution considerations.}

A jet plume extending $10^5$~km subtends approximately 14~arcsec at 10~AU, 3~arcsec at 50~AU, and 1.4~arcsec at 100~AU. Ground-based facilities (seeing-limited to $\sim$1~arcsec) can marginally resolve such structures within $\sim$50~AU; space-based observatories achieve better resolution. For unresolved sources, the signatures described above manifest as integrated photometric variations, analogous to asteroid light curve analysis.

\paragraph{Distinguishing signatures from natural activity.}
\label{sec:distinction_comparison}

Natural comets exhibit complex, time-variable activity driven by nucleus shape, spin state, and solar illumination. The propulsion-induced signatures described above would not be uniquely identifiable from any single observable. Instead, the most useful discriminants are those that (i) remain comparatively stable across changing illumination conditions, (ii) co-vary across independent measurement channels, and (iii) are difficult to reproduce with physically grounded outgassing models.

Table~\ref{tab:signatures} summarizes the potentially informative differences in compact form. Most entries---jet geometry, brightness variability, thermal emission, and compositional evolution---are discussed in the preceding subsections. Two channels merit additional comment because they involve cross-observable structure that the table cannot fully capture:

\begin{itemize}
    \item \textbf{Astrometric--photometric correlation}: Co-occurrence of non-gravitational accelerations with rotation-locked photometric changes is more constraining than either observable alone. The photometric--astrometric decoupling in Proposition~\ref{prop:thrust_neutral_unique} (nonzero acceleration without an integrated brightness change) would be particularly informative, as it is difficult to reproduce with standard outgassing models.

    \item \textbf{Flow-speed signatures}: Natural coma expansion velocities are typically $\sim0.5$--$1~\mathrm{km\,s^{-1}}$, depending on species and heliocentric distance~\cite{bockelee2004borrelly,jockers2011encke}. Conventional engineered thrusters expelling stored propellant generally achieve higher exhaust speeds---a few $\mathrm{km\,s^{-1}}$ for chemical propulsion and substantially higher for electric thrusters---potentially distinguishable via line widths, Doppler shifts, or collimation when spatially resolved~\cite{goebel2008thruster}. This contrast is conditional: an engineered system could instead direct or modulate native outgassing (e.g., via apertures or thermal control), in which case anisotropy, intermittency, or phase-angle invariance may be more diagnostic than absolute flow speed.
\end{itemize}

The evidentiary weight comes from \emph{co-occurrence across independent channels} and consistency with illumination-driven physical models, rather than from any single feature. Each individual signature can, in principle, be produced by natural processes; the discriminating power lies in their joint occurrence and internal consistency.

\begin{table}[h!]
\caption{Signature comparison: typical natural comet activity versus a comet-like body hosting sustained thrust.}
\label{tab:signatures}
\centering
\scriptsize
\setlength{\tabcolsep}{3pt}
\renewcommand{\arraystretch}{1.06}
\begin{tabular}{@{}p{0.22\textwidth}p{0.35\textwidth}p{0.35\textwidth}@{}}
\toprule
Observable & Typical natural activity & With sustained thrust \\
\midrule
Jet geometry &
Often irregular; may show persistent jets &
More persistent structure tied to thruster layout \\
Brightness variability &
Illumination-driven; can be rotation-modulated &
Phase-locked component less tied to illumination \\
Non-grav.\ acceleration &
Coupled to outgassing and illumination &
Residuals consistent with thrust duty cycle \\
Flow speed (if measurable) &
Typically $\sim$0.5--1~km\,s$^{-1}$ &
Generally higher (chemical: few km\,s$^{-1}$; electric: higher) \\
Thermal emission &
Solar-equilibrium trend ($\propto r^{-1/2}$) &
Localized radiator emission; weak $r$-dependence \\
Volatile ratios &
Taxonomic norms; heterogeneous &
Potential drift under sustained processing \\
\bottomrule
\end{tabular}
\end{table}

\paragraph{Forward modeling.}

The signatures described above can be quantitatively predicted through forward modeling. By jointly simulating thrust schedules, rotation phase, illumination geometry, and natural volatile release, one can generate synthetic photometric light curves, astrometric residuals, thermal evolution, and spectroscopic time series for direct comparison with observations. Key modeling parameters include thrust-to-mass ratio and duty cycle; idle-mode selection (off, symmetric warm per Proposition~\ref{prop:warm_zero_force}, or radial bias per Corollary~\ref{cor:sun_pointed}); steering strategy (differential per Propositions~\ref{prop:differential_decomposition}--\ref{prop:lower_vs_raise}, thrust-neutral per Proposition~\ref{prop:thrust_neutral_unique}, decrement-only per Proposition~\ref{prop:decrement_only}, or cold-start per Propositions~\ref{prop:cold_start_two_thrusters}--\ref{prop:cold_start_efficiency}); rotation period and spin-axis orientation; and natural outgassing rate and spatial distribution.

Beyond observability, forward models are essential for operational design. They determine which control laws are feasible under restart constraints (Proposition~\ref{prop:ignition_transient}) and plume--surface interactions, quantify the propellant cost of different warm-mode strategies, and map performance limits such as achievable curvature, response latency, and cross-track drift. In this way, modeling links engineering choices to mission envelopes---identifying which cruise-phase objectives (e.g., long-horizon retargeting, inclination changes, timing adjustments) are achievable for a given body and thruster set, and which phases would require supplemental strategies.

Because the four-thruster configuration is minimal under broad geometric and operational constraints, it represents a plausible convergent design point for long-horizon navigation using a naturally rotating, forward-moving body. The relevance of this observation is methodological rather than interpretive: minimal architectures imply structured, \emph{a priori} consequences that can be specified from engineering assumptions and tested against data via forward simulations. This also provides a practical basis for structured multiple-anomaly screening during characterization, as no single feature is diagnostic in isolation. The strongest constraint comes from cross-channel consistency and timing alignment---for example, whether photometric modulation, astrometric residual structure, thermal behavior, and compositional evolution co-vary in ways that are difficult to reproduce with illumination-driven activity alone---motivating escalation criteria for follow-up observations defined in advance rather than inferred post hoc~\cite{eldadi2025}.

We emphasize that the signature analysis in this section is intended to characterize the observational consequences of the proposed engineering architecture, not to establish a biosignature or technosignature detection framework. Whether any particular set of observational anomalies could be interpreted as evidence for artificial modification of a small body is a separate question requiring its own careful assessment, including detailed null-hypothesis modeling against the full range of known natural phenomena. Such an investigation is beyond the scope of the present work.

\subsection{Alternative Architecture: Nutation-Mediated Steering}
\label{sec:nutation_architecture}

The signature analysis above is developed under the principal-axis assumption
(Section~\ref{sec:frames_kinematics}), which reduces the attitude state to a single phase
variable and yields the clean separation between in-plane steering and out-of-plane
reorientation that underlies the four-thruster architecture. Relaxing this assumption leads
to an alternative design whose control structure and observable consequences differ in
instructive ways.

When non-principal-axis rotation is present or deliberately induced, the instantaneous spin
axis precesses around the angular momentum vector~$\mathbf{L}$, tracing a body cone with
nutation half-angle~$\theta$ governed by the ratio of rotational kinetic energy
to~$\|\mathbf{L}\|$. A body-fixed axial jet then sweeps through this cone in inertial space
over each precession cycle, gaining broader directional authority as~$\theta$ increases. This
suggests an inverted functional decomposition: the single large jet becomes the primary
translational steering actuator via nutation-mediated directional diversity, while the three
small jets at~$120^\circ$ intervals modulate spin dynamics---injecting rotational energy to
tighten the precession cone (reducing~$\theta$ for precision pointing) or extracting it to
widen the cone (increasing~$\theta$ for expanded steering authority). The minimum thruster
count is preserved: Theorem~\ref{thm:120} applies to torque synthesis under unidirectional
constraints just as it applies to force synthesis, so three small jets remain necessary for
full authority over spin-axis torques. More broadly, most force-synthesis results developed
in this paper---including differential decomposition
(Proposition~\ref{prop:differential_decomposition}), thrust-neutral steering
(Proposition~\ref{prop:thrust_neutral_unique}), and the operating-regime taxonomy of
Section~\ref{sec:operating_regimes}---are likely adaptable, since the underlying
conic-hull geometry is preserved when reinterpreted as torque allocation.

However, this architecture is not equivalent in complexity to the one developed above. As
noted in Remark~\ref{rem:dimensionality}, the principal-axis design reduces the effective
attitude state to a single phase variable, whereas the nutation-mediated design requires
active management of~$\theta$ as a controlled state coupled to the translational steering
envelope---elevating the attitude control problem from one to three effective dimensions.
The result is a hierarchically nested structure: the small jets control~$\theta$,
which sets the width of the steering cone, within which the large jet executes directional
thrust via phase selection over the precession cycle. This increased control-theoretic
complexity is the precise sense in which the principal-axis architecture is
the more minimal design: not in hardware count, which is identical, but in the
dimensionality of the coupled control problem that must be solved in real time.

The two architectures differ fundamentally in how thrust directions relate to
the body geometry. In the principal-axis design, no thruster is oriented
radially: the three small jets (T1--T3) fire tangentially in the equatorial
plane, producing direct translational steering forces, while the large jet
(T4) fires along the spin axis, generating torque through its off-axis moment
arm rather than direct translational authority. Out-of-plane steering arises
indirectly, as the moment arm sweeps the transverse torque plane over each
rotation cycle. Under nutation, these roles invert: because the spin axis
itself precesses around the angular momentum vector, T4's thrust direction
sweeps a cone in inertial space, converting it from a torque actuator into the
primary translational steering engine with directional authority widening as
the nutation angle~$\theta$ increases. The small jets correspondingly shift
from translational steering to spin-dynamics management, controlling~$\theta$
to set the width of the cone within which T4 can direct thrust. Intermediate
designs with tilted mounting would blend the two regimes, trading
instantaneous translational authority for enhanced torque coupling. This
functional inversion drives the signature differences described below.

Because the functional inversion primarily affects how thrust directions map
to inertial space, the observable signatures would shift in channels that
depend on attitude dynamics rather than on power or volatile processing.
Photometrically, the three small jets would produce intensity modulation at the
spin frequency rather than appearing as spatially fixed features, while the large jet would
generate a quasi-continuous plume whose apparent direction wobbles at the precession
frequency---potentially introducing two distinct periodicities rather than one. Symmetric
warm standby (Proposition~\ref{prop:warm_zero_force}) would reinterpret as zero net
\emph{torque} (maintaining constant~$\theta$) rather than zero net force, with
correspondingly different astrometric implications. Sun-line alignment (Corollary~\ref{cor:sun_pointed}) carries over as a constraint on the cone axis: the angular momentum vector $\mathbf{L}$ should be oriented along $\hat{\mathbf{r}}$, so that the instantaneous thrust direction sweeps a narrow cone centered on the sun-line. When the alignment is exact, the radial projection is uniform at $a_T\cos\theta$ around the precession cycle. In practice, residual misalignment between $\mathbf{L}$ and $\hat{\mathbf{r}}$ breaks this symmetry; throttle modulation synchronized to the precession phase---higher on the arc of closest radial alignment, lower on the opposite arc---can then concentrate the effective radial impulse while the cross-track components cancel on average over each cycle. Thermal, spectroscopic, and compositional signatures would
likely remain qualitatively unchanged, as these depend on power dissipation and volatile
processing rather than attitude dynamics. Future engineering and testing work should explore which body geometries, inertia ratios, and combinations of principal-axis and nutation-controlled architectures yield the most favorable flying devices for long-horizon missions. Observational evidence for non-principal-axis rotation in an interstellar comet, with a precession period consistent with this range, has been reported~\cite{scarmato2026wobble}.

\section{Discussion and Conclusion}
\label{sec:conclusion}

Motivated by interstellar comets as naturally translating, rotating platforms with volatile inventories, we study the minimal actuation needed to support long-horizon cruise-phase maneuvering and stabilization under unidirectional thrust constraints. Within the forward-cone reachability framework considered here, we show that three-dimensional steering for a rotating body with natural forward motion can be achieved with a minimal four-thruster architecture: three small units at $120^\circ$ intervals providing instantaneous in-plane steering, plus one larger unit that exploits rotation to supply low-bandwidth out-of-plane authority via attitude shaping. This separation between fast trim control and slow reorientation relaxes the need for instantaneous 6-DOF wrench controllability while retaining sustained navigational reachability over mission-relevant timescales. While developed for navigation, the same minimal-actuation architecture could also inform concepts for planetary defense by enabling sustained, low-complexity trajectory modification of hazardous comets or asteroids, complementing kinetic-impactor approaches demonstrated by DART~\cite{daly2023dart,thomas2023dart}.

Key results include: (i) a proof that the $120^\circ$ placement is the minimum-hardware solution for bidirectional planar thrust synthesis under unidirectional constraints (Theorem~\ref{thm:120}); (ii) bounds showing that geometric misalignment can substantially degrade worst-case authority; (iii) an explanation for why asymmetric thrust allocation (three small, one large) is structurally favored when frequent in-plane steering is separated from delayed out-of-plane reorientation (Corollary~\ref{cor:asymmetric}); and (iv) formal results establishing why sun-line alignment of the primary jet and high-bias warm modes for the small jets maximize stabilization authority while minimizing unintended trajectory perturbations (Corollary~\ref{cor:base_config}).

Trajectory-geometry constraints further motivate this division of labor. Debris-belt avoidance (implying $\sim$6$^\circ$ minimum slopes for Kuiper-belt-scale structures), node placement for inner-system reconnaissance, and propellant economy together yield a tightly constrained planning problem. Once a trajectory profile is selected, the dominant operational requirement shifts from episodic steering to continuous trajectory stabilization. In this regime, the large thruster (T4) naturally executes planned degree-scale slope adjustments at low frequency, while the three small thrusters (T1--T3) provide high-frequency in-plane stabilization. The minimal configuration thus reflects the structure of the cruise-phase problem: stability as the primary operational objective.

Given such an architecture and its operating modes, we then characterize the photometric, astrometric, spectroscopic, and thermal signatures that could follow. A central methodological point is that these observables are coupled outputs of a single underlying system rather than independent lines of evidence. Accordingly, diagnostic value depends on their \emph{joint} probability under physically grounded models, which is most naturally evaluated through probabilistic forward simulations rather than channel-by-channel interpretation. Given the short observational windows available for ISOs, simulation-based criteria may help identify observing conditions under which multilateral protocols for data sharing, joint analysis, and coordinated allocation of observing resources are most warranted~\cite{eldadi2025}.

Several limitations and extensions merit emphasis. More broadly, the geometric authority conditions established here---thruster placement ensuring full directional coverage, asymmetric allocation enabling separated control bandwidths---are necessary but not sufficient for active trajectory control stability. Sustained navigation additionally requires that the closed-loop tracking dynamics contract over time, so that stochastic disturbances do not accumulate into long-run trajectory divergence. The present paper establishes the spatial feedback conditions; temporal contraction under realistic forcing remains a separate analytical requirement. First, the present treatment idealizes time dependence: rotation is assumed steady-state and phase scheduling is evaluated over single cycles. Extending the reachability and authority results to stochastic perturbations---including outgassing torques, gravitational tides, and non-principal-axis rotation---would require a dynamical-systems analysis of the perturbed attitude dynamics, potentially including Lyapunov and robust-control tools. Second, the signature characterization here is intended to describe observational consequences of the proposed control system; it is not, by itself, a technosignature test. Attribution for any specific object requires dedicated null-hypothesis modeling and case-specific inference. Third, the principal-axis assumption that underpins the present architecture (Remark~\ref{rem:steady_spin_realism}) is a deliberate simplification: for natural cometary bodies, non-principal-axis rotation is arguably the more realistic baseline, and even an initially favorable spin state may not persist over mission-relevant timescales. The nutation-mediated architecture explored in Section~\ref{sec:nutation_architecture}---in which the thruster roles invert and the precession cone becomes a controlled steering parameter---therefore deserves further consideration as the physically more general setting. We have intentionally developed the principal-axis case first as a reduced-form starting point, both because it yields a lower-dimensional control problem amenable to closed-form analysis and because its results are conjectured to carry over, in adapted form, to the nutating case. A rigorous treatment of the nutation-controlled architecture, including stability analysis and steering-envelope characterization as a function of body inertia ratios, is left for future work.

Finally, practical performance will depend on device-level constraints, including plume--surface interactions, anchoring compliance, uncertain outgassing torques, sensor noise, and actuation delays. A natural next step is a suite of closed-loop simulations and design studies comparing feasible actuator suites and control laws, with forward models used to quantify achievable authority, propellant cost, robustness margins, and observability under realistic disturbances. Beyond interstellar exploration, the results may also inform small-body mission design where rotation-mediated control can reduce hardware requirements, including asteroid deflection and long-duration proximity operations.

\section*{Acknowledgments}

The findings, interpretations, and conclusions expressed herein are entirely those of the author and do not necessarily represent the views of the International Bank for Reconstruction and Development/World Bank, its Board of Executive Directors, or the governments they represent. No World Bank resources were used to conduct this research. This research received no specific funding. The author thanks Prof.\ Avi Loeb and acknowledges Dr.\ Frank Laukien for insights that informed this work.

\subsection*{Author Contributions}
B.P.J.\ Andr\'ee conceived the idea, developed the theoretical framework, performed all analyses, and wrote the manuscript.

\subsection*{Funding}
This research received no specific funding.

\subsection*{Conflicts of Interest}
The author declares that there is no conflict of interest regarding the publication of this article.

\section*{Supplementary Materials}

\noindent Accompanying work~\cite{andree2026stability} develops the stability theory and residual diagnostics for trajectory tracking under stochastic forcing.

\subsection*{Code Availability}
Reproducible code for the numerical illustrations is available as part of the supplementary materials. No experimental data were generated in this study.
Code and data to reproduce results and figures are archived on Zenodo at \url{https://doi.org/10.5281/zenodo.18986777}.

The supplementary file \texttt{analysis\_code.py} implements:

\begin{itemize}
    \item Verification of the $120^\circ$ coverage theorem
    \item Misalignment penalty calculations
    \item Control matrix computation and rank analysis
    \item Time-averaged torque space analysis
    \item Figure generation for all panels in Fig.~\ref{fig:config}
\end{itemize}

\subsection*{Additional Thruster Theory}
Supporting propositions and proofs are collected in Appendices~\ref{sec:thruster_theory} and~\ref{sec:axial_thrust_theory} below.

\appendix

\section{Three-Jet 120-Degree In-Plane System}
\label{sec:thruster_theory}

This appendix collects the formal results underlying the in-plane control authority of the three small thrusters. Given the number of geometric and operational claims used in the paper, the main text emphasizes synthesis and uses visual guidance (e.g., Figure~\ref{fig:signature_regimes}), while the appendix records the propositions, proofs, and auxiliary characterizations referenced throughout.

\subsection{Model and Notation}

We model three small thrusters producing in-plane forces along fixed unit directions
\begin{equation}
\mathbf d_1 = \begin{pmatrix}1\\0\end{pmatrix}, \qquad
\mathbf d_2 = \begin{pmatrix}-\tfrac{1}{2}\\ \tfrac{\sqrt{3}}{2}\end{pmatrix}, \qquad
\mathbf d_3 = \begin{pmatrix}-\tfrac{1}{2}\\ -\tfrac{\sqrt{3}}{2}\end{pmatrix},
\end{equation}
corresponding to tangential jets spaced at $120^\circ$ intervals. The net in-plane force is
\begin{equation}
\mathbf F = \sum_{i=1}^{3} u_i \mathbf d_i,
\end{equation}
where $u_i \ge 0$ denotes the thrust magnitude of thruster $i$ (no thrust reversal is possible).

\paragraph{Notation and conventions.}
Throughout this section:
\begin{itemize}
\item Indices $i, j, k$ range over $\{1, 2, 3\}$ unless otherwise stated.
\item Each $\mathbf d_i \in \mathbb{R}^2$ is a \emph{unit} vector: $\|\mathbf d_i\| = 1$.
\item For \emph{warm-bias} operation, we decompose $u_i = u_0 + \delta_i$, where $u_0 \ge 0$ is a 
      common baseline throttle and $\delta_i \in \mathbb{R}$ is the differential adjustment 
      (subject to the physical constraint $u_i \ge 0$).
\item We write $S \equiv \sum_{i=1}^{3} u_i$ for the total thrust. For fixed specific impulse $I_{sp}$, 
      propellant mass flow is proportional to $S$, so fuel-optimality corresponds to minimizing $S$ 
      for a given commanded force.
\item When distinguishing a commanded quantity from an incremental change, we use subscripts 
      (e.g., $u_{\mathrm{cmd}}$) rather than $\Delta$-prefixes for parameters, reserving 
      $\Delta(\cdot)$ exclusively for differences or errors (e.g., $\Delta S$, $\Delta u_i$).
\item Actuator saturation is modeled by bounds $0 \le u_i \le u_{\max}$, where $u_{\max} > 0$ 
      is the maximum thrust per thruster.
\end{itemize}

\paragraph{Time-varying notation (used in \S\ref{subsec:transients}).}
When analyzing ignition transients:
\begin{itemize}
\item $u_i(t)$ is the realized thrust of thruster $i$ at time $t$,
\item $u_{i,\mathrm{cmd}}(t)$ is the commanded thrust,
\item $\Delta u_i(t) \equiv u_i(t) - u_{i,\mathrm{cmd}}(t)$ is the actuation error,
\item $m > 0$ is the spacecraft mass (assumed constant over the maneuver),
\item $\mathbf J \equiv \int_0^{t_f} \mathbf F(t)\,dt$ denotes impulse over the interval $[0, t_f]$.
\end{itemize}

\subsection{Static Force Synthesis}

\begin{proposition}[Symmetric warm firing yields zero net translation]
\label{prop:warm_zero_force}
If all three thrusters fire at equal magnitude $u_0$ (symmetric warm mode),
\begin{equation}
u_1 = u_2 = u_3 = u_0,
\end{equation}
then the net in-plane force vanishes: $\mathbf F = \mathbf 0$.
\end{proposition}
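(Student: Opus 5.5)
The plan is to factor out the common magnitude and reduce the claim to a single geometric identity about the three unit directions. With $u_1=u_2=u_3=u_0$, linearity of the force map gives
\[
\mathbf F=\sum_{i=1}^3 u_0\,\mathbf d_i = u_0\sum_{i=1}^3 \mathbf d_i ,
\]
so it suffices to show $\mathbf d_1+\mathbf d_2+\mathbf d_3=\mathbf 0$, independently of $u_0\ge 0$.

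For this identity I will simply add the explicit coordinates fixed in the model. The first components sum to $1-\tfrac12-\tfrac12=0$, and the second components sum to $0+\tfrac{\sqrt3}{2}-\tfrac{\sqrt3}{2}=0$. Hence $\mathbf F=u_0\cdot\mathbf 0=\mathbf 0$.

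As a coordinate-free cross-check, and to explain why the result depends only on the $120^\circ$ geometry, I will identify $\mathbb R^2$ with $\mathbb C$ and write $\mathbf d_k\leftrightarrow e^{2\pi i (k-1)/3}$. The three directions are then the cube roots of unity, and their sum vanishes because $1+\zeta+\zeta^2=(\zeta^3-1)/(\zeta-1)=0$ for $\zeta=e^{2\pi i/3}\neq 1$. Equivalently, the sum $\mathbf s$ is invariant under rotation by $120^\circ$, which fixes only the origin, so $\mathbf s=\mathbf 0$. This version also shows the conclusion holds for any common rotation of the triad, in particular for the tangential mounting in Table~\ref{tab:config}, and for every $u_0$.

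There is no real obstacle here. The only point needing care is to state clearly that the result concerns net \emph{force}. It says nothing about net torque about the spin axis, since tangential jets at radius $R$ add their moments rather than cancel them. This matches the caveat in the force--torque discussion of Section~\ref{sec:operating_regimes}, so I will note it in a closing sentence.
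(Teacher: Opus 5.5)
Your proposal is correct and is essentially the paper's proof: factor out $u_0$, then verify $\mathbf d_1+\mathbf d_2+\mathbf d_3=\mathbf 0$ by adding the explicit coordinates. The cube-roots-of-unity cross-check and the closing torque caveat are not needed for the proof. The torque caveat is consistent with the paper's force--torque discussion.
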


\begin{proof}
Substitute the symmetric condition into the force equation and factor:
\[
\begin{aligned}
\mathbf F &= \sum_{i=1}^{3} u_0 \mathbf d_i \\
          &= u_0 \bigl(\mathbf d_1 + \mathbf d_2 + \mathbf d_3\bigr).
\end{aligned}
\]
Compute the vector sum directly using the $120^\circ$ geometry:
\[
\begin{aligned}
\mathbf d_1 + \mathbf d_2 + \mathbf d_3
&= \begin{pmatrix}1\\0\end{pmatrix}
+ \begin{pmatrix}-\tfrac{1}{2}\\ \tfrac{\sqrt{3}}{2}\end{pmatrix}
+ \begin{pmatrix}-\tfrac{1}{2}\\ -\tfrac{\sqrt{3}}{2}\end{pmatrix} \\[4pt]
&= \begin{pmatrix}1 - \tfrac{1}{2} - \tfrac{1}{2}\\ 0 + \tfrac{\sqrt{3}}{2} - \tfrac{\sqrt{3}}{2}\end{pmatrix} \\[4pt]
&= \begin{pmatrix}0\\0\end{pmatrix}.
\end{aligned}
\]
Hence $\mathbf F = u_0 \cdot \mathbf 0 = \mathbf 0$.
\end{proof}

\begin{proposition}[Differential steering decomposition]
\label{prop:differential_decomposition}
Let $u_i = u_0 + \delta_i$ with baseline $u_0 \ge 0$ and differentials $\delta_i \in \mathbb{R}$ 
satisfying $u_i \ge 0$ for all $i$. Then the net force and total thrust decompose as
\begin{equation}
\mathbf F = \sum_{i=1}^{3} \delta_i \mathbf d_i,
\qquad
S = 3u_0 + \sum_{i=1}^{3} \delta_i.
\end{equation}
In particular, the steering force depends only on the differentials $\delta_i$, and the change 
in total thrust relative to the symmetric baseline is $\Delta S = \sum_{i=1}^{3} \delta_i$.
\end{proposition}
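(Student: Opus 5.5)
The plan is a direct substitution followed by linearity, using Proposition~\ref{prop:warm_zero_force} to eliminate the baseline term.

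\textbf{Force.} First substitute $u_i = u_0 + \delta_i$ into $\mathbf F = \sum_{i=1}^{3} u_i \mathbf d_i$. Splitting the sum by linearity gives
\[
\mathbf F = u_0\bigl(\mathbf d_1+\mathbf d_2+\mathbf d_3\bigr) + \sum_{i=1}^{3}\delta_i \mathbf d_i .
\]
Proposition~\ref{prop:warm_zero_force} already computes $\mathbf d_1+\mathbf d_2+\mathbf d_3=\mathbf 0$ for the $120^\circ$ geometry, so the baseline term vanishes and only $\sum_i \delta_i \mathbf d_i$ remains. This shows that the steering force depends on the differentials alone and not on $u_0$.

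\textbf{Total thrust.} Next substitute the same decomposition into $S=\sum_i u_i$, which gives $S = 3u_0 + \sum_i \delta_i$ directly. The symmetric baseline corresponds to $\delta_i = 0$ for all $i$, so its total thrust is $S_0 = 3u_0$. The difference is therefore $\Delta S = S - S_0 = \sum_i \delta_i$.

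\textbf{Admissibility.} The hypothesis $u_i\ge 0$ plays no role in either computation. It only guarantees that the decomposition describes a physically admissible allocation, and I will note this explicitly. The argument uses no inequalities at all.

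There is no genuine obstacle here. The only point needing care is to cite the vector identity $\sum_i \mathbf d_i=\mathbf 0$ from Proposition~\ref{prop:warm_zero_force} rather than recompute it. It is also worth remarking that this identity is precisely what makes the decomposition independent of $u_0$; for a non-equally-spaced triad, a residual term $u_0\sum_i\mathbf d_i$ would survive.
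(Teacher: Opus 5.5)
Your proposal is correct and matches the paper's proof: substitute $u_i = u_0 + \delta_i$, split the sum by linearity, and cancel the baseline term using $\mathbf d_1+\mathbf d_2+\mathbf d_3=\mathbf 0$ from Proposition~\ref{prop:warm_zero_force}. For the total thrust, sum the throttles directly to get $S = 3u_0 + \sum_i \delta_i$. Your side remarks are also accurate: the constraint $u_i \ge 0$ is never used, and the cancellation of the $u_0$ term holds only for the equally spaced triad.
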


\begin{proof}
Substitute the decomposition $u_i = u_0 + \delta_i$ into the force equation and distribute:
\[
\begin{aligned}
\mathbf F 
&= \sum_{i=1}^{3} (u_0 + \delta_i)\mathbf d_i \\
&= u_0 \sum_{i=1}^{3} \mathbf d_i + \sum_{i=1}^{3} \delta_i \mathbf d_i.
\end{aligned}
\]
By Proposition~\ref{prop:warm_zero_force}, the first sum vanishes, leaving 
$\mathbf F = \sum_{i=1}^{3} \delta_i \mathbf d_i$.

For the total thrust, sum over $i$:
\[
\begin{aligned}
S &= \sum_{i=1}^{3} u_i \\
  &= \sum_{i=1}^{3} (u_0 + \delta_i) \\
  &= 3u_0 + \sum_{i=1}^{3} \delta_i. \qedhere
\end{aligned}
\]
\end{proof}

\begin{proposition}[Same-axis steering with opposite fuel costs]
\label{prop:lower_vs_raise}
Starting from symmetric warm mode at baseline $u_0$, steering along the axis of thruster 
$k \in \{1,2,3\}$ can be achieved in two ways:
\[
\begin{aligned}
\textup{(raise)} &\quad \delta_k = +\epsilon, \quad \delta_i = 0 \ \text{for all } i \neq k, \\[2pt]
\textup{(lower)} &\quad \delta_k = -\epsilon, \quad \delta_i = 0 \ \text{for all } i \neq k,
\end{aligned}
\]
for some $\epsilon > 0$. These produce forces of equal magnitude but opposite direction and 
opposite fuel cost:
\[
\begin{aligned}
\textup{(raise)} &\quad \mathbf F = +\epsilon\,\mathbf d_k, \quad \Delta S = +\epsilon, \\[2pt]
\textup{(lower)} &\quad \mathbf F = -\epsilon\,\mathbf d_k, \quad \Delta S = -\epsilon.
\end{aligned}
\]
Thus, both maneuvers achieve identical steering authority $\|\mathbf F\| = \epsilon$ along the 
$\mathbf d_k$ axis, but raising increases propellant consumption while lowering decreases it 
by the same amount.
\end{proposition}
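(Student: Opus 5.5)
The statement follows directly from Proposition~\ref{prop:differential_decomposition}, which gives the force and total-thrust changes as linear functions of the differentials:
\[
\mathbf F=\sum_i\delta_i\mathbf d_i, \qquad \Delta S=\sum_i\delta_i .
\]
So the plan is to substitute the two differential vectors and read off the results.

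\textbf{Raise case.} Set $\delta_k=+\epsilon$ and $\delta_i=0$ for $i\neq k$. Only one term survives in each sum, giving $\mathbf F=\epsilon\,\mathbf d_k$ and $\Delta S=\epsilon$.

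\textbf{Lower case.} Set $\delta_k=-\epsilon$. The same computation gives $\mathbf F=-\epsilon\,\mathbf d_k$ and $\Delta S=-\epsilon$.

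\textbf{Magnitudes.} Since each $\mathbf d_k$ is a unit vector, both forces have $\|\mathbf F\|=\epsilon$. Both lie on the $\mathbf d_k$ axis with opposite signs. Comparing the two values of $\Delta S$ shows that raising adds exactly as much total thrust (and hence propellant flow at fixed $I_{sp}$) as lowering removes.

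\textbf{Admissibility.} We also need both allocations to satisfy the constraints of Proposition~\ref{prop:differential_decomposition}. For the lower case, $u_k=u_0-\epsilon\ge 0$ requires $\epsilon\le u_0$. For the raise case under saturation, $u_k=u_0+\epsilon\le u_{\max}$ requires $\epsilon\le u_{\max}-u_0$. The proposition only asserts existence for some $\epsilon>0$, so I would assume $u_0>0$ and choose
\[
0<\epsilon\le\min\{u_0,\;u_{\max}-u_0\}.
\]
This range is nonempty whenever $0<u_0<u_{\max}$.

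I expect no real obstacle, since the argument is pure substitution. The only point needing care is this admissibility condition on $\epsilon$, which I would state explicitly. I would also note that the two maneuvers steer in opposite directions, so ``same-axis'' means the same line, not the same direction. This is what motivates comparing lowering $k$ against raising the other two jets later.
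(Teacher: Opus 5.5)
Your proposal is correct and matches the paper's proof: substitute $\delta_k=\pm\epsilon$, $\delta_i=0$ into Proposition~\ref{prop:differential_decomposition}, read off $\mathbf F=\pm\epsilon\,\mathbf d_k$ and $\Delta S=\pm\epsilon$, and use $\|\mathbf d_k\|=1$ for the magnitude. Your explicit admissibility condition $0<\epsilon\le\min\{u_0,\,u_{\max}-u_0\}$ is a worthwhile addition, since the paper's proof leaves it implicit.
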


\begin{proof}
Apply Proposition~\ref{prop:differential_decomposition} with $\delta_k = \pm\epsilon$ and 
$\delta_i = 0$ for $i \neq k$. The steering force reduces to a single term:
\[
\mathbf F = \delta_k \mathbf d_k = (\pm\epsilon)\mathbf d_k.
\]
Hence:
\begin{itemize}
\item Raising ($\delta_k = +\epsilon$) yields $\mathbf F = +\epsilon\,\mathbf d_k$,
\item Lowering ($\delta_k = -\epsilon$) yields $\mathbf F = -\epsilon\,\mathbf d_k$.
\end{itemize}
In both cases $\|\mathbf F\| = \epsilon$ since $\|\mathbf d_k\| = 1$.

The total thrust increment is $\Delta S = \sum_{i=1}^{3} \delta_i = \delta_k$, giving 
$\Delta S_{\mathrm{raise}} = +\epsilon$ and $\Delta S_{\mathrm{lower}} = -\epsilon$.
\end{proof}

\begin{proposition}[Thrust-neutral steering: existence and uniqueness]
\label{prop:thrust_neutral_unique}
For any desired in-plane force $\mathbf F = (F_x, F_y)^\top$, there exists a \emph{unique} 
differential vector $\bm\delta = (\delta_1, \delta_2, \delta_3)^\top$ satisfying the 
\emph{thrust-neutral constraint}
\begin{equation}
\delta_1 + \delta_2 + \delta_3 = 0
\label{eq:thrust_neutral}
\end{equation}
and generating $\mathbf F$ via $\sum_{i=1}^{3} \delta_i \mathbf d_i = \mathbf F$. 
The unique solution is
\begin{align}
\delta_1 &= \tfrac{2}{3} F_x, \label{eq:delta1} \\
\delta_2 &= -\tfrac{1}{3} F_x + \tfrac{1}{\sqrt{3}} F_y, \label{eq:delta2} \\
\delta_3 &= -\tfrac{1}{3} F_x - \tfrac{1}{\sqrt{3}} F_y. \label{eq:delta3}
\end{align}
Consequently, thrust-neutral steering changes the net force without changing total thrust: 
$S$ remains equal to $3u_0$.
\end{proposition}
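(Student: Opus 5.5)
The plan is to treat the problem as a $3\times 3$ linear system. Stack the two force equations $\sum_i \delta_i \mathbf d_i = \mathbf F$ with the thrust-neutral constraint \eqref{eq:thrust_neutral} to get $M\bm\delta = (F_x, F_y, 0)^\top$. Here the rows of $M$ are $(1,-\tfrac12,-\tfrac12)$, $(0,\tfrac{\sqrt3}{2},-\tfrac{\sqrt3}{2})$ and $(1,1,1)$. Existence and uniqueness then both reduce to showing $M$ is nonsingular. I would compute $\det M$ directly by cofactor expansion along the first column, which should give a nonzero value of $\tfrac{3\sqrt3}{2}$.

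A cleaner conceptual route explains why $M$ is invertible. The first two rows are orthogonal to $(1,1,1)$, because $\mathbf d_1+\mathbf d_2+\mathbf d_3=\mathbf 0$ (Proposition~\ref{prop:warm_zero_force}). The map $\bm\delta\mapsto\sum_i\delta_i\mathbf d_i$ has rank $2$, since $\mathbf d_1$ and $\mathbf d_2$ are linearly independent. Its kernel is therefore exactly $\mathrm{span}\{(1,1,1)\}$. Restricting this map to the orthogonal complement $\{\sum_i\delta_i=0\}$ gives a bijection onto $\mathbb R^2$. This yields existence and uniqueness at once. I would present the determinant as the quick check and the kernel argument as the explanation.

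To get the explicit formulas \eqref{eq:delta1}--\eqref{eq:delta3}, I would avoid inverting $M$ symbolically. Instead I would verify the candidate solution by substitution. The three claimed $\delta_i$ sum to $\tfrac23F_x-\tfrac13F_x-\tfrac13F_x=0$, and the $F_y$ terms cancel as well. The $x$-component of $\sum_i\delta_i\mathbf d_i$ is $\tfrac23F_x+\tfrac12\cdot\tfrac23F_x=F_x$. The $y$-component is $\tfrac{\sqrt3}{2}(\delta_2-\delta_3)=\tfrac{\sqrt3}{2}\cdot\tfrac{2}{\sqrt3}F_y=F_y$. Uniqueness from the previous step then identifies this candidate as the solution. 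Equivalently, one can note that $\delta_i=\tfrac23\,\mathbf d_i\cdot\mathbf F$, using the tight-frame identity $\sum_i\mathbf d_i\mathbf d_i^\top=\tfrac32\mathbf I$. This identity gives a one-line derivation and is the formula I would actually cite.

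Finally, the constant-throughput claim follows from Proposition~\ref{prop:differential_decomposition}: $S=3u_0+\sum_i\delta_i=3u_0$. No step is a real obstacle; the only care needed is sign bookkeeping in the $y$-component. The statement implicitly also needs the feasibility condition $u_i=u_0+\delta_i\ge 0$, that is, $u_0\ge\max_i(-\delta_i)$. I would note this as the standing assumption of Proposition~\ref{prop:differential_decomposition} rather than part of the algebraic claim.
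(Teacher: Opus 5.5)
Your proposal is correct and follows the paper's proof in substance. It uses the same $3\times 3$ system, the same nonzero determinant $\tfrac{3\sqrt{3}}{2}$ for existence and uniqueness, and the same appeal to Proposition~\ref{prop:differential_decomposition} for $S=3u_0$; the differences are that you verify the formulas by substitution (or via $\sum_i\mathbf d_i\mathbf d_i^\top=\tfrac32\mathbf I$, i.e.\ $\delta_i=\tfrac23\,\mathbf d_i\cdot\mathbf F$) where the paper derives them by elimination, and your kernel argument adds a coordinate-free explanation of invertibility.
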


\begin{proof}
Write the force and neutrality conditions as a $3 \times 3$ linear system in 
$(\delta_1, \delta_2, \delta_3)$:
\begin{align}
F_x &= \delta_1 - \tfrac{1}{2}\delta_2 - \tfrac{1}{2}\delta_3, \label{eq:sys_x} \\
F_y &= \tfrac{\sqrt{3}}{2}\delta_2 - \tfrac{\sqrt{3}}{2}\delta_3, \label{eq:sys_y} \\
0   &= \delta_1 + \delta_2 + \delta_3. \label{eq:sys_neut}
\end{align}

\emph{Step 1: Solve for $\delta_1$.}
Rearrange~\eqref{eq:sys_neut} to obtain $\delta_2 + \delta_3 = -\delta_1$. 
Substitute into~\eqref{eq:sys_x}:
\[
\begin{aligned}
F_x &= \delta_1 - \tfrac{1}{2}(\delta_2 + \delta_3) \\
    &= \delta_1 - \tfrac{1}{2}(-\delta_1) \\
    &= \tfrac{3}{2}\delta_1.
\end{aligned}
\]
Solving yields $\delta_1 = \tfrac{2}{3} F_x$.

\emph{Step 2: Solve for $\delta_2$ and $\delta_3$.}
From~\eqref{eq:sys_y}, divide by $\tfrac{\sqrt{3}}{2}$ to get
\[
\delta_2 - \delta_3 = \tfrac{2}{\sqrt{3}} F_y.
\]
We now have two equations in $\delta_2, \delta_3$:
\[
\begin{aligned}
\delta_2 + \delta_3 &= -\tfrac{2}{3} F_x, \\
\delta_2 - \delta_3 &= \tfrac{2}{\sqrt{3}} F_y.
\end{aligned}
\]
Add these equations and divide by 2:
\[
\begin{aligned}
\delta_2 &= \tfrac{1}{2}\Bigl(-\tfrac{2}{3} F_x + \tfrac{2}{\sqrt{3}} F_y\Bigr) \\
         &= -\tfrac{1}{3} F_x + \tfrac{1}{\sqrt{3}} F_y.
\end{aligned}
\]
Subtract and divide by 2:
\[
\begin{aligned}
\delta_3 &= \tfrac{1}{2}\Bigl(-\tfrac{2}{3} F_x - \tfrac{2}{\sqrt{3}} F_y\Bigr) \\
         &= -\tfrac{1}{3} F_x - \tfrac{1}{\sqrt{3}} F_y.
\end{aligned}
\]

\emph{Uniqueness.}
The coefficient matrix of system~\eqref{eq:sys_x}--\eqref{eq:sys_neut} is
\[
A = \begin{pmatrix}
1 & -\tfrac{1}{2} & -\tfrac{1}{2} \\[2pt]
0 & \tfrac{\sqrt{3}}{2} & -\tfrac{\sqrt{3}}{2} \\[2pt]
1 & 1 & 1
\end{pmatrix}.
\]
A direct calculation gives $\det(A) = \tfrac{3\sqrt{3}}{2} \neq 0$, so the system has a unique 
solution for any $(F_x, F_y)$.

Finally, since $\sum_{i=1}^{3} \delta_i = 0$, Proposition~\ref{prop:differential_decomposition} 
implies $S = 3u_0 + 0 = 3u_0$.
\end{proof}

\begin{corollary}[Axis-aligned steering and fuel-flow constraints]
\label{cor:one_up_two_down}
The following results characterize thrust-neutral steering along thruster axes:

\begin{enumerate}
\item[\textup{(i)}] \textbf{One up, two down construction.}
To generate a thrust-neutral force collinear with $\mathbf d_k$ for any $k \in \{1,2,3\}$, 
set
\begin{equation}
\delta_k = +\epsilon, \qquad \delta_i = -\tfrac{\epsilon}{2} \ \text{for } i \neq k,
\end{equation}
yielding
\begin{equation}
\mathbf F = \tfrac{3\epsilon}{2}\,\mathbf d_k, \qquad \Delta S = 0.
\end{equation}

\item[\textup{(ii)}] \textbf{Fuel-flow constraints.}
Suppose total thrust is bounded below by $S \ge S_{\min}$ (e.g., thermal management or 
fuel-line pressure), with baseline $3u_0 = S_{\min}$. Then single-jet lowering 
(Proposition~\ref{prop:lower_vs_raise} with $\Delta S < 0$) is infeasible, but thrust-neutral 
steering via (i) remains fully available.
\end{enumerate}

Consequently, a connected three-jet system maintains full steering authority under fuel-flow 
constraints by \emph{rerouting} propellant between jets rather than reducing total flow.
\end{corollary}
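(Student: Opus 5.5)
The plan is to obtain both parts of Corollary~\ref{cor:one_up_two_down} directly from Proposition~\ref{prop:differential_decomposition}, using the symmetry identity $\mathbf d_1+\mathbf d_2+\mathbf d_3=\mathbf 0$ from Proposition~\ref{prop:warm_zero_force}.

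For part (i), substitute $\delta_k=+\epsilon$ and $\delta_i=-\epsilon/2$ for $i\neq k$ into $\mathbf F=\sum_i\delta_i\mathbf d_i$. This gives
\[
\mathbf F=\epsilon\,\mathbf d_k-\tfrac{\epsilon}{2}\sum_{i\neq k}\mathbf d_i .
\]
By Proposition~\ref{prop:warm_zero_force}, $\sum_{i\neq k}\mathbf d_i=-\mathbf d_k$, so $\mathbf F=\tfrac{3\epsilon}{2}\mathbf d_k$. The thrust change is $\Delta S=\epsilon-\tfrac{\epsilon}{2}-\tfrac{\epsilon}{2}=0$.

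As a consistency check for $k=1$, take $F_x=\tfrac{3\epsilon}{2}$ and $F_y=0$ in the formulas of Proposition~\ref{prop:thrust_neutral_unique}. This returns $\delta_1=\epsilon$ and $\delta_2=\delta_3=-\epsilon/2$. So the construction is exactly the unique thrust-neutral solution, and the cases $k=2,3$ follow by the $120^\circ$ rotational symmetry. I will also note the feasibility requirement $u_0\ge\epsilon/2$, which keeps every $u_i\ge 0$ and bounds the achievable magnitude at $\|\mathbf F\|\le 3u_0$.

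For part (ii), the budget $S\ge S_{\min}=3u_0$ together with $S=3u_0+\Delta S$ forces $\Delta S\ge 0$. The lowering maneuver has $\Delta S=-\epsilon<0$, so it is infeasible. Every thrust-neutral allocation has $\Delta S=0$ and therefore meets the constraint with equality.

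Full authority means every direction is reachable. Proposition~\ref{prop:thrust_neutral_unique} supplies a unique thrust-neutral $\bm\delta$ for every $\mathbf F\in\mathbb R^2$. That solution is admissible whenever $\min_i\delta_i\ge -u_0$, and since $\bm\delta$ is linear in $\mathbf F$, this holds for all sufficiently small $\|\mathbf F\|$. So every direction is available at some positive magnitude, and by linearity the attainable magnitude in any direction is bounded below by a constant times $u_0$. If upper saturation $u_{\max}$ also applies, the same linear-scaling argument handles it.

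The main obstacle is pinning down what ``fully available'' should mean. I would make the statement precise as the following claim: for every unit direction there exists $c>0$ with $c$ times that direction realizable under $S\ge S_{\min}$ and $0\le u_i\le u_{\max}$. I would then prove it by this scaling argument, with the worst-case direction lying midway between thruster axes.
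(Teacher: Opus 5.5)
Your proposal is correct and takes essentially the paper's route. Part (i) substitutes into $\mathbf F=\sum_i\delta_i\mathbf d_i$ and uses $\mathbf d_1+\mathbf d_2+\mathbf d_3=\mathbf 0$; the paper first derives $\delta_2=\delta_3$ from neutrality plus collinearity, which your uniqueness cross-check via Proposition~\ref{prop:thrust_neutral_unique} reproduces. Part (ii) is the same $\Delta S$ accounting plus an appeal to Proposition~\ref{prop:thrust_neutral_unique}. Your admissibility check is a useful sharpening that the paper omits: since $\delta_i=\tfrac{2}{3}\,\mathbf d_i\cdot\mathbf F$, the bound $u_i\ge 0$ permits magnitude $\tfrac{3}{2}u_0$ in the worst (mid-gap) direction, which tacitly requires $u_0>0$, and an upper bound $u_{\max}$ further requires $u_0<u_{\max}$, because a nonzero thrust-neutral $\bm\delta$ must raise some jet.
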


\begin{proof}
\emph{Part (i).}
Impose thrust-neutrality $\delta_1 + \delta_2 + \delta_3 = 0$ together with the symmetry 
requirement $\delta_i = \delta_j$ for $i, j \neq k$ (necessary for the net force to lie 
along $\mathbf d_k$). Taking $k = 1$ without loss of generality, we have $\delta_2 = \delta_3$ 
and thus
\[
\delta_1 + 2\delta_2 = 0 \quad\Longrightarrow\quad \delta_2 = -\tfrac{\delta_1}{2}.
\]
Setting $\delta_1 = \epsilon$ gives $\delta_2 = \delta_3 = -\tfrac{\epsilon}{2}$.

The net force is
\[
\mathbf F = \epsilon\,\mathbf d_1 + \bigl(-\tfrac{\epsilon}{2}\bigr)(\mathbf d_2 + \mathbf d_3).
\]
From Proposition~\ref{prop:warm_zero_force}, $\mathbf d_1 + \mathbf d_2 + \mathbf d_3 = \mathbf 0$, 
so $\mathbf d_2 + \mathbf d_3 = -\mathbf d_1$. Substitute:
\[
\begin{aligned}
\mathbf F &= \epsilon\,\mathbf d_1 - \tfrac{\epsilon}{2}(-\mathbf d_1) \\
          &= \epsilon\,\mathbf d_1 + \tfrac{\epsilon}{2}\,\mathbf d_1 \\
          &= \tfrac{3\epsilon}{2}\,\mathbf d_1.
\end{aligned}
\]
Since $\sum_i \delta_i = \epsilon - \tfrac{\epsilon}{2} - \tfrac{\epsilon}{2} = 0$, we have 
$\Delta S = 0$.

\emph{Part (ii).}
If $S = 3u_0 = S_{\min}$ and we apply single-jet lowering with $\Delta S = -\epsilon < 0$, 
then $S + \Delta S < S_{\min}$, violating the constraint. However, thrust-neutral steering 
satisfies $\Delta S = 0$ by construction, so $S = S_{\min}$ is preserved. By 
Proposition~\ref{prop:thrust_neutral_unique}, any in-plane force $\mathbf F$ can be generated 
thrust-neutrally, and part (i) provides the explicit construction for axis-aligned forces.
\end{proof}

\subsection{Cold-Start Fuel Optimality}

\begin{proposition}[Cold-start optimum uses at most two thrusters]
\label{prop:cold_start_two_thrusters}
Consider the cold-start problem (no warm baseline): for a given nonzero target force 
$\mathbf F \in \mathbb{R}^2$, find thrust magnitudes solving
\begin{equation}
\begin{aligned}
\min_{u_1, u_2, u_3 \ge 0} \quad & u_1 + u_2 + u_3 \\
\text{subject to} \quad & u_1 \mathbf d_1 + u_2 \mathbf d_2 + u_3 \mathbf d_3 = \mathbf F.
\end{aligned}
\label{eq:cold_start_LP}
\end{equation}
Every optimal solution activates at most two thrusters (i.e., at least one $u_i = 0$).
\end{proposition}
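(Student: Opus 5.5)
The plan is a direct exchange argument built on the identity $\mathbf d_1+\mathbf d_2+\mathbf d_3=\mathbf 0$ from Proposition~\ref{prop:warm_zero_force}. First, take any feasible $\mathbf u=(u_1,u_2,u_3)\ge 0$ with all three components strictly positive. Set $\mu\equiv\min_i u_i>0$ and define $u_i'=u_i-\mu$. This is exactly the differential decomposition of Proposition~\ref{prop:differential_decomposition} with baseline $u_0=\mu$. That proposition gives
\[
\sum_i u_i'\mathbf d_i=\sum_i u_i\mathbf d_i-\mu\sum_i\mathbf d_i=\mathbf F,
\]
so $\mathbf u'$ is still feasible. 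By construction $\mathbf u'\ge 0$, at least one component of $\mathbf u'$ vanishes, and the total thrust drops to $S'=S-3\mu<S$. Hence no vector with three active thrusters can be optimal, and every optimal solution has some $u_i=0$.

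Second, for completeness, I would confirm that the LP~\eqref{eq:cold_start_LP} actually has an optimum, so the statement is not vacuous. Feasibility follows from Theorem~\ref{thm:120}(ii): the three rays are not contained in a closed half-plane, so $\mathcal C=\mathbb R^2$ and the feasible set is nonempty. The objective is nonnegative and the feasible set is a closed polyhedron, so the minimum is attained. One could also use the LP vertex theorem: with two equality constraints, a basic optimal solution has at most two nonzero entries. That route, however, only produces \emph{some} optimal solution with two active thrusters, while the claim concerns \emph{every} optimal solution. This is why the exchange argument above is the main tool.

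The only delicate point is this quantifier: showing that all optimizers, not just a vertex optimizer, have a zero component. The strict decrease $S'=S-3\mu$ handles it at once, because any fully positive feasible point is strictly dominated. As an optional check, I would note that in the sector between $\mathbf d_j$ and $\mathbf d_k$ the two-jet solution is unique. This follows from linear independence of any two of the $\mathbf d_i$, and it recovers the worst-case factor of $2$ at mid-gap directions used later in Proposition~\ref{prop:cold_start_efficiency}.
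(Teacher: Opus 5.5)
Your proof is correct, and it takes a different route from the paper's formal argument.

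\textbf{The paper's route.} The paper uses LP vertex theory: an optimum is attained at an extreme point, and with two equality constraints in $\mathbb R^3$ such a point has at least one $u_i=0$. As you observe, this only produces \emph{some} optimizer with a zero entry. To cover \emph{every} optimizer one must also show that the optimal face is a single vertex, and the paper does not do this.

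\textbf{Your route.} Your subtraction step is a rigorous version of the paper's informal ``geometric intuition'' paragraph, which says that activating the third jet forces compensating thrust. Because $\mathbf d_1+\mathbf d_2+\mathbf d_3=\mathbf 0$, subtracting $\mu=\min_i u_i$ from every component keeps the point feasible and strictly lowers the cost by $3\mu$. This settles the quantifier directly.

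The same identity gives more. The $2\times 3$ direction matrix has rank two, so its kernel is spanned by $(1,1,1)^\top$. The feasible set is therefore the half-line $\{\mathbf u^\star+t(1,1,1)^\top : t\ge 0\}$, whose endpoint $\mathbf u^\star$ is the only feasible point with a zero entry. The objective increases strictly along this half-line. Hence the optimizer exists and is unique, which covers both your existence remark and your optional uniqueness check in one step.

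\textbf{What each approach buys.} The vertex argument uses nothing about the $120^\circ$ geometry and applies to any three directions, but only in its ``some optimizer'' form. The ``every'' version genuinely fails for degenerate triples; for example, with $\mathbf d_1=\mathbf d_3$, strictly positive optimizers can exist. Your argument needs a strictly positive null vector $\mathbf w$ with $\sum_i w_i\mathbf d_i=\mathbf 0$. Such a vector exists for any triple meeting the covering condition of Theorem~\ref{thm:120}(ii), so your proof extends unchanged to that whole class by subtracting a suitable multiple of $\mathbf w$.
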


\begin{proof}
\emph{Geometric intuition.}
The three thruster directions $\mathbf d_1, \mathbf d_2, \mathbf d_3$, spaced at $120^\circ$, 
partition the plane into three conic sectors. Any nonzero target force $\mathbf F$ lies in 
one of these sectors (or on a boundary between two). The two thrusters bounding that sector 
can generate any force within it via non-negative combinations---geometrically, $\mathbf F$ 
lies in their conic hull. The third thruster points into the opposite half-plane; activating 
it would require compensating thrust from the other two, strictly increasing total fuel 
consumption.

\emph{Formal argument.}
Problem~\eqref{eq:cold_start_LP} is a linear program in $\mathbb{R}^3$: the objective 
$\sum_i u_i$ and the constraint $\sum_i u_i \mathbf d_i = \mathbf F$ (two scalar equations) 
are linear, with non-negativity constraints $u_i \ge 0$.

By standard linear programming theory, an optimal solution occurs at an extreme point of 
the feasible polytope. In $\mathbb{R}^3$ with two equality constraints, extreme points 
require at least $3 - 2 = 1$ inequality constraint to be active, meaning at least one 
$u_i = 0$. Hence at most two thrusters are active at optimum.
\end{proof}

\begin{proposition}[Cold-start efficiency bounds]
\label{prop:cold_start_efficiency}
In cold-start mode with $120^\circ$ thruster placement, define the \emph{efficiency ratio} as 
the force magnitude achieved per unit thrust: $\eta = f / S^*$, where $f = \|\mathbf F\|$ and 
$S^*$ is the minimum total thrust from Proposition~\ref{prop:cold_start_two_thrusters}. Then
\begin{equation}
\frac{1}{2} \le \eta \le 1,
\end{equation}
or equivalently, $f \le S^* \le 2f$. Both bounds are tight:
\begin{itemize}
\item $\eta = 1$ (perfect efficiency) when $\mathbf F$ aligns with a thruster axis---one jet 
      suffices and all thrust contributes to the target direction.
\item $\eta = \tfrac{1}{2}$ (worst-case) when $\mathbf F$ points midway between two axes---both 
      jets fire equally, and their off-axis components cancel, wasting half the fuel.
\end{itemize}

\emph{Consequence.}
For fuel budgeting, cold-start maneuvers require at most twice the propellant of the ideal 
single-thruster case. This bounded inefficiency motivates warm-mode operation: thrust-neutral 
steering (Corollary~\ref{cor:one_up_two_down}) achieves arbitrary directions without the 
cancellation penalty, at the cost of maintaining a fuel-consuming baseline.
\end{proposition}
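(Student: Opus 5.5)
The plan is to compute $S^*$ in closed form using Proposition~\ref{prop:cold_start_two_thrusters}. Then both bounds, and their tightness, can be read off directly. By rotational symmetry of the $120^\circ$ triad, it suffices to treat a target $\mathbf F$ lying in the closed sector between $\mathbf d_1$ and $\mathbf d_2$. Write $\mathbf F = f(\cos\alpha,\sin\alpha)^\top$ with $\alpha\in[0,2\pi/3]$. The sectors bounded by consecutive rays cover the plane, so this loses no generality.

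First I will identify the optimal support. Proposition~\ref{prop:cold_start_two_thrusters} says at most two jets are active. If the active pair does not bound the sector containing $\mathbf F$, the nonnegative solution either does not exist or forces a nonnegative combination into the opposite side. So I will argue directly instead: for any feasible $\mathbf u$, subtracting $\min_i u_i$ from all three components preserves feasibility, since $\sum_i\mathbf d_i=\mathbf 0$ by Proposition~\ref{prop:warm_zero_force}, and it lowers $S$. Hence an optimum has some $u_i=0$. Among the three two-jet decompositions, only the pair $\{\mathbf d_1,\mathbf d_2\}$ gives nonnegative coefficients for $\alpha$ in this sector. The $2\times2$ system there is invertible because the two directions are not collinear, so the remaining LP is trivial. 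Solving by the law of sines gives
\[
u_1 = \frac{2f}{\sqrt3}\sin\!\left(\tfrac{2\pi}{3}-\alpha\right),\qquad u_2=\frac{2f}{\sqrt3}\sin\alpha,
\]
and therefore
\[
S^* = u_1+u_2 = \frac{2f}{\sqrt3}\cdot 2\sin\tfrac{\pi}{3}\cos\!\left(\alpha-\tfrac{\pi}{3}\right) = \frac{f}{\cos(\alpha-\pi/3)}.
\]

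Since $|\alpha-\pi/3|\le\pi/3$, we get $\cos(\alpha-\pi/3)\in[1/2,1]$. This gives $f\le S^*\le 2f$, that is, $1/2\le\eta\le1$. Equality $\eta=1$ holds at $\alpha\in\{0,2\pi/3\}$, where $\mathbf F$ is along a thruster axis and a single jet is used. Equality $\eta=1/2$ holds at $\alpha=\pi/3$, the mid-gap direction, where $u_1=u_2=f$. The lower bound $S^*\ge f$ also follows independently from the triangle inequality, $f=\|\sum_i u_i\mathbf d_i\|\le\sum_i u_i$, which I will note as a sanity check.

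The main obstacle is rigorously justifying that the optimal support is exactly the sector-bounding pair. The shift-by-minimum argument reduces the problem to two-jet solutions. Checking the sign of the coefficients for the other pairs, via the cross-product sign test, then completes the identification. Everything else is routine trigonometry.
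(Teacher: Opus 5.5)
There is a concrete computational error in your central step. The law-of-sines coefficients $u_1=\frac{2f}{\sqrt3}\sin(\tfrac{2\pi}{3}-\alpha)$ and $u_2=\frac{2f}{\sqrt3}\sin\alpha$ are correct, and so is your sum-to-product expression. But $\frac{2f}{\sqrt3}\cdot 2\sin\frac{\pi}{3}=2f$, so the correct result is
\[
S^* \;=\; 2f\cos\!\left(\alpha-\tfrac{\pi}{3}\right),\qquad \eta \;=\; \frac{1}{2\cos(\alpha-\pi/3)},
\]
not $S^*=f/\cos(\alpha-\pi/3)$. Your formula gives $S^*=2f$ at $\alpha=0$, where the single-jet choice $u_1=f$ is feasible. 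It gives $S^*=f$ at $\alpha=\pi/3$, where you yourself find $u_1=u_2=f$. The range $[f,2f]$ survives only by coincidence, since both $c\mapsto f/c$ and $c\mapsto 2fc$ map $[1/2,1]$ onto $[f,2f]$. However, the equality cases you state are the opposite of what your formula predicts. As written, the tightness claims therefore do not follow from your own computation.

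With the corrected expression your argument goes through. From $\cos(\alpha-\pi/3)\in[1/2,1]$ you get $f\le S^*\le 2f$, with $S^*=f$ exactly at $\alpha\in\{0,2\pi/3\}$ and $S^*=2f$ exactly at $\alpha=\pi/3$. Equivalently, $\eta=\cos\beta/\cos\gamma$ with half-gap $\beta=\pi/3$ and bisector offset $\gamma=\alpha-\pi/3$, which matches the paper's ``$\cos 60^\circ$'' worst-case authority.

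Once repaired, your route is more complete than the paper's. The paper only exhibits the two extremes: $u_k=f$ on an axis, and $u_1=u_2=u$ at mid-gap using $\|\mathbf d_1+\mathbf d_2\|=1$. It then asserts without proof that $\eta$ decreases monotonically in between, and it never shows $S^*\ge f$ off-axis. Your closed form proves the monotonicity directly, and your triangle-inequality remark supplies $S^*\ge f$ for every direction. Your shift-by-$\min_i u_i$ argument (valid because $\sum_i\mathbf d_i=\mathbf 0$) also shows that \emph{every} optimum has a zero component. The paper's extreme-point argument only guarantees that \emph{some} optimal vertex is sparse.
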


\begin{proof}
\emph{Lower bound on $S^*$ (upper bound on $\eta$).}
If $\mathbf F$ aligns with some $\mathbf d_k$, set $u_k = f$ and $u_i = 0$ for $i \neq k$. 
Then $\mathbf F = f\,\mathbf d_k$, $\|\mathbf F\| = f$, and $S^* = f$, giving $\eta = 1$.

\emph{Upper bound on $S^*$ (lower bound on $\eta$).}
Consider $\mathbf F$ pointing midway between $\mathbf d_1$ and $\mathbf d_2$, i.e., at angle 
$30^\circ$ from $\mathbf d_1$. By symmetry and Proposition~\ref{prop:cold_start_two_thrusters}, 
the optimum uses $u_1 = u_2 = u$ and $u_3 = 0$. The resulting force is
\[
\mathbf F = u(\mathbf d_1 + \mathbf d_2).
\]
Compute the magnitude of $\mathbf d_1 + \mathbf d_2$:
\[
\begin{aligned}
\mathbf d_1 + \mathbf d_2 
&= \begin{pmatrix}1 - \tfrac{1}{2}\\[2pt] 0 + \tfrac{\sqrt{3}}{2}\end{pmatrix}
= \begin{pmatrix}\tfrac{1}{2}\\[2pt] \tfrac{\sqrt{3}}{2}\end{pmatrix}, \\[6pt]
\|\mathbf d_1 + \mathbf d_2\| 
&= \sqrt{\tfrac{1}{4} + \tfrac{3}{4}} = 1.
\end{aligned}
\]
(This unit magnitude is a special property of $120^\circ$ spacing: pairwise sums of unit 
vectors at $120^\circ$ intervals are themselves unit vectors.)

Thus $f = \|\mathbf F\| = u$, while $S^* = 2u = 2f$, giving $\eta = \tfrac{1}{2}$.

\emph{No direction is worse.}
Any $\mathbf F$ lies within $60^\circ$ of some thruster axis (since axes are $120^\circ$ apart). 
The efficiency $\eta$ decreases monotonically from $1$ (on-axis) to $\tfrac{1}{2}$ (at the 
$60^\circ$ midpoint), confirming the bounds are tight and no direction yields $\eta < \tfrac{1}{2}$.
\end{proof}

\subsection{Actuator Limits and Steering Authority}

\begin{proposition}[Mid-throttle bias maximizes thrust-neutral steering authority]
\label{prop:headroom_mid}
Assume actuator bounds $0 \le u_i \le u_{\max}$ and warm-bias operation $u_i = u_0 + \delta_i$ 
with thrust-neutrality $\sum_{i=1}^{3} \delta_i = 0$. The feasible differential range for each 
thruster is
\[
-u_0 \le \delta_i \le u_{\max} - u_0.
\]
The \emph{symmetric headroom}---the largest magnitude $h$ such that $|\delta_i| \le h$ is 
feasible for all $i$---is
\[
h(u_0) = \min\{u_0,\, u_{\max} - u_0\}.
\]
This is maximized by choosing the baseline at mid-throttle:
\begin{equation}
u_0 = \frac{u_{\max}}{2}, \qquad h_{\max} = \frac{u_{\max}}{2}.
\end{equation}
\end{proposition}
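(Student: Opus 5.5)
The plan is to treat this as a one-variable optimization after first pinning down the feasible differential interval. First, from the actuator bounds $0 \le u_0 + \delta_i \le u_{\max}$, subtract $u_0$ to get $-u_0 \le \delta_i \le u_{\max} - u_0$ for each $i$. Thrust-neutrality $\sum_i \delta_i = 0$ does not shrink this per-thruster interval. Proposition~\ref{prop:thrust_neutral_unique} shows the neutral differentials form a two-dimensional subspace. Corollary~\ref{cor:one_up_two_down} shows explicitly that a single $\delta_k$ can take either sign with the others compensating. So the per-thruster box is the binding constraint.

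Second, I will characterize the symmetric headroom. A symmetric bound $|\delta_i| \le h$ is admissible exactly when the interval $[-h,h]$ lies inside $[-u_0,\, u_{\max}-u_0]$. That is, $h \le u_0$ and $h \le u_{\max}-u_0$. The largest such $h$ is therefore $h(u_0)=\min\{u_0,\,u_{\max}-u_0\}$.

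I will also check that this value is attained by thrust-neutral allocations. The one-up-two-down pattern with $\epsilon = h$ gives $|\delta_k| = h$ and $|\delta_i| = h/2$. The mirrored one-down-two-up pattern gives the same magnitudes. Both respect the box. Hence the headroom is realized and not merely an upper bound.

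Third, I will maximize $h(u_0)$ over $u_0 \in [0,u_{\max}]$. The function is the minimum of an increasing linear function and a decreasing linear function, so it is concave and tent-shaped. Its maximum is at the crossing point $u_0 = u_{\max}-u_0$, i.e.\ $u_0 = u_{\max}/2$, where $h_{\max}=u_{\max}/2$. Uniqueness of the maximizer follows because $h$ is strictly increasing on $[0,u_{\max}/2]$ and strictly decreasing on $[u_{\max}/2,u_{\max}]$.

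The main obstacle is interpretive rather than computational. The phrase ``steering authority'' could be read as the maximal achievable force magnitude rather than the symmetric headroom. To connect the two, I would note that under the substitution $\delta_i \mapsto h\,\tilde\delta_i$ with $|\tilde\delta_i|\le 1$, the set of achievable neutral forces $\sum_i \delta_i \mathbf d_i$ scales linearly with $h$. This uses the linear formula of Proposition~\ref{prop:thrust_neutral_unique}. So maximizing $h$ maximizes the uniformly guaranteed authority in every in-plane direction, which justifies the proposition's title.
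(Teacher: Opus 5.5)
Your proposal is correct and follows essentially the paper's argument: derive the box $-u_0\le\delta_i\le u_{\max}-u_0$ from the actuator bounds, read off $h(u_0)=\min\{u_0,u_{\max}-u_0\}$, and maximize this tent-shaped function at $u_0=u_{\max}/2$; your attainment check and uniqueness remark go slightly beyond the paper, which never addresses the neutrality constraint at all. One correction: your aside that thrust-neutrality does not shrink the per-thruster interval is false in general (raising one jet by $u_{\max}-u_0$ requires $u_0\ge u_{\max}/3$, and lowering one jet to zero requires $u_0\le 2u_{\max}/3$). This does not affect the headroom, because $\min\{u_0,u_{\max}-u_0\}$ never exceeds $2u_0$ or $2(u_{\max}-u_0)$, which is exactly what your one-up-two-down check with $\epsilon=h$ confirms.
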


\begin{proof}
The physical bounds $0 \le u_i \le u_{\max}$, combined with $u_i = u_0 + \delta_i$, imply
\[
0 \le u_0 + \delta_i \le u_{\max}
\quad\Longleftrightarrow\quad
-u_0 \le \delta_i \le u_{\max} - u_0.
\]
For symmetric authority in both directions (raising and lowering any thruster by up to $h$), 
we require $h \le u_0$ (room to lower) and $h \le u_{\max} - u_0$ (room to raise). Thus
\[
h(u_0) = \min\{u_0,\, u_{\max} - u_0\}.
\]
This piecewise-linear function of $u_0$ increases on $[0, u_{\max}/2]$ and decreases on 
$[u_{\max}/2, u_{\max}]$. The maximum occurs at the intersection $u_0 = u_{\max} - u_0$, 
i.e., $u_0 = u_{\max}/2$, yielding $h_{\max} = u_{\max}/2$.
\end{proof}

\subsection{High-Bias Operation: Decrement-Only Steering}

\begin{proposition}[Decrement-only steering about a high warm baseline]
\label{prop:decrement_only}
Assume bounds $0 \le u_i \le u_{\max}$. Consider a warm state with $u_i=u_0$ for all $i$, where
$u_0$ may be as large as $u_{\max}$. Write $u_i=u_0-\rho_i$ with $0 \le \rho_i \le u_0$. Then
\begin{equation}
\mathbf F = \sum_{i=1}^3 u_i\mathbf d_i = -\sum_{i=1}^{3} \rho_i \mathbf d_i.
\end{equation}
Any in-plane force direction is achievable using only reductions $\rho_i\ge 0$, and the maximum
achievable force magnitude satisfies $\|\mathbf F\|\le u_0$.
\end{proposition}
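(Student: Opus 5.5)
The plan has three parts: get the force identity from the earlier decomposition, prove that every direction is reachable with reductions only, and then bound the magnitude.

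\textbf{Force identity.} This is immediate from Proposition~\ref{prop:differential_decomposition}. Take $\delta_i=-\rho_i$. Since $\sum_i\mathbf d_i=\mathbf 0$ (Proposition~\ref{prop:warm_zero_force}), we get $\mathbf F=-\sum_i\rho_i\mathbf d_i$. The constraints $0\le\rho_i\le u_0\le u_{\max}$ are exactly the conditions $0\le u_i\le u_{\max}$.

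\textbf{Every direction is reachable.} The three vectors $-\mathbf d_i$ are unit vectors that are again spaced at $120^\circ$; they are a rotation by $180^\circ$ of the original triad. No closed half-plane contains them, so Theorem~\ref{thm:120}(ii) gives that their conic hull is all of $\mathbb R^2$.

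The only thing left is the upper bound $\rho_i\le u_0$. Conic combinations can be scaled down freely, so any direction is reachable with arbitrarily small $\rho_i$. Concretely, for a target unit direction $\hat{\mathbf e}$:
\begin{itemize}
\item pick the two reversed axes $-\mathbf d_j,-\mathbf d_k$ bounding the sector that contains $\hat{\mathbf e}$;
\item write $\hat{\mathbf e}=a(-\mathbf d_j)+b(-\mathbf d_k)$ with $a,b\ge0$, which is possible since the sector spans only $120^\circ$;
\item set $\rho_j=sa$, $\rho_k=sb$, $\rho_i=0$ for the third index, with $s>0$ small enough that $sa,sb\le u_0$.
\end{itemize}
This gives a force along $\hat{\mathbf e}$, which proves the direction claim provided $u_0>0$.

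\textbf{Magnitude bound.} This is the step I expect to be the main obstacle, because the naive estimate $\|\sum_i\rho_i\mathbf d_i\|\le\sum_i\rho_i\le3u_0$ is too weak. Instead I would use the thrust-neutral invariance. For any constant $c$, the identity $\sum_i\mathbf d_i=\mathbf 0$ gives $\mathbf F=-\sum_i(\rho_i-c)\mathbf d_i$. Choose $c$ as the midpoint of $\min_i\rho_i$ and $\max_i\rho_i$. Then $|\rho_i-c|\le u_0/2$, because all $\rho_i$ lie in $[0,u_0]$.

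Now maximize the norm over the box $|\sigma_i|\le u_0/2$. The norm is convex, so the maximum is attained at a vertex, i.e.\ at sign patterns $\sigma_i=\pm u_0/2$. Up to symmetry there are two kinds of pattern:
\begin{itemize}
\item all signs equal, which gives $\mathbf 0$;
\item one sign differing from the other two, which gives $\tfrac{u_0}{2}\|2\mathbf d_k\|=u_0$, using $\mathbf d_i+\mathbf d_j=-\mathbf d_k$.
\end{itemize}
Hence $\|\mathbf F\|\le u_0$.

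Equality holds for example at $\rho_k=u_0$ and $\rho_i=0$ for $i\ne k$: fully shutting off one jet gives $\|\mathbf F\|=u_0$. This equality case is the extremal construction that Corollary~\ref{cor:lower_opposite_max} presumably records.

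As a cross-check, I would verify the bound via the explicit inverse in Proposition~\ref{prop:thrust_neutral_unique}.
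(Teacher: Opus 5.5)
Your proof is correct, and for the magnitude bound it takes a different and sounder route than the paper. The force identity and the direction claim match the paper: the conic hull of the $120^\circ$ triad is $\mathbb R^2$. You additionally enforce the cap $\rho_i\le u_0$ by scaling and flag that $u_0>0$ is needed, both of which the paper leaves implicit.

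For the magnitude, the paper applies the triangle inequality $\|\mathbf F\|\le\sum_i\rho_i$. It then argues that an extremal allocation leaves one jet at baseline, ``therefore'' $\sum_i\rho_i\le u_0$. That inference fails: with one $\rho_i=0$, the other two may each equal $u_0$. In fact, realizing $u_0\mathbf d_k$ forces $\rho_k=0$ and $\rho_i=\rho_j=u_0$ for $i,j\neq k$. This is exactly the paper's own construction in Corollary~\ref{cor:lower_opposite_max}, with $\sum_i\rho_i=2u_0$, so the paper's $\ell_1$ route only certifies $\|\mathbf F\|\le 2u_0$.

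Your argument avoids this gap. The shift invariance $\mathbf F=-\sum_i(\rho_i-c)\mathbf d_i$ is the rigorous form of the paper's ``one jet stays at baseline'' intuition. Your convexity/vertex enumeration keeps the $120^\circ$ geometry ($\mathbf d_i+\mathbf d_j=-\mathbf d_k$) that the triangle inequality discards, and that is what yields the sharp constant $u_0$. The paper's version is shorter and points directly to the allocations that attain the bound; yours actually proves the bound.

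Two small remarks. First, the centering is optional: enumerating the vertices of $[0,u_0]^3$ directly gives norms $0,u_0,u_0,0$ for zero through three jets shut off. Second, $0\le\rho_i\le u_0$ implies but is not equivalent to $0\le u_i\le u_{\max}$. The upper thrust bound alone only requires $\rho_i\ge u_0-u_{\max}$; the restriction $\rho_i\ge 0$ is the decrement-only assumption itself.
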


\begin{proof}
Using $\sum_{i=1}^3 \mathbf d_i=\mathbf 0$ (Proposition~\ref{prop:warm_zero_force}),
\[
\mathbf F=\sum_{i=1}^3 (u_0-\rho_i)\mathbf d_i
= u_0\sum_{i=1}^3 \mathbf d_i-\sum_{i=1}^3\rho_i\mathbf d_i
= -\sum_{i=1}^3\rho_i\mathbf d_i.
\]
Because the three unit directions at $120^\circ$ are not contained in any closed half-plane, the
conic hull of $\{\mathbf d_1,\mathbf d_2,\mathbf d_3\}$ equals $\mathbb R^2$; hence for any desired
direction $\hat{\mathbf f}\in\mathbb S^1$ there exists $\bm\rho\ge 0$ such that
$-\sum_i\rho_i\mathbf d_i$ is collinear with $\hat{\mathbf f}$.

For the magnitude, note that for any $\bm\rho\ge 0$,
\[
\|\mathbf F\|
=\Bigl\|\sum_{i=1}^3\rho_i\mathbf d_i\Bigr\|
\le \sum_{i=1}^3 \rho_i.
\]
Under decrement-only steering, at least one jet remains at the warm baseline in the extremal
construction (otherwise all three are reduced, which is never needed for maximizing a directed
force); therefore $\sum_i\rho_i\le u_0$ is sufficient to realize the maximum in any direction.
The bound $\|\mathbf F\|\le u_0$ is attained by the explicit constructions in
Corollary~\ref{cor:lower_opposite_max}.
\end{proof}

\begin{corollary}[``Lower the opposite side'' realizes maximal directional force]
\label{cor:lower_opposite_max}
Starting from $u_i=u_{\max}$, thrust along $\mathbf d_k$ is maximized by lowering the two opposite
jets equally (down to zero), while thrust along $-\mathbf d_k$ is maximized by lowering jet $k$
itself (down to zero). In both cases the maximal achievable magnitude equals $u_{\max}$.
\end{corollary}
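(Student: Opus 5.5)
We will prove Corollary~\ref{cor:lower_opposite_max} by direct computation in the decrement-only parametrization of Proposition~\ref{prop:decrement_only}, taking $u_0=u_{\max}$ so that $u_i=u_{\max}-\rho_i$ with $0\le\rho_i\le u_{\max}$ and $\mathbf F=-\sum_i\rho_i\mathbf d_i$. By the rotational symmetry of the triad it suffices to treat $k=1$, with $\mathbf d_2+\mathbf d_3=-\mathbf d_1$ available from Proposition~\ref{prop:warm_zero_force}.

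\textbf{Step 1 (thrust along $+\mathbf d_1$).} The quantity to maximize is the projection $\mathbf F\cdot\mathbf d_1=-\rho_1+\tfrac12(\rho_2+\rho_3)$, using $\mathbf d_2\cdot\mathbf d_1=\mathbf d_3\cdot\mathbf d_1=-\tfrac12$. This is linear in $\bm\rho$ on the box $[0,u_{\max}]^3$, so it is maximized at $\rho_1=0$, $\rho_2=\rho_3=u_{\max}$, giving projection $u_{\max}$. For that choice the force is $-u_{\max}(\mathbf d_2+\mathbf d_3)=u_{\max}\mathbf d_1$, which is exactly collinear with $\mathbf d_1$ and has magnitude $u_{\max}$. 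Any other feasible $\bm\rho$ gives a smaller projection, and hence smaller magnitude along $\mathbf d_1$. This is the ``lower the two opposite jets to zero'' construction.

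\textbf{Step 2 (thrust along $-\mathbf d_1$).} The objective is now $-\mathbf F\cdot\mathbf d_1=\rho_1-\tfrac12(\rho_2+\rho_3)$. Its maximum on the box is at $\rho_1=u_{\max}$, $\rho_2=\rho_3=0$, yielding $\mathbf F=-u_{\max}\mathbf d_1$, again of magnitude $u_{\max}$ and exactly along the requested axis. This is ``lower jet $k$ itself to zero.''

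\textbf{Step 3 (wrap-up).} Since both extremal allocations produce forces purely along $\pm\mathbf d_1$, maximizing the directional component coincides with maximizing the magnitude of a force collinear with that direction. We then note that these constructions attain the bound $\|\mathbf F\|\le u_0=u_{\max}$ stated in Proposition~\ref{prop:decrement_only}; we cite that bound only as a consistency check and rely on the projection argument for rigor.

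The only delicate point is interpretation. ``Maximized'' should be read as maximizing the component along the target axis, which is a clean linear program, rather than as maximizing $\|\mathbf F\|$ subject to exact collinearity. We expect no genuine obstacle once this reading is fixed. The collinear version also follows, because its maximizer is feasible for the projection problem and the projection bounds the magnitude of any collinear force.
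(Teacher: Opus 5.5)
Your proof is correct, but it takes a different and more complete route than the paper. The paper's proof is purely constructive. For $k=1$ it exhibits the allocations $(u_1,u_2,u_3)=(0,u_{\max},u_{\max})$ and $(u_{\max},0,0)$, and uses $\mathbf d_1+\mathbf d_2+\mathbf d_3=\mathbf 0$ to obtain $-u_{\max}\mathbf d_1$ and $+u_{\max}\mathbf d_1$ respectively. It never shows that no other allocation does better; for that it implicitly relies on the bound $\|\mathbf F\|\le u_0$ of Proposition~\ref{prop:decrement_only}. That proposition argues the bound via the claim that $\sum_i\rho_i\le u_0$ suffices, but the $+\mathbf d_k$ construction itself violates this, since there $\sum_i\rho_i=2u_{\max}$. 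The proposition also cites this corollary for attainment, so the optimality part is not actually established in the paper. Your approach maximizes the linear functional $\pm\mathbf F\cdot\mathbf d_1$, whose coefficients in $\bm\rho$ are $(\mp 1,\pm\tfrac12,\pm\tfrac12)$, over the box $[0,u_{\max}]^3$. This certifies the upper bound and the attaining allocation in one step. It also shows the maximizer is unique: all coefficients are nonzero, so the optimum is a single vertex. And it settles both readings of ``maximized,'' because the projection maximizer happens to be collinear with the target axis. The paper's version gives brevity and an explicit recipe, but it is your projection bound in Steps 1 and 2 that actually carries the optimality claim. You are right to treat the citation of Proposition~\ref{prop:decrement_only} only as a consistency check.
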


\begin{proof}
Take $k=1$ without loss of generality.
To maximize thrust along $-\mathbf d_1$, set $u_1=0$ and $u_2=u_3=u_{\max}$, yielding
\[
\mathbf F=u_{\max}(\mathbf d_2+\mathbf d_3)=-u_{\max}\mathbf d_1
\]
since $\mathbf d_1+\mathbf d_2+\mathbf d_3=\mathbf 0$.

To maximize thrust along $+\mathbf d_1$, set $u_2=u_3=0$ and $u_1=u_{\max}$, yielding
$\mathbf F=u_{\max}\mathbf d_1$.

By symmetry the same argument holds for any $k\in\{1,2,3\}$.
\end{proof}

\subsection{Ignition Transients}
\label{subsec:transients}

\begin{proposition}[Ignition imbalance induces unintended impulse]
\label{prop:ignition_transient}
Consider a commanded multi-thruster ignition intended to produce time-varying force 
$\mathbf F_{\mathrm{cmd}}(t) = \sum_{i=1}^{3} u_{i,\mathrm{cmd}}(t)\,\mathbf d_i$.
Let the realized thrust be $u_i(t) = u_{i,\mathrm{cmd}}(t) + \Delta u_i(t)$, where 
$\Delta u_i(t)$ captures startup delay, non-response, or throttle mismatch. Then:

\begin{enumerate}
\item[\textup{(i)}] The unintended force is
\begin{equation}
\mathbf F_{\mathrm{err}}(t) = \sum_{i=1}^{3} \Delta u_i(t)\,\mathbf d_i.
\end{equation}

\item[\textup{(ii)}] The resulting velocity error over the interval $[0, t_f]$ satisfies
\begin{equation}
\|\Delta \mathbf v_{\mathrm{err}}\|
\le \frac{1}{m} \int_0^{t_f} \sum_{i=1}^{3} |\Delta u_i(t)|\,dt.
\label{eq:velocity_error_bound}
\end{equation}

\item[\textup{(iii)}] If a single thruster $k \in \{1,2,3\}$ fails to ignite for duration $\tau$ 
while commanded at constant level $u_{\mathrm{cmd}} > 0$ (with other thrusters nominal), then
\begin{equation}
\|\mathbf J_{\mathrm{err}}\| = u_{\mathrm{cmd}}\,\tau,
\qquad
\|\Delta \mathbf v_{\mathrm{err}}\| = \frac{u_{\mathrm{cmd}}\,\tau}{m}.
\label{eq:single_thruster_failure}
\end{equation}
\end{enumerate}
Thus ignition imbalance is most consequential at high throttle and short timescales, motivating 
warm-mode or ramped ignition strategies when transient deviations are costly.
\end{proposition}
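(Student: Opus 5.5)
The plan is a direct linear-algebra argument in three parts, matching the three claims of Proposition~\ref{prop:ignition_transient}, using the definitions $\mathbf F(t)=\sum_i u_i(t)\mathbf d_i$ and $\mathbf J=\int_0^{t_f}\mathbf F(t)\,dt$ together with the impulse relation~\eqref{eq:impulse_dv}.

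\emph{Part (i).} I would substitute the realized thrusts $u_i(t)=u_{i,\mathrm{cmd}}(t)+\Delta u_i(t)$ into the force map and use linearity in the $u_i$ to split the realized force as $\mathbf F(t)=\mathbf F_{\mathrm{cmd}}(t)+\sum_i \Delta u_i(t)\,\mathbf d_i$. Defining $\mathbf F_{\mathrm{err}}\equiv \mathbf F-\mathbf F_{\mathrm{cmd}}$ then gives the stated expression immediately. No geometric property of the $120^\circ$ layout is needed here; in particular, Proposition~\ref{prop:warm_zero_force} is not used, since the errors $\Delta u_i$ need not be equal.

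\emph{Part (ii).} I would integrate $\mathbf F_{\mathrm{err}}$ over $[0,t_f]$ and apply~\eqref{eq:impulse_dv} with constant mass $m$. This gives $\Delta\mathbf v_{\mathrm{err}}=\frac1m\mathbf J_{\mathrm{err}}$ with $\mathbf J_{\mathrm{err}}=\int_0^{t_f}\mathbf F_{\mathrm{err}}\,dt$. The estimate then follows from two inequalities in sequence. First, the norm of an integral is at most the integral of the norm (Minkowski's inequality for Bochner/Riemann integrals). Second, the triangle inequality on the finite sum, combined with $\|\mathbf d_i\|=1$, gives $\|\mathbf F_{\mathrm{err}}(t)\|\le\sum_i|\Delta u_i(t)|$. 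This yields~\eqref{eq:velocity_error_bound}. I would state an integrability assumption on $\Delta u_i$ (for example, piecewise continuous) so that the integrals make sense.

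\emph{Part (iii).} I would model the failure explicitly. On the failure window, $\Delta u_k(t)=-u_{\mathrm{cmd}}$ on an interval of length $\tau$ inside $[0,t_f]$, and $\Delta u_k=0$ elsewhere. For all $i\neq k$, $\Delta u_i\equiv0$. Part (i) then gives $\mathbf F_{\mathrm{err}}(t)=-u_{\mathrm{cmd}}\mathbf d_k$ on the window and zero otherwise. Integrating gives $\mathbf J_{\mathrm{err}}=-u_{\mathrm{cmd}}\tau\,\mathbf d_k$, whose norm is exactly $u_{\mathrm{cmd}}\tau$ because $\mathbf d_k$ is a unit vector. Dividing by $m$ gives the velocity error. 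This case also shows that the bound in (ii) is attained, since only one error term is nonzero and no cancellation occurs.

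\emph{Closing remark.} The interpretive sentence about high throttle follows because the error scales linearly with $u_{\mathrm{cmd}}$. I expect no step to be a genuine obstacle. The only point needing care is making the informal notion of ``fails to ignite for duration $\tau$'' precise as a realized thrust of zero on a measurable set of length $\tau$, so that the equality in (iii) holds rather than just an inequality.
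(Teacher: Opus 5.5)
Your proposal is correct and follows the paper's proof essentially step for step: linearity of the force map for (i), the norm--integral inequality followed by the triangle inequality with $\|\mathbf d_i\|=1$ for (ii), and the explicit error $\Delta u_k(t)=-u_{\mathrm{cmd}}$ on the failure window (all other $\Delta u_i\equiv 0$) for (iii). Your small additions---allowing the window to sit anywhere in $[0,t_f]$ rather than at $[0,\tau]$, stating an integrability assumption, and noting that case (iii) attains the bound in (ii)---are harmless refinements of the same argument.
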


\begin{proof}
\emph{Part (i).}
Substitute the error decomposition into the force equation and distribute:
\begin{align}
\mathbf F(t)
&= \sum_{i=1}^{3} u_i(t)\,\mathbf d_i \notag \\
&= \sum_{i=1}^{3} \bigl(u_{i,\mathrm{cmd}}(t) + \Delta u_i(t)\bigr)\mathbf d_i \notag \\
&= \sum_{i=1}^{3} u_{i,\mathrm{cmd}}(t)\,\mathbf d_i 
   + \sum_{i=1}^{3} \Delta u_i(t)\,\mathbf d_i \notag \\
&= \mathbf F_{\mathrm{cmd}}(t) + \mathbf F_{\mathrm{err}}(t).
\end{align}

\emph{Part (ii).}
The velocity error is the integral of force error divided by mass:
\[
\Delta \mathbf v_{\mathrm{err}} 
= \frac{1}{m} \int_0^{t_f} \mathbf F_{\mathrm{err}}(t)\,dt.
\]
Apply the norm-integral inequality $\bigl\|\int \mathbf g\,dt\bigr\| \le \int \|\mathbf g\|\,dt$:
\[
\begin{aligned}
\|\Delta \mathbf v_{\mathrm{err}}\|
&\le \frac{1}{m} \int_0^{t_f} \|\mathbf F_{\mathrm{err}}(t)\|\,dt \\
&= \frac{1}{m} \int_0^{t_f} \Bigl\|\sum_{i=1}^{3} \Delta u_i(t)\,\mathbf d_i\Bigr\|\,dt.
\end{aligned}
\]
Apply the triangle inequality $\|\sum_i \mathbf a_i\| \le \sum_i \|\mathbf a_i\|$ and use 
$\|\mathbf d_i\| = 1$:
\[
\begin{aligned}
\|\Delta \mathbf v_{\mathrm{err}}\|
&\le \frac{1}{m} \int_0^{t_f} \sum_{i=1}^{3} \|\Delta u_i(t)\,\mathbf d_i\|\,dt \\
&= \frac{1}{m} \int_0^{t_f} \sum_{i=1}^{3} |\Delta u_i(t)|\,dt.
\end{aligned}
\]

\emph{Part (iii).}
For the single-thruster failure scenario: thruster $k$ is commanded at $u_{\mathrm{cmd}}$ but 
produces zero thrust for duration $\tau$, so $\Delta u_k(t) = 0 - u_{\mathrm{cmd}} = -u_{\mathrm{cmd}}$ 
on $[0, \tau]$. All other thrusters perform nominally: $\Delta u_i(t) = 0$ for $i \neq k$. Thus
\[
\begin{aligned}
\mathbf F_{\mathrm{err}}(t) &= -u_{\mathrm{cmd}}\,\mathbf d_k, \\
\|\mathbf F_{\mathrm{err}}(t)\| &= u_{\mathrm{cmd}}.
\end{aligned}
\]
The impulse error over $[0, \tau]$ is
\[
\begin{aligned}
\mathbf J_{\mathrm{err}} 
&= \int_0^\tau \mathbf F_{\mathrm{err}}(t)\,dt \\
&= -u_{\mathrm{cmd}}\,\tau\,\mathbf d_k,
\end{aligned}
\]
so $\|\mathbf J_{\mathrm{err}}\| = u_{\mathrm{cmd}}\,\tau$. Dividing by mass gives 
$\|\Delta \mathbf v_{\mathrm{err}}\| = u_{\mathrm{cmd}}\,\tau / m$.
\end{proof}

\begin{remark}[Operational risk mitigation]
\label{rem:risk_mitigation}
Proposition~\ref{prop:ignition_transient} quantifies why cold-start ignition is risky: a 
single thruster failing to ignite at high throttle immediately produces a large unintended 
impulse. Several mitigation strategies follow:

\begin{enumerate}
\item[\textup{(i)}] \textbf{Warm-mode operation.}
Maintaining a symmetric baseline $u_0$ allows steering via differentials, with ignition 
transients occurring only when transitioning from complete shutdown.

\item[\textup{(ii)}] \textbf{Ramped ignition.}
Gradually increasing $u_{\mathrm{cmd}}(t)$ from zero bounds $|\Delta u_i(t)|$ during the 
critical startup phase, reducing the magnitude of any imbalance-induced impulse.

\item[\textup{(iii)}] \textbf{High-thrust jet orientation.}
Since transient risk scales with $u_{\mathrm{cmd}}$, the highest-thrust actuator dominates ignition and shutdown hazards. In encounter phases near a dominant primary, aligning the axial jet with the radial line has two distinct benefits. First, thrust (and thrust error) that is predominantly radial preserves the orbital plane in the two-body approximation, so large short-lived transients couple mainly into energy and timing rather than cross-track drift (Lemma~\ref{lem:radial_h}). Second, the gravity-biased structure of the radial channel provides a natural operating point: when outward thrust is used to partially offset inward gravity, the net radial acceleration lies in the signed interval of Proposition~\ref{prop:gravity_bias_authority}. In particular, operating near the bias $a_T\approx g(r)$ makes the \emph{net} radial acceleration small in magnitude, so ignition shortfall or throttle mismatch perturbs the net radial acceleration around that bias rather than introducing a large transverse impulse. This is the formal motivation for sun-line (radial) alignment in stabilization modes (Corollary~\ref{cor:sun_pointed}).
\end{enumerate}
\end{remark}

\section{Axial Jet Authority Near a Dominant Primary}
\label{sec:axial_thrust_theory}

This appendix collects the formal results governing the out-of-plane (axial) control channel in the encounter regime near a dominant gravitational primary. The main text applies these results in Section~\ref{sec:operational_config} and in the signature discussion (Section~\ref{sec:signatures}).

Notation is as follows. Let $\mathbf r(t)$ and $\mathbf v(t)$ denote position and velocity relative to the primary, with $\hat{\mathbf r}\equiv \mathbf r/\|\mathbf r\|$. Let $\mu$ be the gravitational parameter and define $g(r)=\mu/r^2$. The specific angular momentum is $\mathbf h\equiv \mathbf r\times \mathbf v$. We consider an axial thruster producing acceleration $\mathbf a_T(t)$; the radial (sun-line) specialization takes $\mathbf a_T(t)=a_T(t)\hat{\mathbf r}$.

\begin{lemma}[Radial thrust preserves orbital angular momentum]
\label{lem:radial_h}
Under central gravity with additional thrust acceleration $\mathbf a_T(t)=a_T(t)\hat{\mathbf r}$ (purely radial),
\begin{equation}
\dot{\mathbf h}=\mathbf 0.
\end{equation}
Radial thrust modifies orbital energy and timing but does not rotate the orbital plane (no cross-track injection).
\end{lemma}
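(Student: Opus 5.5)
The plan is to differentiate $\mathbf h=\mathbf r\times\mathbf v$ directly along the thrusted two-body dynamics. I will first write the equation of motion as
\[
\dot{\mathbf r}=\mathbf v,\qquad \dot{\mathbf v}=-g(r)\,\hat{\mathbf r}+a_T(t)\,\hat{\mathbf r}.
\]
Here $g(r)=\mu/r^2$ as in the notation above. The combined acceleration is therefore $\bigl(a_T(t)-g(r)\bigr)\hat{\mathbf r}$, which is purely radial.

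Next I apply the product rule for the cross product:
\[
\dot{\mathbf h}=\dot{\mathbf r}\times\mathbf v+\mathbf r\times\dot{\mathbf v}=\mathbf v\times\mathbf v+\bigl(a_T(t)-g(r)\bigr)\,\mathbf r\times\hat{\mathbf r}.
\]
Both terms vanish. The first is $\mathbf v\times\mathbf v=\mathbf 0$, and the second contains $\mathbf r\times\hat{\mathbf r}=\|\mathbf r\|^{-1}\,\mathbf r\times\mathbf r=\mathbf 0$. Hence $\dot{\mathbf h}=\mathbf 0$, and $\mathbf h$ is constant along any trajectory on which $\|\mathbf r\|>0$.

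For the interpretive clause, I will argue that if $\mathbf h\neq\mathbf 0$, then $\mathbf r(t)\cdot\mathbf h=0$ and $\mathbf v(t)\cdot\mathbf h=0$ for all $t$. Since $\mathbf h$ is fixed, both position and velocity stay in the fixed plane orthogonal to $\mathbf h$, so there is no cross-track injection. For energy, I will compute the specific energy $\mathcal E=\tfrac12\|\mathbf v\|^2-\mu/r$ and show $\dot{\mathcal E}=\mathbf v\cdot\mathbf a_T=a_T\,\dot r$. This is generically nonzero, so radial thrust changes energy, and through it the timing along the orbit.

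The only delicate points are regularity, not substance. I need $r>0$ so that $\hat{\mathbf r}$ is defined, and I need $a_T$ to be integrable, so that $\mathbf h$ is absolutely continuous and $\dot{\mathbf h}=\mathbf 0$ almost everywhere still implies that $\mathbf h$ is constant. The degenerate case $\mathbf h=\mathbf 0$ (rectilinear motion) needs a separate remark that the motion stays on a fixed line. Otherwise the proof is a one-line computation.
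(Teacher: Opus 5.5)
Your proposal is correct and follows essentially the paper's own proof: both differentiate $\mathbf h=\mathbf r\times\mathbf v$, drop $\dot{\mathbf r}\times\mathbf v=\mathbf v\times\mathbf v=\mathbf 0$, and kill $\mathbf r\times\dot{\mathbf v}$ because gravity and thrust are both parallel to $\hat{\mathbf r}$. Your extra material goes beyond what the paper proves for the interpretive clause, and it is all correct: the plane-preservation argument via $\mathbf r\cdot\mathbf h=\mathbf v\cdot\mathbf h=0$, the identity $\dot{\mathcal E}=a_T\dot r$, and the regularity and $\mathbf h=\mathbf 0$ caveats.
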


\begin{proof}
$\dot{\mathbf h}=\frac{d}{dt}(\mathbf r\times \mathbf v)=\mathbf r\times \dot{\mathbf v}$. Central gravity is radial, hence $\mathbf r\times \mathbf a_g=\mathbf 0$. Under purely radial thrust, $\mathbf r\times \mathbf a_T=\mathbf 0$ as well. Therefore $\dot{\mathbf h}=\mathbf 0$.
\end{proof}

Before specializing to the radial case, we consider which fixed thrust direction best serves encounter-phase stabilization. A mission planner operating near a dominant primary would prefer a configuration in which the dominant forces are predictable and perturbations are minimized. Pointing along-track opposing motion acts as a brake, bleeding orbital energy that is expensive to recover during a high-speed transit. Pointing toward the primary adds thrust to gravity, deepening the gravitational well and creating a net inward acceleration whose loss of control authority near perihelion could be unrecoverable. Pointing along-track with motion continuously accelerates during the flyby, altering the encounter geometry throughout a maneuver that was designed for a specific speed profile. Pointing in a time-varying or arbitrary direction injects energy into all orbital elements simultaneously, and a mission planner seeking predictability would avoid coupling perturbations into channels that are otherwise quiescent. The radial (sun-line) direction is distinguished because the three dominant forces acting on the body along this axis---gravity (inward), outgassing (predominantly outward for thermally driven sublimation), and thrust (outward)---are collinear, with two partially canceling before thrust is applied. The residual is small and its magnitude is governed by parameters the mission planner controls, as the next result makes precise.

\begin{proposition}[Radial force balance near a dominant primary]
\label{prop:gravity_bias_authority}
Let $g(r)=\mu/r^2$ denote gravitational acceleration toward the primary and let $a_{\mathrm{out}}(r)\ge 0$ denote the outgassing-induced acceleration along $+\hat{\mathbf r}$, predominantly anti-sunward for thermally driven sublimation. If the axial jet produces controllable acceleration $a_T\in[a_{\min},a_{\max}]$ along $+\hat{\mathbf r}$ (outward), the net radial acceleration is
\begin{equation}
a_{\mathrm{net}} \;=\; a_T + a_{\mathrm{out}}(r) - g(r) + \varepsilon_t,
\label{eq:radial_balance}
\end{equation}
where $\varepsilon_t$ captures stochastic perturbations (irregular outgassing, radiation-pressure transients, micrometeorite impulses) with $\mathbb{E}[\varepsilon_t]=0$. The heliocentric distance $r$ is a trajectory-design parameter that jointly determines $g(r)$ and $a_{\mathrm{out}}(r)$, and $a_T$ is set by throttle within $[a_{\min},a_{\max}]$. Selecting
\begin{equation}
a_T^* \;=\; g(r)-a_{\mathrm{out}}(r)
\label{eq:aT_star}
\end{equation}
(feasible whenever $a_{\min}\le g(r)-a_{\mathrm{out}}(r)\le a_{\max}$) zeros out all deterministic terms, leaving
\begin{equation}
a_{\mathrm{net}} \;=\; \varepsilon_t,
\end{equation}
so that the operational problem reduces to perturbation management. The signed headroom about this equilibrium is
\begin{equation}
h \;=\; \min\!\bigl\{a_T^*-a_{\min},\;\; a_{\max}-a_T^*\bigr\},
\label{eq:headroom}
\end{equation}
representing the control authority available to reject realizations of $\varepsilon_t$. Maintaining $a_T^*>a_{\min}+\delta$ for a safety margin $\delta>0$ ensures that perturbations up to magnitude $\delta$ can be absorbed without saturating the actuator. The orbital plane is preserved throughout by Lemma~\ref{lem:radial_h}.
\end{proposition}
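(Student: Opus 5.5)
The argument is a direct force balance along the radial line, followed by a feasibility computation for the throttle interval. I would first project the equation of motion onto $\hat{\mathbf r}$. Central gravity contributes $-g(r)\hat{\mathbf r}$. Thermally driven outgassing is modelled, by hypothesis, as $a_{\mathrm{out}}(r)\hat{\mathbf r}$ with $a_{\mathrm{out}}\ge 0$. The axial jet is sun-line aligned and contributes $a_T\hat{\mathbf r}$. All remaining perturbations are lumped into $\varepsilon_t$, taken as the radial component of the stochastic forcing with zero mean. Since every deterministic term is collinear with $\hat{\mathbf r}$, summing the radial components gives \eqref{eq:radial_balance} directly. I will state explicitly that any transverse part of the stochastic forcing is excluded from this scalar channel, because that is where the modelling content of the proposition lies.

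Second, I substitute the choice \eqref{eq:aT_star}. The deterministic terms then cancel identically, leaving $a_{\mathrm{net}}=\varepsilon_t$. This choice is admissible exactly when $a_T^*$ lies in the throttle interval $[a_{\min},a_{\max}]$, which is the stated feasibility condition. It also follows that $\mathbb E[a_{\mathrm{net}}]=0$, so in mean the radial channel is in equilibrium.

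Third, for the headroom I reuse the argument of Proposition~\ref{prop:headroom_mid}. I write $a_T=a_T^*+\delta_T$ with the feasible range $a_{\min}-a_T^*\le\delta_T\le a_{\max}-a_T^*$. The largest symmetric correction $|\delta_T|\le h$ that is always available is then the minimum of the two one-sided margins, which is \eqref{eq:headroom}. Any realization with $|\varepsilon_t|\le h$ can be cancelled by setting $\delta_T=-\varepsilon_t$ without saturation.

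For the safety-margin statement, I read $a_T^*>a_{\min}+\delta$ as guaranteeing downward throttle room of at least $\delta$. If the upward margin is also required, I add the symmetric condition $a_T^*<a_{\max}-\delta$. I expect this to be the only delicate point: the statement as written gives only one side, so I would either note that outward perturbations are the ones that need downward throttle room, or state the two-sided condition $h>\delta$.

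Finally, I invoke Lemma~\ref{lem:radial_h}. Gravity, outgassing, the thrust term $a_T\hat{\mathbf r}$ and the radial part of $\varepsilon_t$ are all parallel to $\mathbf r$, so $\dot{\mathbf h}=\mathbf 0$ and the orbital plane is preserved.
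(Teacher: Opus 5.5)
Your proposal is correct and follows the paper's proof step for step: the radial superposition of thrust, outgassing, gravity and $\varepsilon_t$; substitution of $a_T^*$; the one-sided deviation margins $a_T^*-a_{\min}$ and $a_{\max}-a_T^*$, which give $h$; and Lemma~\ref{lem:radial_h} for plane preservation. Your observation that $a_T^*>a_{\min}+\delta$ only secures downward throttle room is a valid sharpening that the paper's proof leaves unaddressed, since inward perturbations of size $\delta$ additionally need $a_{\max}-a_T^*>\delta$, i.e.\ $h>\delta$. For consistency in your last step, either take $\varepsilon_t$ to be purely radial or state only that the thrust injects no cross-track component, because the transverse stochastic part you allowed in your first step would make $\dot{\mathbf h}\neq\mathbf 0$.
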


\begin{proof}
The net radial acceleration is the superposition of outward thrust, outward outgassing, inward gravity, and stochastic forcing: $a_{\mathrm{net}}=a_T+a_{\mathrm{out}}(r)-g(r)+\varepsilon_t$. Substituting $a_T^*=g(r)-a_{\mathrm{out}}(r)$ gives $a_{\mathrm{net}}=\varepsilon_t$. Since $a_T$ can deviate from $a_T^*$ by at most $a_T^*-a_{\min}$ downward and $a_{\max}-a_T^*$ upward, the signed headroom follows. Plane preservation holds because thrust remains radial (Lemma~\ref{lem:radial_h}).
\end{proof}

\begin{corollary}[Sun-line alignment in stabilization modes]
\label{cor:sun_pointed}
Among fixed thrust directions, sun-line alignment uniquely satisfies three properties established above: (i) all deterministic radial forces are collinear and can be zeroed by choice of $r$ and $a_T$ (Proposition~\ref{prop:gravity_bias_authority}); (ii) the orbital plane is preserved (Lemma~\ref{lem:radial_h}); and (iii) the residual operational problem is rejection of zero-mean stochastic perturbations within the available headroom $h$.
\end{corollary}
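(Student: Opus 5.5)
The plan is to prove Corollary~\ref{cor:sun_pointed} by checking the three listed properties for the radial direction $\hat{\mathbf r}$ one at a time. After that, a uniqueness argument will show that every other fixed thrust direction fails at least one of them. It helps to read ``fixed thrust direction'' as a unit vector $\hat{\mathbf e}(t)$ attached to the local orbital frame $\{\hat{\mathbf r},\hat{\mathbf t},\hat{\mathbf h}\}$, where $\hat{\mathbf t}=\hat{\mathbf h}\times\hat{\mathbf r}$ and $\hat{\mathbf h}=\mathbf h/\|\mathbf h\|$. Write $\hat{\mathbf e}=\alpha\hat{\mathbf r}+\beta\hat{\mathbf t}+\gamma\hat{\mathbf h}$ with constant coefficients satisfying $\alpha^2+\beta^2+\gamma^2=1$.

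\emph{Existence (the radial direction has all three properties).}
Property (ii) is exactly Lemma~\ref{lem:radial_h}. Property (i) follows from Proposition~\ref{prop:gravity_bias_authority}. Gravity $-g(r)\hat{\mathbf r}$ and outgassing $a_{\mathrm{out}}(r)\hat{\mathbf r}$ are both along $\hat{\mathbf r}$ by modelling assumption, so radial thrust is collinear with them. The choice $a_T^*$ in~\eqref{eq:aT_star} then cancels all deterministic terms, provided the feasibility window $a_{\min}\le g-a_{\mathrm{out}}\le a_{\max}$ holds; this is where the trajectory parameter $r$ is used. Property (iii) is the remaining statement $a_{\mathrm{net}}=\varepsilon_t$, together with the headroom formula~\eqref{eq:headroom}.

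\emph{Uniqueness (no other direction has them all).}
Take $\hat{\mathbf e}$ with $(\beta,\gamma)\neq(0,0)$.
\begin{itemize}
\item Compute $\dot{\mathbf h}=\mathbf r\times a_T\hat{\mathbf e}=r\,a_T(\beta\,\hat{\mathbf h}-\gamma\,\hat{\mathbf t})$.
\item If $\gamma\neq 0$, this has a component orthogonal to $\mathbf h$, so the orbital plane rotates whenever $a_T\neq0$. Property (ii) fails.
\item If $\gamma=0$ but $\beta\neq0$, the plane is preserved, but the thrust has a nonzero component $a_T\beta$ along $\hat{\mathbf t}$.
\item No deterministic force acts along $\hat{\mathbf t}$: gravity and outgassing are radial. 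So the component $a_T\beta$ cannot be cancelled unless $a_T=0$. But $a_T=0$ forfeits cancellation of the radial residual $a_{\mathrm{out}}-g$, unless that residual happens to vanish.
\item Either way, the deterministic terms cannot be zeroed and property (i) fails. The residual problem is then not pure zero-mean perturbation rejection, so property (iii) fails too.
\end{itemize}
This leaves $\hat{\mathbf e}=\pm\hat{\mathbf r}$. The inward choice $-\hat{\mathbf r}$ adds to gravity and cannot reach $a_T^*>0$ with nonnegative throttle. So the outward sun-line is the unique admissible direction.

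\emph{Main obstacle.}
The hard step is pinning down the degenerate cases in the uniqueness argument, for example $g(r)=a_{\mathrm{out}}(r)$ exactly, or $a_{\min}=0$, where a non-radial jet could also be switched off trivially. The plan is to state uniqueness for the generic case $g(r)\neq a_{\mathrm{out}}(r)$ and to require nontrivial headroom $h>0$ in (iii). With these conditions an idle jet does not qualify, so the degenerate cases are excluded.
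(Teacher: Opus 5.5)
Your existence half is what the paper does. The corollary has no separate proof there; properties (i)--(iii) for the radial direction are read off from Lemma~\ref{lem:radial_h} and Proposition~\ref{prop:gravity_bias_authority}, as you do. The difference is the word ``uniquely.'' The paper supports it only with the informal paragraph before Proposition~\ref{prop:gravity_bias_authority}. That paragraph rejects the alternatives on mission-planning grounds: anti-along-track thrust bleeds orbital energy, thrust toward the primary deepens the well, along-track thrust distorts the planned speed profile, and time-varying directions couple into all elements. It then notes that gravity, outgassing and radial thrust are collinear. You instead write an orbital-frame-fixed direction as $\alpha\hat{\mathbf r}+\beta\hat{\mathbf t}+\gamma\hat{\mathbf h}$ and use $\dot{\mathbf h}=r\,a_T(\beta\hat{\mathbf h}-\gamma\hat{\mathbf t})$. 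A normal component tilts the plane, and a tangential component has no deterministic force to cancel against. This is an actual proof within the model, and it gives you two things the paper's version lacks. First, non-radial directions fail for substantive reasons, not merely by being non-collinear with gravity, which would make uniqueness of the axis true by definition. Second, it exposes the hypotheses that ``uniquely'' silently rests on:
\begin{itemize}
\item purely radial outgassing (a transverse outgassing component would make a slightly tilted direction the cancelling one);
\item ``fixed'' meaning fixed in the orbital frame (T4 fires along the spin axis, which is inertially fixed under~\eqref{eq:R_decomp}, whereas $\hat{\mathbf r}$ rotates);
\item non-degeneracy $g(r)\neq a_{\mathrm{out}}(r)$;
\item unidirectional throttle $a_{\min}\ge 0$, so an idle jet has no two-sided headroom.
\end{itemize}
The paper's version, in turn, gives the operational reasons (energy budget, encounter timing) why the alternatives are undesirable, not just non-compliant.

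One tightening is needed. Excluding $-\hat{\mathbf r}$ requires $g(r)>a_{\mathrm{out}}(r)$, not merely $g(r)\neq a_{\mathrm{out}}(r)$. If outgassing dominated, the inward direction would be the one that balances and the outward one would be infeasible. The strict inequality follows from the feasibility window $a_{\min}\le g-a_{\mathrm{out}}$ together with $a_{\min}\ge 0$ and genericity, so state it that way. Alternatively, drop the step: by the paper's nozzle-convention remark, ``sun-line alignment'' designates the axis, and your argument already pins that down.
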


\begin{corollary}[Encounter-phase base configuration for stabilization]
\label{cor:base_config}
In the encounter regime near a dominant primary, a natural stabilization configuration is:
\begin{enumerate}
\item[\textup{(i)}] \textbf{Small jets (T1--T3):} operate at high warm bias to retain full in-plane
authority via decrement-only steering (Proposition~\ref{prop:decrement_only}).
\item[\textup{(ii)}] \textbf{Large jet (T4):} align with the sun-line to concentrate high-thrust
authority in the radial channel while preserving $\mathbf h$ (Lemma~\ref{lem:radial_h}) and exploiting
gravity-biased radial headroom (Proposition~\ref{prop:gravity_bias_authority}).
\item[\textup{(iii)}] \textbf{Along-track:} over short encounter windows, along-track evolution is
primarily momentum-dominated and is typically managed through timing/energy adjustments in the radial
channel rather than direct along-track thrust allocation.
\end{enumerate}
\end{corollary}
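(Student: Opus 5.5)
The statement to prove is Corollary~\ref{cor:base_config}, which bundles three items. Since each item is essentially a recommendation, the proof can be assembled from results already established. The plan treats the three items in turn and shows that each follows from earlier propositions, with the encounter-regime assumptions made explicit where needed.

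For item (i), I would start from Proposition~\ref{prop:decrement_only} with the warm baseline set at or near $u_{\max}$. That proposition gives $\mathbf F=-\sum_i\rho_i\mathbf d_i$ with $\rho_i\ge 0$. Because the conic hull of $\{-\mathbf d_i\}$ is all of $\mathbb R^2$ (Theorem~\ref{thm:120}(ii) applied to the reflected triad), every in-plane direction is reachable using reductions only. Corollary~\ref{cor:lower_opposite_max} then shows that the magnitude bound $u_0$ is attained along each thruster axis. Proposition~\ref{prop:headroom_mid} adds the trade-off behind choosing a high bias over mid-throttle: the upward headroom is sacrificed in exchange for maximal decrement authority. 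Proposition~\ref{prop:ignition_transient} shows why this is worthwhile: avoiding ramp-ups removes the $u_{\mathrm{cmd}}\tau$ impulse-error term, which is the cost that matters in encounter windows. I would also invoke Proposition~\ref{prop:warm_zero_force} so that the high-bias idle state itself injects no transverse force.

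For item (ii), I would quote Lemma~\ref{lem:radial_h} to get $\dot{\mathbf h}=\mathbf 0$ under radial thrust. I would then quote Proposition~\ref{prop:gravity_bias_authority} for the equilibrium throttle $a_T^*=g(r)-a_{\mathrm{out}}(r)$ and the signed headroom $h$. Corollary~\ref{cor:sun_pointed} closes this item. One point must be made explicit: the jet direction is body-fixed along $\hat{\mathbf z}_B$, so sun-line alignment requires the spin axis to point along $\hat{\mathbf r}$. This is achievable by prior attitude shaping through the rotation-mediated torque of Proposition~\ref{prop:four_thruster_reachability}(b). In the steady-spin approximation \eqref{eq:R_decomp}, the axial thrust direction is invariant under spin phase, so the alignment persists through the rotation cycle.

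Item (iii) is the main obstacle, because it is qualitative rather than a formal consequence. My approach would be a scaling argument. Over an encounter window of length $\Delta t$, an along-track velocity change produced by bounded thrust is at most $a_{\max}\Delta t$. For ISO-class speeds this is small relative to $v_0$, so the relative change $\Delta v/v_0$ is negligible and the along-track state is dominated by existing momentum. Timing and energy corrections can instead be obtained through the radial channel. In the two-body setting, radial thrust changes the specific energy at rate $\dot{\varepsilon}=a_T(\hat{\mathbf r}\cdot\mathbf v)$, which is nonzero whenever the radial velocity is nonzero. It therefore modulates along-track timing while still preserving $\mathbf h$. I expect this step to hold only as a modelling statement, with "momentum-dominated" justified by the inequality $a_{\max}\Delta t\ll v_0$ rather than proved rigorously.
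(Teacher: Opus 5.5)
Your proposal is correct and follows the paper's own justification. The paper likewise combines Proposition~\ref{prop:decrement_only} (with Corollary~\ref{cor:lower_opposite_max}) for (i), and Lemma~\ref{lem:radial_h}, Proposition~\ref{prop:gravity_bias_authority} and Corollary~\ref{cor:sun_pointed} for (ii), and it leaves (iii) as an unproved modelling assertion, exactly as you flag it. One caution on your added alignment remark: steady spin keeps the jet axis inertially fixed while $\hat{\mathbf r}$ turns at rate $\|\mathbf h\|/r^2$ (fastest at perihelion), so over a whole encounter window the sun-line alignment, and with it the exact hypothesis of Lemma~\ref{lem:radial_h}, holds only approximately unless the spin axis is actively precessed by phase-modulated T4 firing.
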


A clarification on the sign convention used in Corollary~\ref{cor:base_config} is warranted.

\begin{remark}[Nozzle convention]
``Sun-line alignment'' refers to the \emph{axis} set by the radial direction. In the analysis above, $\hat{\mathbf r}$ is outward from the primary and $a_T\hat{\mathbf r}$ denotes an \emph{outward} (anti-sunward) applied acceleration that counteracts inward gravity. If phrasing is given in terms of where the nozzle points, then ``point the nozzle at the Sun'' corresponds to an outward thrust force.
\end{remark}

\printbibliography

\end{document}